\documentclass[10pt]{article} %%{proc}
\usepackage{latexsym,amsmath}
\usepackage{lscape}
\usepackage{epstopdf,caption,subcaption,graphicx,xcolor}
\usepackage{tikz}
\usepackage{pgfgantt}
\usepackage{hyperref}
\usepackage[capitalise,noabbrev]{cleveref}
\usetikzlibrary{shapes,arrows,fit,calc,positioning}
\usetikzlibrary{decorations.pathreplacing}
\normalsize
\newtheorem{theorem}{Theorem}
\newtheorem{lemma}{Lemma}

\newtheorem{observation}{Observation}

\newtheorem{definition}{Definition}

\newenvironment{proof}{{\sc Proof. }}{\hfill$\Box$\vspace{0.1in}}
\def\mcH{\mathcal{H}}
\def\mcP{\mathcal{P}}
\def\mcQ{\mathcal{Q}}
\def\mcS{\mathcal{S}}
\title{Covering vertices by sequential stars}
\author{
Mengyuan~Hu\thanks{Department of Mathematics, Hangzhou Dianzi University. Hangzhou, China.
	Emails: \texttt{\{2341070105, anzhang, chenyong, 221070019\} @hdu.edu.cn}} $^{\P}$
\and
An~Zhang$^{*}$\thanks{Corresponding authors.}
	\thanks{ORCID: \texttt{0000-0002-2622-5158}}
\and
Yong~Chen$^*$\thanks{ORCID: \texttt{0000-0001-8982-7757}}
\and
Zhikai~Chen\thanks{Department of Computing Science, University of Alberta.  Edmonton, Canada.
	Emails: \texttt{\{mengyua6, zhikai2, guohui, jm29, ys29\} @ualberta.ca}}
\and
Wei~Ding\thanks{School of Computer Science and Technology, Zhejiang University of Water Resources and Electric Power. Hangzhou, China.
	Email: \texttt{dingweicumt@163.com}; ORCID: \texttt{0000-0002-4353-5385}}
\and
Guohui~Lin$^{\dagger \P}$\thanks{ORCID: \texttt{0000-0003-4283-3396}}
\and
Jiaxuan~Ma$^{* \P}$
\and
Yue~Sun$^\P$
	\thanks{Institute of Operations Research and Information Engineering, Beijing University of Technology. Beijing, China.
	Email: \texttt{sunyue@emails.bjut.edu.cn}}
}
\date{}
\begin{document}
\maketitle
%==================================================================================================
\begin{abstract}
We study the problem of covering the maximum number of vertices in a graph by a collection of vertex-disjoint stars, each with a number of satellites in a given interval $[k, \ell]$,
where $1 \le k < \ell$ and $\ell$ can be infinity.
This is referred to as sequential {\sc $[k, \ell]$-Star Packing} problem.
It is solvable in polynomial time when $k = 1$, but becomes strongly NP-hard when $k \ge 2$.
In this paper, we propose either the first or an improved approximation algorithm for the following four sequential settings:
\begin{itemize}
\parskip=0pt
\item
	a $\frac {k+1}2$-approximation algorithm when $k \ge 3$ and $\ell = \infty$, improving the previous best ratio of $\frac {(k+1)^2}{2k+1}$; %-approximation algorithm;
\item
	a $\frac 43$-approximation algorithm when $k = 2$ and $\ell = \infty$, improving the previous best ratio of $\frac 32$; %-approximation algorithm;
\item
	the first $(1 + \frac \ell{\ell+1})$-approximation algorithm when $2 = k < \ell$;
\item
	and the first $(1 + \max\left\{\frac {k-1}2, \frac {(k+1) \ell}{3 (\ell+1)}\right\})$-approximation algorithm when $3 \le k < \ell$.
\end{itemize}
Besides the main algorithmic techniques being local search coupled with amortized analysis,
we observe augmenting configurations to bridge two distant neighborhoods for a local improvement operation.
Additionally, the problem has been shown APX-hard when $k \ge 3$;
we prove its APX-hardness for the last remaining case where $k = 2$.

\paragraph{Keywords:}
Star packing; local search; augmenting configuration; approximation algorithm; inapproximability % mandatory: Please provide 1-5 keywords
\end{abstract}
%==================================================================================================

\section*{Acknowledgments}
%--------------------------------------------------------------------------------------------------
The research is supported by 2026XMHD034,
the NNSF of China (Grants 12371316, 12471301, 92567201),
the PNSF of Zhejiang China (Grant LZ25A010001),
the China Scholarship Council (Grant Nos. 202306540092, 202508330173, 202508330175),
the Nanxun Scholars Program of ZJWEU (No. RC2023010481), and
the NSERC Canada.

%\newpage
\section{Introduction}
%==================================================================================================
Given an integer $i \ge 1$, an {\em $i$-star} is a complete bipartite graph in which one part contains one vertex, called the {\em center} of the star,
and the other part contains $i$ vertices called the {\em satellites}.
An $i$-star is also called a $k^+$-star if $i \ge k$, or an $\ell^-$-star if $i \le \ell$.
The star packing problem is to find a collection of vertex-disjoint stars in a given graph $G = (V, E)$ so that the number of vertices in these stars is maximized.
One sees that, if any star can be used, then the problem is easy, i.e., solvable in polynomial time.
The more interesting case is where there are constraints on the candidate stars,
leading to many variants that have been studied in the literature for their theoretical importance and a wide range of applications such as wireless sensor networking~\cite{LCC17}.

We examine the {\em sequential} star packing problem in this paper,
here by sequential it means the candidate stars have their numbers of satellites in a given interval $[k, \ell]$, for $1 \le k \le \ell$ and $\ell$ can be a fixed integer or infinity.
The problem is denoted as {\sc $[k, \ell]$-Star Packing}.
We remark that in the literature, sequential was used to refer to the case where $k = 1$ only~\cite{HK86}.
Hell and Kirkpatrick showed both {\sc $[1, \infty)$-Star Packing} and {\sc $[1, \ell]$-Star Packing}, for any fixed integer $\ell$, to be polynomial time solvable~\cite{HK86},
and the star packing problem using any other candidate star family to be strongly NP-hard~\cite{HK84}.
It follows that {\sc $[k, \ell]$-Star Packing} is hard for any $k \ge 2$, and we study it in this paper from the approximation algorithm perspective.

When $k = \ell$, that is, only one candidate star is allowed, {\sc $[k, k]$-Star Packing} reduces to the $(k+1)$-Set Packing problem
(by constructing a $(k+1)$-set of vertices whenever these $k+1$ vertices can form into a $k$-star, in $O(|V|^{k+1})$ time),
the latter of which admits a $(\frac {k+2}3 + \epsilon)$-approximation algorithm, for any $\epsilon > 0$~\cite{FY14}.
It turns out the design and analysis techniques we develop in this paper do not work out a better approximation algorithm, and thus we assume $k < \ell$ throughout the paper,
i.e., in each variant we have at least two candidate stars.
We also observe the case where $k = 2$ to be quite different from the other case where $k \ge 3$.
Therefore, we distinguish four distinct variants, namely, {\sc $[2, \infty)$-Star Packing}, {\sc $[k, \infty)$-Star Packing} where $k \ge 3$,
{\sc $[2, \ell]$-Star Packing} where $\ell > 2$, and {\sc $[k, \ell]$-Star Packing} where $\ell > k \ge 3$.

The {\sc $[k, \infty)$-Star Packing} problem is also denoted as {\sc $k^+$-Star Packing}, and has received several studies.
It is NP-hard for any $k \ge 2$ even on bipartite graphs~\cite{LL21}.
Specially, {\sc $[2, \infty)$-Star Packing} remains NP-hard on even more restricted graph classes such as bipartite graphs of maximum degree $4$~\cite{LL21} and cubic graphs~\cite{XLL24}.
On the inapproximability, {\sc $[3, 3]$-Star Packing} on cubic graphs is observed APX-complete~\cite{XL24},
since it is equivalent to the maximum distance-$3$ independent set problem~\cite{EIL17}.
Recently, Li~\cite{Li25} proved a lower bound of $\frac {12k + 19}{12k + 18.994}$ on the approximation ratio for {\sc $[k, \infty)$-Star Packing} when $k \ge 3$, if P $\ne$ NP.

A number of approximation algorithms have been designed for {\sc $[k, \infty)$-Star Packing}.
Li and Lin~\cite{LL21} presented the first $2$-approximation algorithm for $k = 2$,
and the approximation ratio was later improved to $\frac 95$ by Huang et al.~\cite{HZG25} and to $\frac 32$ by Hu et al.~\cite{HZC25,HZC26} which is the state-of-the-art.
When $k \ge 3$, Huang et al.~\cite{HZG25} proposed a local search $(1 + \frac k2)$-approximation algorithm,
Li~\cite{Li25} proposed an almost the same algorithm independently,
and Hu et al.~\cite{HZC25,HZC26} designed an improved $(1 + \frac {k^2}{2k+1})$-approximation algorithm which is the state-of-the-art.
We note that there are approximation algorithms proposed for other variants and/or on special graph classes~\cite{LL21,XLL24,HZC25,HZC26}.

\subsection{Other related work}
%--------------------------------------------------------------------------------------------------
The star packing problem belongs to the general $\mcH$-packing framework, where $\mcH$ is a family of (non-isomorphic) graphs.
Given an $\mcH$, an $\mcH$-packing in an input graph $G = (V, E)$ is a set of vertex-disjoint subgraphs of $G$, each isomorphic to a graph in $\mcH$.
A vertex in $G$ is said to be {\em covered} by the $\mcH$-packing if it is contained in one of the subgraphs in the packing.
The $\mcH$-packing problem aims to find an $\mcH$-packing to cover the maximum number of vertices.

For another example of $\mcH$, we can look at an $i$-clique, typically denoted as $K_i$, which is the complete graph on $i$ vertices.
When $\mcH = \{K_2\}$, i.e., it contains only a single $2$-clique, the $\mcH$-packing problem is equivalent to the maximum matching problem.
A maximum matching can be computed by several well-known algorithms including the Blossom algorithm by Edmonds~\cite{Edm65b}.
The Blossom algorithm is iterative, that it starts with an empty matching and in each iteration it looks for an augmenting path (after contracting odd cycles),
which connects two uncovered vertices through an alternating path with edges alternately not in the current matching and in the current matching;
the algorithm terminates when no augmenting path is found, and the achieved matching is shown to be optimal.
The Blossom algorithm runs in $O(|V|^2 \cdot |E|)$ time, that is, the $\mcH$-packing problem is polynomial solvable.
When $\mcH = \{ K_2, K_3 \}$, 
Hell and Kirkpatrick~\cite{HK84} discovered three specific augmenting structures,
and proved that any packing, in which none of the three structures exists, is a packing that covers the maximum number of vertices.
They provided a polynomial time algorithm for computing such an optimal $\mcH$-packing.
Lastly, they showed if $\mcH \subseteq \{K_1, K_2, \ldots \}$ contains $K_1$ or $K_2$,
then the $\mcH$-packing problem is polynomial solvable,
and the provided algorithm is also based on augmenting structures connecting different uncovered vertices or some other specific local structures~\cite{HK84}.

In another paper, Hell and Kirkpatrick~\cite{HK86} did the same for {\sc $[1, \ell]$-Star Packing} where $\ell$ is a fixed integer,
by discovering certain augmenting structures on the $\ell$-stars in the current packing,
and proving that any packing in which none of structures exists is a packing that covers the maximum number of vertices;
they provided a polynomial time algorithm for computing such a packing.

%Similarly, a series of results also exist for the partition problem with bounded sequential star sets.
%Given an input graph $G=(V,E)$,
%the $k^-$-star partition problem is to find a minimum collection of vertex-disjoint stars, each of which has at most $k$ satellites, to cover all the vertices of $V$.
%It is easy to see $1^-$-star partition problem is equivalent to the maximum matching problem, which is polynomially solvable\cite{GK04}.
%Since an $i$-star is equivalent to an $(i+1)$-path when $i\le2$, 
%the $2^-$-star partition problem is as same as $3^-$-path partition problem.
%The current best result is a $\frac{21}{16}$-approximation algorithm given by Chen et al.~\cite{CGS19}.
%For each $k\ge3$, Xu et al. designed a local search algorithm for $k^-$-star partition problem,
%which achieves an approximation ratio of $\frac{k+1}2 - \frac{k-1}{(k+1)(k+2)}$ when $k$ is odd and $\frac{k+1}2 - \frac{k-1}{2(k+1)^2}$ when $k$ is even\cite{XY25}.
%This ratio was improved to $\frac{k+1}2 - \frac{k-1}{8k-6}$ by Gong et al.\cite{GL25}.

\subsection{Our contributions and paper organization}
%--------------------------------------------------------------------------------------------------
In this paper, we study four NP-hard variants of the star packing problem from the approximation algorithm perspective, namely,
{\sc $[2, \infty)$-Star Packing}, {\sc $[k, \infty)$-Star Packing} where $k \ge 3$,
{\sc $[2, \ell]$-Star Packing} where $\ell > 2$, and {\sc $[k, \ell]$-Star Packing} where $\ell > k \ge 3$.
We continue to develop local search techniques coupled with the amortized analysis,
and we explore augmenting configurations that help bridge two distinct local structures for applying an improvement operation,
some of which are inspired by the work of Edmonds~\cite{Edm65b} and Hell and Kirkpatrick~\cite{HK84,HK86}.

We contribute two approximation algorithms for {\sc $[k, \infty)$-Star Packing} when $k = 2$ and $k \ge 3$, respectively, improving the previous best ones.
Specifically, for {\sc $[k, \infty)$-Star Packing} where $k \ge 3$, we present in Section 2 a $\frac {k+1}2$-approximation algorithm,
improving the previous best ratio of $\frac {(k+1)^2}{2k+1}$ by Hu et al.~\cite{HZC25,HZC26}.
Being the first algorithm in the paper, the description and the performance analysis are presented in full details.
Mostly building on top of the first algorithm, in the second half of Section 2, we propose additionally an augmenting configuration for {\sc $[2, \infty)$-Star Packing}
to design a $\frac 43$-approximation algorithm, improving the previous best ratio of $\frac 32$.
The {\sc $[k, \ell]$-Star Packing} problem, where $2 \le k < \ell$, does not seem to have been studied in the literature.
When $k = 2$, in Section 3, we design slightly different local improvement operations from those in the first algorithm,
and propose a distinct augmenting configuration, leading to the first $(1 + \frac \ell{\ell+1})$-approximation algorithm.
Section 4 deals with $k \ge 3$, where we make full use of the augmenting configuration,
and show that the same algorithm is a $(1 + \max\left\{\frac {k-1}2, \frac {(k+1) \ell}{3 (\ell+1)}\right\})$-approximation algorithm.
Given that a lower bound of $\frac {12k + 19}{12k + 18.994}$ on the approximation ratio has been proven for {\sc $[k, \infty)$-Star Packing} when $k \ge 3$,
we prove in Section 5 the APX-hardness for the last remaining case of {\sc $[2, \infty)$-Star Packing}.
We conclude the paper in Section 6.

%\newpage
\section{Approximating the {\sc $[k, \infty)$-Star Packing} problem when $k \ge 2$}
%==================================================================================================
The {\sc $[k, \infty)$-Star Packing} problem is also referred to as $k^+$-star packing in the literature~\cite{LL21,HZG25,HZC25,HZC26}.
The state-of-the-art approximation algorithms are local search proposed by Hu et al.~\cite{HZC25,HZC26},
which have a ratio of $1 + \frac {k^2}{2k + 1}$ when $k \ge 3$ and a ratio of $\frac 32$ when $k = 2$.
In this section, we present improved, also local search, approximation algorithms {\sc Approx$[k, \infty)$} for $k \ge 3$ with a ratio of $\frac {k+1}2$,
and {\sc Approx$[2, \infty)$} for $k = 2$ with a ratio of $\frac 43$.
We will highlight the major differences after we introduce the new local improvement operations.

A local search algorithm is typically supported by a number of operations each is designed to improve the current solution through searching a neighborhood in the solution space.
Our algorithm is denoted as {\sc Approx$[k, \infty)$} and the operations are {\sc Collect} and {\sc Trade} defined below.
The algorithm is iterative, and during each iteration a feasible solution, referred to as the {\em current} solution or packing, is assumed at the beginning,
one of the operations is applied to update/improve the solution.
Before the iteration ends, post-processing is often necessary including collecting the solution information, updating the {\em remainder} graph,
and some other minor tasks to be detailed in each specific algorithm.
If in an iteration none of the operations is applicable, then the algorithm terminates and returns the current solution as the final solution.

Below we define the operations first, and for each operation we estimate its time complexity.
Given a feasible solution $\mcP$ for the input graph $G = (V, E)$, we call each star in $\mcP$ an {\em internal} star.
Let $V(\mcP)$ denote the set of vertices {\em covered} by (the stars of) $\mcP$,
then $V - V(\mcP)$ is the set of {\em uncovered} vertices and the graph $R(\mcP)$ induced on $V - V(\mcP)$, i.e., $R(\mcP) = G[V - V(\mcP)]$,
is the {\em remainder} graph with respect to $\mcP$.
A star in the remainder graph is called an {\em outside} star.

\begin{definition}[Operation {\sc Collect}]
\label{def01}{\rm \cite{HZG25,HZC25,HZC26}}
Given a vertex $v$ in the remainder graph, if its degree in the remainder graph is $d \ge k$,
then the operation extracts the $d$-star centered at $v$ with the $d$ neighbors of $v$ as satellites from the remainder graph and adds it to the current solution $\mcP$.
({\em Comment}: If $d < k$, then the operation is not applicable at the vertex $v$.)
\end{definition}

The algorithm starts with the empty solution,
and in the first a few iterations {\sc Collect} is repeatedly applied on vertices of degree at least $k$ in the remainder graph until impossible.
Note that the degrees of all the vertices can be pre-processed in $O(|E|)$ time,
and inside the algorithm their degrees in the remainder graph can be updated in $O(|E|)$ time too after applying a {\sc Collect} operation.
Each operation {\sc Collect} alone takes $O(|V|)$ time but the iteration takes $O(|E|)$ time due to post-processing.
When no {\sc Collect} operation is applicable, the maximum degree of the remainder graph is at most $k-1$, which is stated in the following Observation~\ref{obs01}.
We remark that an observation typically consists of straightforward statements that do not need rigorous proofs,
though we will provide some sample proofs below.

\begin{observation}
\label{obs01}
When no {\sc Collect} operation is applicable,
\begin{enumerate}
\parskip=0pt
\item the maximum degree of the remainder graph is at most $k-1$;
\item the center of an internal star is not adjacent to any uncovered vertex.
\end{enumerate}
\end{observation}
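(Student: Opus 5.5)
Part 1 follows directly from the applicability condition of {\sc Collect} in Definition~\ref{def01}, so I would argue it by contraposition. Suppose some uncovered vertex $v$ has degree $d \ge k$ in the remainder graph $R(\mcP)$. Then the $d$-star centered at $v$, whose satellites are the $d$ neighbors of $v$ in $R(\mcP)$, lies entirely inside the remainder graph. It is therefore vertex-disjoint from all internal stars, and since $d \ge k$ it is an allowed candidate star for {\sc $[k, \infty)$-Star Packing}. So {\sc Collect} is applicable at $v$, a contradiction. Hence every vertex of $R(\mcP)$ has degree at most $k-1$.

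Part 2 depends on how internal stars come into existence, so I would argue it by induction over the iterations of the algorithm. The invariant to maintain is: every internal center has no uncovered neighbor in $G$.

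\emph{{\sc Collect} creates the star.} Here the center $v$ takes \emph{all} of its remainder-graph neighbors as satellites. Immediately afterwards, every neighbor of $v$ in $G$ is covered: it was either already covered, or it is one of the new satellites.

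\emph{The invariant persists.} Since $\mcP$ only grows while {\sc Collect} is applied, a covered vertex never becomes uncovered. So the property continues to hold for every star created by {\sc Collect}.

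This is the main obstacle: the statement is made for the current solution at any moment, and later operations (such as {\sc Trade}) may modify stars and uncover vertices. If the statement is read at a general point of the algorithm, I would handle it the same way as Part 1, by contraposition. Suppose an internal center $c$ were adjacent to an uncovered vertex $u$. Then ideally $c$ could absorb $u$ as an extra satellite. Since $\ell = \infty$, the enlarged star is still feasible, and the packing strictly improves. For this to count as a contradiction, the local search should perform such an absorption as part of its post-processing, or as a degenerate form of {\sc Collect} that extends a star by its center's uncovered neighbors.

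I would therefore state explicitly the assumption I rely on: within the algorithm, whenever a star with center $c$ is placed, $c$ collects all of its uncovered neighbors. This holds by construction for {\sc Collect}. With that assumption, Part 2 follows, because no uncovered vertex adjacent to $c$ can remain.
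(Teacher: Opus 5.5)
Your proposal is correct and follows the paper's reasoning. Part 1 is immediate from the applicability condition of {\sc Collect}, and Part 2 follows from {\sc Collect} taking all remainder-graph neighbors as satellites, together with the post-processing the paper builds into every iteration (any uncovered vertex adjacent to the center of an internal star is added to that star), which is exactly the absorption step you assume. One correction: state that assumption as an invariant for every internal center after every iteration, not only ``whenever a star is placed.'' A {\sc Trade} can uncover vertices adjacent to the center of a star it did not touch.
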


As we define more operations below, we remark that after applying any one of them in an iteration,
during the post-processing operation {\sc Collect} is repeatedly applied whenever there is an uncovered vertex of degree at least $k$ in the remainder graph,
or if there is an uncovered vertex adjacent to the center of an internal star then the uncovered vertex is added to the star to become covered.
Such post-processing at the end of each iteration guarantees the two statements in Observation~\ref{obs01}.

\begin{definition}[Operation {\sc Trade-$i$}]
\label{def02}
Given $i$ internal stars $S_1, \ldots, S_i$, where $1 \le i \le 3$, the operation trades them to cover more vertices by
first removing them from the current solution to make all their vertices uncovered and then extracting one up to $i+1$ feasible stars to cover more vertices.
({\em Comment}: Extracting such one up to $i+1$ feasible stars can be done by enumerating all subsets of $i+1$ uncovered vertices as the centers,
checking whether these centers have sufficiently many satellites and whether each has at least $k$ distinct satellites by a maximum bipartite matching algorithm,
in $O(|V|^{i+1 + 1.5})$ time.
If no combination covering more vertices can be extracted, then the operation is not applicable on this subset of internal stars $S_1, \ldots, S_i$.)
\end{definition}

To find an applicable operation {\sc Trade-$i$}, one might need to enumerate all subsets of $i$ internal stars.
Therefore, an applicable operation {\sc Trade-$i$} takes $O(|V|^{2i+2.5})$ time, for $1 \le i \le 3$.

\begin{observation}
\label{obs02}
When operation {\sc Trade-$1$} is not applicable, for every $1 \le j \le k$,
any $j$ satellites of an internal $(k+j)^+$-star together with any $k-j+1$ uncovered vertices do not form a $k$-star.
\end{observation}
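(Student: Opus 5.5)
We argue by contraposition: if for some $1 \le j \le k$ there are $j$ satellites of an internal $(k+j)^+$-star $S$ and $k-j+1$ uncovered vertices that together form a $k$-star, then {\sc Trade-$1$} is applicable to $S$. The count $j + (k-j+1) = k+1$ is exactly the number of vertices of a $k$-star, so this $k$-star, call it $T$, uses all the chosen vertices. Its center is either one of the $j$ borrowed satellites of $S$ or one of the uncovered vertices; in both cases its satellites are among the remaining chosen vertices.

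The construction is the following. Let $S$ be an $i$-star with $i \ge k+j$, center $c$, and satellite set $X$. Let $Y \subseteq X$, $|Y| = j$, be the borrowed satellites, and $U$, $|U| = k-j+1$, the uncovered vertices. We perform {\sc Trade-$1$} on $S$. After removing $S$, we extract two stars:
\begin{itemize}
\item the $(i-j)$-star centered at $c$ with satellite set $X - Y$, which is feasible since $i - j \ge k$;
\item the star $T$ on $Y \cup U$.
\end{itemize}
These two stars are vertex-disjoint. The first uses $\{c\} \cup (X - Y)$, while $T$ uses $Y \cup U$, and $U$ is disjoint from $V(S)$ because its vertices were uncovered. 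Both stars use only vertices of $S$ or uncovered vertices, so they are disjoint from the other internal stars. All edges used exist in $G$: those of the first star are edges of $S$, and those of $T$ are edges by assumption.

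The two new stars together cover $(i+1) + (k-j+1)$ vertices, compared with the $i+1$ vertices covered by $S$ alone. This is a strict gain of $k-j+1 \ge 1$ vertices, using $2 = i+1$ stars for $i = 1$ as {\sc Trade-$1$} permits. Hence {\sc Trade-$1$} is applicable, a contradiction.

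The only point needing care is the degree condition: the residual star at $c$ must keep at least $k$ satellites. This is exactly why the hypothesis requires $S$ to be a $(k+j)^+$-star. The other ingredients are disjointness and the fact that the operation's enumeration, which tries all candidate centers and satellite assignments, would find this configuration.
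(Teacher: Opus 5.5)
Your proof is correct and uses the same reasoning the paper has in mind. The paper gives no proof of this observation, but it applies the identical move in its proof of Observation~\ref{obs06}(1): trade $S$ for the residual $(i-j)$-star at its center, which is feasible since $i-j\ge k$, together with the new $k$-star, gaining $k-j+1\ge 1$ vertices. One cosmetic point: you use $i$ both for the number of satellites of $S$ and for the index in {\sc Trade-$i$}, so ``$2=i+1$ stars for $i=1$'' should be rephrased to refer only to the operation's index.
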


More observations can be made when operation {\sc Trade-$i$} is not applicable, for $i = 2, 3$, and we will do it whenever we need them.
Using the above defined {\sc Collect} and {\sc Trade-$i$} operations, our algorithm {\sc Approx$[k, \infty)$} repeatedly applies them whenever possible to update the solution.
When none of them is applicable, the algorithm returns the achieved solution $\mcP$ as the final solution to the {\sc $[k, \infty)$-Star Packing} problem.

We remark that inside the previous best $(1 + \frac {k^2}{2k + 1})$-approximation algorithm~\cite{HZC26},
the same operation {\sc Collect} was deployed,
as well as some special cases of {\sc Trade-$1$} and {\sc Trade-$2$} where some structural constraints were put on the to-be traded internal stars and the newly extracted stars.
For example, for {\sc Trade-$1$}, it was required in \cite[Definitions 3 and 2 of {\sc Pull-by-$k$} and {\sc Pull-by-$(k+1)^+$}]{HZC26} that
either the internal star is a $k$-star, or exactly one of its satellites together with $k$ uncovered vertices form into a $k$-star.
Our operation {\sc Trade-$i$} is thus more general, imposing no specific structural constraints, while more powerful though also more time consuming.

\subsection{The performance analysis for $k \ge 3$}
%--------------------------------------------------------------------------------------------------
We fix an optimal star packing denoted as $\mcQ$ for discussion, in which the stars are referred to as {\em optimal} stars.
Recall that the stars in the computed star packing $\mcP$ are called {\em internal};
a vertex is said {\em covered} or {\em uncovered} with respect to $\mcP$, i.e., whether or not it is in an internal star.

Using Observation~\ref{obs01}(1), when the center of an optimal star is uncovered, at most $k-1$ of its satellites can be uncovered.
We categorize the optimal stars into three types as follows:
\begin{definition}[Types of optimal stars]
\label{def03}
The optimal stars are categorized into three types:
\begin{itemize}
\parskip=0pt
\item In a Type-1 optimal star, the center and $k - 1$ satellites are uncovered.
\item In a Type-2 optimal star, the center and no more than $k - 2$ satellites are uncovered.
\item In a Type-3 optimal star, the center is covered.
\end{itemize}
\end{definition}

Such a categorization of the optimal stars into three types is complete.

We associate one token with each uncovered vertex of $V(\mcQ)$, that is, the number of tokens in an optimal star equals to the number of uncovered vertices.
For each optimal star, the tokens are {\em transferred}, maybe fractionated,
to some of the covered vertices in the same optimal star according to the below {\em token-transfer rules}:
\begin{definition}[Token-transfer rules]
\label{def04}
\begin{itemize}
\parskip=0pt
\item
	A Type-1 optimal star has $k$ tokens in total and at least one covered satellite.
	We transfer these $k$ tokens to one of the covered satellites.
\item
	A Type-2 optimal star has $x+1$ tokens in total, where $0 \le x \le k-2$ is the number of uncovered satellites, and at least $k-x$ covered satellites.
	We transfer these $x+1$ tokens evenly to $k-x$ of its covered satellites (each receiving $\frac {x+1}{k-x} \le \frac {k-1}2$ tokens).
\item
	A Type-3 optimal star has $y$ tokens in total, where $0 \le y \le k-1$ is the number of uncovered satellites, and at least $k-y$ covered satellites.
	We transfer these $y$ tokens evenly to the center and $k-y$ of its covered satellites (each receiving $\frac y{k-y+1} \le \frac {k-1}2$ tokens).
\end{itemize}
\end{definition}

By Observations~\ref{obs01}(2) and \ref{obs02}, we have the following:
\begin{observation}
\label{obs03}
By the token-transfer rules,
\begin{enumerate}
\parskip=0pt
\item
	the covered vertices receiving tokens in an optimal star and those uncovered vertices induce a connected subgraph inside the optimal star;
	(thus there are always a covered vertex and an uncovered vertex adjacent to each other in the optimal star;)
\item
	a vertex receives $k$ tokens only by the first rule, it is a satellite of an internal $k$-star $S$,
	and no other satellite of $S$ can also receive $k$ tokens; %due to operation {\sc Trade-1} 
\item
	for the satellites of internal $(k + 1)^+$-stars and the centers of internal stars, each receives at most $\frac {k-1}2$ tokens.
\end{enumerate}
\end{observation}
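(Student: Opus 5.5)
We verify the three statements of Observation~\ref{obs03} in turn, using the fact that in every optimal star the tokens are moved only to covered vertices of that same star, as prescribed by Definition~\ref{def04}.

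\emph{Statement (1).}
For Type-1 and Type-2 optimal stars the center is uncovered. Every token-receiving covered vertex is a satellite, hence adjacent to the center. Together with the uncovered vertices, which are the center and some satellites, they form a substar, so the subgraph is connected.
For a Type-3 optimal star the center is covered and receives tokens whenever $y \ge 1$, since it is included by the third rule. Every uncovered satellite is adjacent to it, so again a substar results. When $y = 0$ nothing is transferred and the statement is vacuous.
In every nonvacuous case some uncovered vertex is adjacent to some token-receiving covered vertex. Note that Observation~\ref{obs01}(2) is needed in the Type-3 case. It shows that the covered center of the optimal star is not the center of an internal star, because it is adjacent to an uncovered satellite. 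So the center is a satellite of some internal star, which is what statement (3) needs.

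\emph{Statement (3) and the first half of statement (2).}
The second and third rules give at most $\frac{x+1}{k-x}$ with $x \le k-2$, and at most $\frac{y}{k-y+1}$ with $y \le k-1$.
Both expressions increase in their variable. At the extreme values $x=k-2$ and $y=k-1$ each equals $\frac{k-1}{2}$, and $\frac{k-1}{2} < k$ because $k \ge 3$.
Hence receiving $k$ tokens can happen only through the first rule, applied to a covered satellite $u$ of a Type-1 optimal star $O$. The uncovered center of $O$ together with the $k-1$ uncovered satellites of $O$ gives $k$ uncovered vertices, all adjacent to $u$.
If $u$ were the center of an internal star, it would be adjacent to uncovered vertices, contradicting Observation~\ref{obs01}(2). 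So $u$ is a satellite of an internal star $S$.
If $S$ were a $(k+1)^+$-star, then $u$, which is one satellite of $S$, together with the $k$ uncovered vertices of $O$ would form a $k$-star centered at $u$. This contradicts Observation~\ref{obs02} with $j=1$. Hence $S$ is a $k$-star.

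The claim in statement (3) about centers of internal stars follows from the same argument: a center is never adjacent to an uncovered vertex, so it cannot receive $k$ tokens, and the general bound $\frac{k-1}{2}$ applies to it.

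\emph{Statement (2), uniqueness.}
This is the main obstacle. Suppose two satellites $u_1$ and $u_2$ of the same internal $k$-star $S$ each receive $k$ tokens, coming from Type-1 optimal stars $O_1$ and $O_2$. These optimal stars are distinct, since each Type-1 optimal star gives its tokens to a single vertex, and they are vertex-disjoint.
Apply {\sc Trade-$1$} to $S$. Remove $S$, and form a $k$-star centered at $u_1$ using the $k$ uncovered vertices of $O_1$, and another $k$-star centered at $u_2$ using those of $O_2$. The center of $S$ and its other $k-2$ satellites become uncovered. The two new stars cover $2(k+1)$ vertices, more than the $k+1$ vertices of $S$.
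This contradicts the assumption that {\sc Trade-$1$} is not applicable, since the operation allows extracting up to $2$ stars. The key point to check is that the vertices used are disjoint, which follows from the disjointness of optimal stars.
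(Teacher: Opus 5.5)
Your plan matches the paper's proof. You use Observation~\ref{obs01}(2) to show the $k$-token recipient is not an internal center, and Observation~\ref{obs02} with $j=1$ to force its internal star to be a $k$-star. A {\sc Trade-$1$} on $S$ then gives uniqueness. Item (3) follows from the bounds $\frac{x+1}{k-x},\frac{y}{k-y+1}\le\frac{k-1}2$, since each vertex lies in at most one optimal star. However, the $k$-stars you actually construct are wrong. In a Type-1 optimal star $O$, the covered satellite $u$ that receives the $k$ tokens is adjacent, within $O$, only to the \emph{center} of $O$. Nothing guarantees that $u$ is adjacent to the $k-1$ uncovered satellites of $O$. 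So your claim that the $k$ uncovered vertices are ``all adjacent to $u$'' is false in general, and the $k$-star ``centered at $u$'' need not exist in $G$. The same applies to the two $k$-stars ``centered at $u_1$'' and ``centered at $u_2$'' in your uniqueness step. As written, both your use of Observation~\ref{obs02} and your {\sc Trade-$1$} rest on edges that may not be there.

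The repair is immediate and gives exactly the paper's argument. Use the sub-star of $O$ centered at its uncovered center, with $u$ and the $k-1$ uncovered satellites of $O$ as satellites. This is still one satellite of $S$ together with $k$ uncovered vertices forming a $k$-star. So Observation~\ref{obs02} with $j=1$ still rules out $S$ being a $(k+1)^+$-star. In the uniqueness step, trade $S$ for the two $k$-stars centered at the centers of $O_1$ and $O_2$, with satellites $u_1$ (resp.\ $u_2$) plus the uncovered satellites of $O_1$ (resp.\ $O_2$). These cover $2(k+1)$ vertices instead of $k+1$, which is precisely the paper's {\sc Trade-$1$}. Your Observation~\ref{obs01}(2) step is unaffected, since it only needs $u$ adjacent to the uncovered center of $O$. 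Your argument for (1) is fine. The side remark there about the covered center of a Type-3 star is not actually needed for (3).
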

\begin{proof}
The second item can be not so obvious and we provide a proof.
Denote by $v$ this vertex receiving $k$ tokens in the optimal star $S^*$.
Since the center of $S^*$ is uncovered, by Observation~\ref{obs01}(2) $v$ cannot be the center of the internal star $S$.
That is, $v$ is a satellite in $S$.
It then follows from Observation~\ref{obs02} that $S$ is a $k$-star.
Lastly, if another satellite of $S$ receives $k$ tokens, then it must receive these tokens in another optimal star $S^*_1 \ne S^*$.
One then sees that operation {\sc Trade-$1$} would be applicable to trade $S$ with the two $k$-stars centered at the centers of the two optimal stars $S^*$ and $S^*_1$, respectively,
a contradiction to the termination condition of {\sc Approx$[k, \infty)$}.

The third item is a direct consequence of the second item.
\end{proof}

\begin{definition}[Heavy star and its companions]
\label{def05}
An internal $k$-star $S$ is {\em heavy} if a satellite receives $k$ tokens.

Suppose $S$ is heavy.
For another internal $k$-star $S'$, if a satellite $v$ of $S$ and a satellite $v'$ of $S'$ are co-recipients in an optimal star each receiving $\frac {k-1}2$ tokens,
then $S'$ is a {\em companion} for $S$ (see for an illustration in \Cref{fig01}).
\end{definition}

\begin{figure}[ht]
\centering\scalebox{1.0}{
  \setlength{\unitlength}{1bp}%
  \begin{picture}(156.06, 73.52)(0,0)
  \put(0,0){\includegraphics{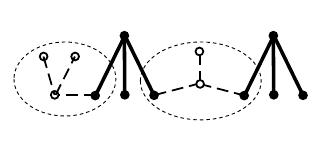}}
  \put(61.20,58.63){\fontsize{11.38}{13.66}\selectfont $c$}
  \put(114.43,18.09){\fontsize{11.38}{13.66}\selectfont $v'$}
  \put(94.49,54.97){\fontsize{11.38}{13.66}\selectfont $S^*$}
  \put(72.77,18.53){\fontsize{11.38}{13.66}\selectfont $v$}
  \put(50.92,8.12){\fontsize{11.38}{13.66}\selectfont $S$}
  \put(128.28,8.12){\fontsize{11.38}{13.66}\selectfont $S'$}
  \put(131.97,58.96){\fontsize{11.38}{13.66}\selectfont $c'$}
  \put(96.66,35.47){\fontsize{11.38}{13.66}\selectfont $c^*$}
  \put(5.67,52.80){\fontsize{11.38}{13.66}\selectfont $S^*_1$}
  \end{picture}%
}
\captionsetup{width=.95\textwidth}
\caption{An illustration of a companion star $S'$ for a heavy star $S$, of which the edges are thick solid:
	One satellite of $S$ receives $k$ tokens, here $k = 3$;
	another satellite $v$ of $S$ and a satellite $v'$ of $S'$ each receives $\frac {k-1}2$ tokens in the optimal star $S^*$ (dotted circled).\label{fig01}}
\end{figure}

\begin{lemma}
\label{lemma01}
\begin{enumerate}
\parskip=0pt
\item
	The center of a heavy star receives at most $\frac {k-2}3$ tokens;
\item
	the companion star of a heavy star is not heavy;
\item
	the center of a companion star receives at most $\frac {k-2}3$ tokens;
\item
	no two heavy stars share a common companion.	
\end{enumerate}
\end{lemma}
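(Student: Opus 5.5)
The plan is to prove every item by contradiction: if it fails, one of the operations {\sc Trade-$1$}, {\sc Trade-$2$} or {\sc Trade-$3$} would be applicable, which contradicts the termination of {\sc Approx$[k, \infty)$}. I will first fix notation. Let $S$ be a heavy star with center $c$, and let $u$ be the satellite of $S$ receiving $k$ tokens. By the first rule, $u$ lies in a Type-1 optimal star $S^*_1$. Its center and $k-1$ of its satellites are uncovered, so $u$ together with these $k$ uncovered vertices forms a $k$-star $T_1$.

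\textbf{Preliminary fact.} A covered vertex adjacent to an uncovered center cannot be the center of an internal star, by Observation~\ref{obs01}(2). Hence the center of an internal star can receive tokens only as the center of a Type-3 optimal star. It then receives $\frac y{k-y+1}$ tokens, where $y\ge 1$ is the number of uncovered satellites of that optimal star.

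\textbf{Item 1.} Suppose $c$ receives tokens, as the center of a Type-3 optimal star $S^*_c \ne S^*_1$ with $y \ge 1$ uncovered satellites. Apply {\sc Trade-$1$} to $S$ and extract two stars. The first is $T_1$. The second is centered at $c$, with satellites the $k-1$ satellites of $S$ other than $u$ plus the $y$ uncovered satellites of $S^*_c$. This second star has $k-1+y \ge k$ satellites. The two new stars are vertex-disjoint because $S^*_1$ and $S^*_c$ are distinct optimal stars. The number of covered vertices grows by $k+y>0$, a contradiction. So $c$ receives $0 \le \frac{k-2}3$ tokens.

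\textbf{Key structural step for items 2 to 4.} I must determine which vertices can receive exactly $\frac{k-1}2$ tokens. Both $\frac{x+1}{k-x}$ (for $0\le x\le k-2$) and $\frac{y}{k-y+1}$ (for $0\le y\le k-1$) are strictly increasing. So the value $\frac{k-1}2$ is attained only when $x=k-2$ or $y=k-1$, and in each case there are exactly two recipients. In the Type-2 case, the two recipients $v, v'$ together with the uncovered center and the $k-2$ uncovered satellites form a $k$-star. In the Type-3 case, one recipient is the optimal center, say $v$. It forms a $k$-star with $v'$ and the $k-1$ uncovered satellites. Either way, for an optimal star $S^*$ witnessing a companion, $v$, $v'$ and the $k-1$ uncovered vertices of $S^*$ form a $k$-star $T^*$. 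I expect this step, together with checking that all vertices used in the trades are distinct, to be the main obstacle. Distinctness follows because each vertex lies in only one optimal star, and a vertex receiving $k$ tokens lies in a Type-1 optimal star rather than in $S^*$.

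\textbf{Item 2.} Suppose the companion $S'$ is heavy, with a satellite $w$ receiving $k$ tokens in a Type-1 optimal star $S^*_2$, giving a $k$-star $T_2$. Apply {\sc Trade-$2$} to $S$ and $S'$, replacing them by $T_1$, $T_2$ and $T^*$. These three stars are disjoint because they come from the distinct optimal stars $S^*_1$, $S^*_2$ and $S^*$. Coverage rises from $2(k+1)$ to $3(k+1)$, a contradiction.

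\textbf{Item 3.} Suppose the center $c'$ of $S'$ receives tokens, as the center of a Type-3 optimal star with $y\ge1$ uncovered satellites. Apply {\sc Trade-$2$} to $S$ and $S'$, extracting three stars: $T_1$, $T^*$, and a star centered at $c'$. The third star uses the $k-1$ satellites of $S'$ other than $v'$ plus the $y$ uncovered satellites. Coverage rises from $2k+2$ to $3k+2+y$, a contradiction. So $c'$ receives $0 \le \frac{k-2}3$ tokens.

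\textbf{Item 4.} Suppose heavy stars $S_1$ and $S_2$ share the companion $S'$, through optimal stars $S^*_a$ (containing $v_1 \in S_1$ and $v'_a \in S'$) and $S^*_b$ (containing $v_2 \in S_2$ and $v'_b \in S'$). By the key structural step each optimal star has only two recipients of $\frac{k-1}2$ tokens. Hence $S^*_a \ne S^*_b$, and so $v'_a \ne v'_b$. Apply {\sc Trade-$3$} to $S_1$, $S_2$ and $S'$, extracting four stars: the two Type-1 $k$-stars of $S_1$ and $S_2$, and $T^*_a$, $T^*_b$. This is allowed since $i+1=4$. Coverage rises from $3(k+1)$ to $4(k+1)$, a contradiction.
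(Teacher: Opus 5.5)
Your items 2 and 4 are correct and match the paper's argument: the same $k$-stars are built from three, respectively four, distinct optimal stars and traded by {\sc Trade-$2$} and {\sc Trade-$3$}. Your monotonicity observation about the $\frac{k-1}2$ recipients is also sound. The gap is in your Preliminary fact, which gets the Type-3 case backwards, and this breaks items 1 and 3.

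The center of a Type-3 optimal star with $y \ge 1$ uncovered satellites is adjacent to those uncovered vertices. By Observation~\ref{obs01}(2) it is therefore never the center of an internal star, so the case you analyze in items 1 and 3 is vacuous. The case Observation~\ref{obs01}(2) does leave open is different: the center of an internal star can be one of the $k-y$ covered \emph{non-center} satellites of a Type-3 optimal star, receiving $\frac{y}{k-y+1}$ tokens. Such satellites need not be adjacent to any uncovered vertex. Nothing forbids this for small $y$ (e.g.\ $y=1$ gives $\frac 1k$ tokens). So your conclusions that $c$ and $c'$ receive $0$ tokens are unfounded, and $\frac{k-2}3$ is the right target.

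To repair item 1, use your monotonicity step. If $c$ receives more than $\frac{k-2}3$ tokens, then $y=k-1$. The Type-3 optimal star $S^*_2$ ($\neq S^*_1$, being of a different type) then has exactly two recipients: $c$ and its center $u$. Now $u$ is covered and adjacent to $k-1$ uncovered vertices, so by Observation~\ref{obs01}(2) it is a satellite of some internal star $S'$, possibly $S$ itself. Replace $S$ and $S'$ by the following:
\begin{itemize}
\item $T_1$;
\item the star centered at $u$ with satellites $c$ and the $k-1$ uncovered vertices of $S^*_2$;
\item in addition, either put the center of $S'$ into $u$'s star (if $S'\neq S$ is a $k$-star), or keep $S'-u$ (if $S'$ is a $(k+1)^+$-star).
\end{itemize}
This covers more vertices, so {\sc Trade-$2$} applies (or {\sc Trade-$1$} when $S'=S$), a contradiction.

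Item 3 is analogous. Now the center $u'$ of the offending Type-3 star lies in an internal star $S''$. Trade $S$, $S'$, $S''$ with {\sc Trade-$3$} (or {\sc Trade-$2$} if $S''\in\{S,S'\}$). Extract $T_1$, $T^*$, the star centered at $u'$ containing $c'$ and the $k-1$ uncovered vertices, and the same adjustment of $S''$ as above. Here you also need $S^*_2\neq S^*$, which holds because $c'$ is neither $v$ nor $v'$.

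The idea you missed is that the center's co-recipient is the center of the optimal star, and it may lie in a different internal star. A single trade on the star whose center receives the tokens therefore cannot work.
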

\begin{proof}
Let $S$ be a heavy star, i.e., one of its satellites receives $k$ tokens in an optimal star $S^*_1$.
By Observation~\ref{obs03}(2) no other satellite can receive $k$ tokens, but at most $\frac {k-1}2$ tokens.
If the center $c$ of $S$ receives $\frac {k-1}2$ tokens, then by Observation~\ref{obs03}(1) the tokens are transferred to $c$ in an optimal star $S^*_2 \ne S^*_1$ by the third rule,
with its co-recipient $u$ adjacent to the $k-1$ uncovered vertices in $S^*_2$.
It follows that $u$ is a satellite in an internal star $S'$, but then operation {\sc Trade-$2$} would be applicable to trade $S$ and $S'$
with a $k$-star centered at the center of $S^*_1$ and a $(k+1)$-star centered at $u$ if $S'$ is a $k$-star (including the degenerate case where $S = S'$),
or otherwise a $k$-star centered at $u$ and a revised star $S'$ by removing $u$.
This proves item 1.

Let $S'$ be a companion star for $S$, and assume a satellite $v$ of $S$ and a satellite $v'$ of $S'$ are co-recipients in an optimal star $S^*$ each receiving $\frac {k-1}2$ tokens
(see for an illustration in \Cref{fig01}).
If $S'$ is heavy and assume one of its satellites receives $k$ tokens in an optimal star $S^*_2 \ne S^*_1$,
then operation {\sc Trade-2} would be applicable to trade $S$ and $S'$
with three $k$-stars centered at the centers of $S^*_1$, $S^*$ and $S^*_2$, respectively.

Next, if the center $c'$ of $S'$ receives $\frac {k-1}2$ tokens, then the same the tokens are transferred to $c'$ in an optimal star $S^*_2 \ne S^*$ by the third rule,
with its co-recipient $u'$ adjacent to the $k-1$ uncovered vertices in $S^*_2$.
It follows that $u'$ is a satellite in an internal star $S''$, but then operation {\sc Trade-$3$} would be applicable to trade $S$, $S'$ and $S''$
with two $k$-stars centered at the centers of $S^*_1$ and $S^*$, respectively,
and a $(k+1)$-star centered at $u'$ if $S''$ is a $k$-star (including the degenerate case where $S''$ collides into $S$ or $S'$),
or otherwise a $k$-star centered at $u'$ and a revised star $S''$ by removing $u'$.

Lastly, if $S'$ is also a companion for another heavy star $S''$, and assume one satellite of $S''$ receives $k$ tokens in an optimal star $S^*_2 \ne S^*_1$,
a satellite of $S'$ and a satellite of $S''$ are co-recipients each receiving $\frac {k-1}2$ tokens in an optimal star $S^*_3 \ne S^*$,
then operation {\sc Trade-3} would be applicable to trade $S$, $S'$ and $S''$
with four $k^+$-stars centered at the centers of $S^*_1$, $S^*_2$, $S^*$ and $S^*_3$, respectively.
This proves the lemma.
\end{proof}

In the proof of the following main theorem in this section, we examine the total tokens received by (all the vertices of) each internal star.

\begin{theorem}
\label{thm01}
The algorithm {\sc Approx$[k, \infty)$} is an $O(|V|^{9.5})$-time $\frac {k+1}2$-approximation algorithm for the {\sc $[k, \infty)$-Star Packing} problem, for any $k \ge 3$.
\end{theorem}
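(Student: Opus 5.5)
Since every uncovered vertex of $V(\mcQ)$ carries one token and all tokens end up on covered vertices, $|V(\mcQ)| - |V(\mcQ)\cap V(\mcP)|$ equals the total number of tokens. It therefore suffices to show that each internal $i$-star $S$ ($i \ge k$), on its $i+1$ vertices, receives at most $\frac{k-1}2 (i+1)$ tokens in an amortized sense. Then the number of uncovered optimal vertices is at most $\frac{k-1}2|V(\mcP)|$, and
\[
|V(\mcQ)| \le |V(\mcP)| + \tfrac{k-1}2 |V(\mcP)| = \tfrac{k+1}2|V(\mcP)|.
\]
Here we use that each covered vertex lies in at most one optimal star, so it receives tokens from at most one optimal star. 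For the running time, each iteration increases $|V(\mcP)|$ by at least one, so there are at most $|V|$ iterations. Each iteration is dominated by {\sc Trade-$3$} at $O(|V|^{8.5})$ plus $O(|E|)$ post-processing, giving $O(|V|^{9.5})$ in total.

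For the counting, I would split the internal stars into four classes.

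\textbf{Non-heavy stars.} If $S$ is not heavy, then by Observation~\ref{obs03}(2),(3) every vertex of $S$ receives at most $\frac{k-1}2$ tokens, and the bound holds directly.

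\textbf{Heavy stars.} If $S$ is a heavy $k$-star, its tokens are bounded as follows:
\begin{itemize}
\item one satellite receives $k$ tokens;
\item each of the other $k-1$ satellites receives at most $\frac{k-1}2$ tokens;
\item the center receives at most $\frac{k-2}3$ tokens by Lemma~\ref{lemma01}(1).
\end{itemize}
The total is $k + \frac{(k-1)^2}2 + \frac{k-2}3$. The allowance is $\frac{(k-1)(k+1)}2$, so the excess is
\[
k + \frac{k-2}3 - (k-1) = \frac{k+1}3 .
\]
This is where I expect the real work, because the excess must be absorbed. My first attempt would be a case split on whether a non-heavy satellite $v$ of $S$ actually reaches $\frac{k-1}2$ tokens. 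A Type-2 recipient gets $\frac{x+1}{k-x}$, which equals $\frac{k-1}2$ only when $x=k-2$ and there are exactly two recipients. A Type-3 recipient reaches $\frac{k-1}2$ only when $y=k-1$, again with two recipients, namely the center and one satellite. In every other case each satellite gets at most about $\frac{k-2}3$-ish tokens, and the slack over $k-1$ satellites is at least $(k-1)\bigl(\frac{k-1}2 - \frac{k-2}3\bigr) \ge \frac{k+1}3$ once $k \ge 3$. Checking exactly which fractional values are attainable is part of this step.

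\textbf{Co-recipient arguments.} Suppose $v$ does get $\frac{k-1}2$ tokens with a co-recipient $v'$. If $v'$ is a satellite of another internal $k$-star $S'$, then $S'$ is a companion of $S$. Otherwise I would argue via {\sc Trade-$2$}, as in the proof of Lemma~\ref{lemma01}, that this cannot happen:
\begin{itemize}
\item If $v'$ is a center, {\sc Trade-$1$}/{\sc Trade-$2$} applies.
\item If $v'$ lies in a $(k+1)^+$-star, it can be detached and {\sc Trade-$2$} applies.
\item If $v'$ is in $S$ itself, trading $S$ alone yields two $k$-stars.
\end{itemize}

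\textbf{Companion stars.} A companion $S'$ is not heavy by Lemma~\ref{lemma01}(2), and its center receives at most $\frac{k-2}3$ by Lemma~\ref{lemma01}(3). So $S'$ has slack at least $\frac{k-1}2 - \frac{k-2}3 = \frac{k+1}6$ at its center, plus further slack on its other satellites. That slack is far short of $\frac{k+1}3$ per heavy star unless I pool $S$ and $S'$. I would therefore charge each heavy star together with one companion as a group of $2(k+1)$ vertices. By Lemma~\ref{lemma01}(4) distinct heavy stars have distinct companions, so the groups are disjoint. The group total is at most
\[
\left(\frac{(k-1)(k+1)}2 + \frac{k+1}3\right) + \left(\frac{(k-1)(k+1)}2 - \frac{k+1}6\right),
\]
which still exceeds the allowance by $\frac{k+1}6$. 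So I would additionally need to show that a companion's satellites cannot all reach $\frac{k-1}2$; I expect a further {\sc Trade-$3$} argument to bound this. Alternatively, I would require several satellites of $S$ to have full co-recipients, each giving a distinct companion to absorb the charge. Making this accounting close exactly to $\frac{k+1}2$ is the main obstacle, and I would resolve it by enumerating how many of $S$'s satellites are tight.
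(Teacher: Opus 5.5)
Your setup is sound: the token identity, the direct bound for non-heavy stars, the running time, and the worst-case excess $\frac{k+1}3$ of a heavy star are all correct. The gap is that the heavy-star amortization is never closed, and the grouping you commit to cannot close it. Pairing a heavy star $S$ with one companion leaves, by your own computation, an excess of $\frac{k+1}6$. The patch you suggest, showing that a companion's satellites cannot all receive $\frac{k-1}2$, is not supplied by Lemma~\ref{lemma01} and is not needed. The underlying mismatch is in the accounting. You charge $S$ its worst case, in which all $k-1$ remaining satellites receive $\frac{k-1}2$, yet credit only one companion, although each such tight satellite produces its own companion. Your closing ``alternative'' points the right way but is left as a plan.

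What is missing is to apply your dichotomy satellite by satellite. Since $\frac{x+1}{k-x}$ and $\frac y{k-y+1}$ are increasing, every satellite of $S$ other than the heavy one receives either exactly $\frac{k-1}2$ or at most $\frac{k-2}3$. If $j$ of them are tight, the excess of $S$ is at most $\frac{k+1}6(j+3-k)$, and your co-recipient argument yields a companion for each tight satellite. You also need these $j$ companions to be pairwise distinct. Suppose tight satellites $v_1,v_2$ of $S$ had co-recipients $u_1,u_2$ in the same $k$-star $S'$. Then {\sc Trade-$2$} on $S,S'$ could extract three $k$-stars: the one from the Type-1 optimal star at the heavy satellite, and, for $i=1,2$, one on $\{v_i,u_i\}$ plus the $k-1$ uncovered vertices of the optimal star containing $v_i$. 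These cover $3(k+1)>2(k+1)$ vertices, a contradiction. Now group $S$ with all its companions $S_1,\dots,S_j$. These groups are disjoint by Lemma~\ref{lemma01}(2),(4), and Lemma~\ref{lemma01}(1),(3) give
\[
\tau(\{S,S_1,\dots,S_j\}) \le k + (k-j)\tfrac{k-2}3 + j\tfrac{k-1}2 + j\left(k\cdot\tfrac{k-1}2+\tfrac{k-2}3\right) = \tfrac{k(k+1)}3 + j(k+1)\tfrac{k-1}2 .
\]
This is at most $(j+1)(k+1)\frac{k-1}2$ precisely because $\frac k3\le\frac{k-1}2$ for $k\ge3$. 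Equivalently, the $j$ companion centers alone supply slack $j\cdot\frac{k+1}6$, which covers the excess $\frac{k+1}6(j+3-k)$ of $S$. No bound on the companions' satellites beyond $\frac{k-1}2$ is required. This is the grouping the paper uses in Eq.~(\ref{eq01}).
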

\begin{proof}
We have mentioned earlier that the most expensive operation is {\sc Trade-3}, which takes $O(|V|^{8.5})$ time.
Since every operation increases the number of covered vertices by at least one, the total number of operations is $O(|V|)$, leading to an overall running time of $O(|V|^{9.5})$.

For each internal star $S \in \mcP$, the total tokens received by all its vertices through token-transfer process, denoted as $\tau(S)$,
is regarded as the number of uncovered vertices in $V(\mcQ)$ that are distributed to the star $S$.
As every token is transferred to some covered vertices, we have $|V(\mcQ)| \le |V(\mcP)| + \sum_{S \in \mcP} \tau(S)$.

For an internal $(k + 1)^+$-star $S$, by Observation~\ref{obs03}(3) every vertex receives no more than $\frac {k-1}2$ tokens, and thus $\tau(S) \le |V(S)| \cdot \frac {k-1}2$.

Consider a heavy (internal $k$-)star $S$, i.e., one of its satellites receives $k$ tokens in an optimal star $S^*_1$.
By \Cref{lemma01}, its center $c$ receives no more than $\frac {k-2}3$ tokens.
If a satellite $v$ of $S$ receives $\frac {k-1}2$ tokens in an optimal star $S^*$,
then $v$ cannot be the center of $S^*$ since otherwise operation {\sc Trade-$1$} would be applicable to trade $S$
with two $k^+$-stars centered at the center of $S^*_1$ and $v$, respectively.
It follows that the co-recipient $u$ in $S^*$ is adjacent to an uncovered vertex in $S^*$ and thus $u$ is a satellite in an internal star $S'$.
Furthermore, $S' \ne S$ since otherwise operation {\sc Trade-1} would be applicable to trade $S$
with two $k^+$-stars centered at the centers of $S^*_1$ and $S^*$, respectively.
On the other hand, $S'$ is not a $(k+1)^+$-star or otherwise operation {\sc Trade-2} would be applicable to trade $S$ and $S'$
with two $k$-stars centered at the centers of $S^*_1$ and $S^*$, respectively, and a revised star $S'$ by removing $u$.
We conclude that $S'$ is a $k$-star, and thus it is a companion for $S$ (and $u$ cannot be the center of $S^*$ either, see for an illustration in \cref{fig01}).
This also proves that when $j$ satellites of $S$ each receives $\frac {k-1}2$ tokens, $j$ distinct companion stars for $S$ are located, denoted as $S_1, S_2, \ldots, S_j$.
As a collection of internal $k$-stars, by \Cref{lemma01} we have
\begin{eqnarray}
\tau(\{S, S_1, S_2, \ldots, S_j\}) &\le &\left(k + j \cdot \frac {k-1}2 + (k-j) \cdot \frac {k-2}3 \right) 	+ j \cdot \left(k \cdot \frac {k-1}2 + \frac {k-2}3 \right)\nonumber\\
	&= 	&(k + 1) \cdot \frac k3 + j (k + 1) \cdot \frac {k-1}2\nonumber\\
 	&\le &(j + 1) (k + 1) \cdot \frac {k-1}2, \mbox{ for any } k \ge 3.\label{eq01}
\end{eqnarray}
That is, on average each star in this collection receives no more than $(k + 1) \cdot \frac {k-1}2$ tokens.

For every other internal $k$-star $S$ which is neither heavy nor a companion,
the total token received by $S$ is also no more than $(k + 1) \cdot \frac {k-1}2$.

In summary,
\[
\frac {|V(\mcQ)|}{|V(\mcP)|} \le 1 + \frac {\sum_{S \in \mcP} \tau(S)}{|V(\mcP)|} \le 1 + \max_{S \in \mcP} \frac {\tau(S)}{|V(S)|} = 1 + \frac {k-1}2 = \frac {k+1}2.
\]
This proves the approximation ratio.
\end{proof}

%\newpage
\subsection{A revised algorithm for $k = 2$ and the performance analysis}
%--------------------------------------------------------------------------------------------------
Note that to prove the ratio of $\frac {k+1}2$ for the algorithm {\sc Approx$[k, \infty)$} in the above \Cref{thm01},
we require $\frac k3 \le \frac {k-1}2$ in Eq.~(\ref{eq01}) which is equivalent to $k \ge 3$.
Nevertheless, the algorithm {\sc Approx$[k, \infty)$} works for $k = 2$,
just that the performance analysis shows a guarantee of only $1 + \frac k3 = \frac 53$ but not $1 + \frac {k-1}2 = \frac 32$.
One could proceed with a refined analysis to show that {\sc Approx$[k, \infty)$} is a $\frac{19}{12}$-approximation algorithm for {\sc $[2, \infty)$-Star Packing},
which we choose to skip but add one novel operation to become {\sc Approx$[2, \infty)$} and prove it to have a much better approximation ratio of $\frac 43$.

We first present an augmenting configuration, which is explored in the local improvement operation.
\begin{definition}[Triplet]
\label{def06}
A {\em triplet} $(v_0, S, v)$ consists of a satellite $v_0$ in an internal $3^+$-star, an internal $2$-star $S$ centered at $c$,
and a satellite $v \notin V(S)$ adjacent to $c$ in the input graph.
\end{definition}

\begin{definition}[Augmenting configuration, abbreviated as AC]
\label{def07}
Given a triplet $(v_0, S, v)$ where $v_0$ is a satellite of $S_0$,
if there is a sequence of $j \ge 0$ internal $2$-stars $S_1$, $S_2$, $\ldots$, $S_j$ such that
$v_0$ is adjacent to the center of $S_1$, a satellite of $S_i$ is adjacent to the center of $S_{i+1}$ for every $i = 1, 2, \ldots, j-1$, and $v$ is a satellite of $S_j$,
then we say this sequence of stars $\langle S_0, S_1, \ldots, S_j, S\rangle$ form into an {\em augmenting configuration} abbreviated as {\em AC} for the triplet $(v_0, S, v)$,
see for an illustration in \Cref{fig02}.
({\em Comment:} In the extreme case where $j = 0$, $v$ collides into $v_0$.)
\end{definition}

\begin{figure}[ht]
\centering\scalebox{1.0}{
  \setlength{\unitlength}{1bp}%
  \begin{picture}(228.74, 64.93)(0,0)
  \put(0,0){\includegraphics{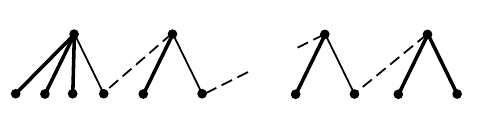}}
  \put(84.28,50.38){\fontsize{11.38}{13.66}\selectfont $c_1$}
  \put(121.88,34.46){\fontsize{11.38}{13.66}\selectfont $\ldots$}
  \put(206.69,50.25){\fontsize{11.38}{13.66}\selectfont $c$}
  \put(157.40,50.25){\fontsize{11.38}{13.66}\selectfont $c_j$}
  \put(74.33,8.12){\fontsize{11.38}{13.66}\selectfont $S_1$}
  \put(142.69,8.12){\fontsize{11.38}{13.66}\selectfont $S_j$}
  \put(198.62,8.12){\fontsize{11.38}{13.66}\selectfont $S$}
  \put(18.40,8.12){\fontsize{11.38}{13.66}\selectfont $S_0$}
  \put(44.53,8.12){\fontsize{11.38}{13.66}\selectfont $v_0$}
  \put(167.69,8.12){\fontsize{11.38}{13.66}\selectfont $v$}
  \end{picture}%
}
\captionsetup{width=.95\textwidth}
\caption{An augmenting configuration for the triplet $(v_0, S, v)$,
	from the internal $3^+$-star $S_0$ to the internal $2$-star $S$ via a sequence of internal $2$-stars $S_1, S_2, \ldots, S_j$,
	where the solid edges are in the internal stars while the dashed edges are in the input graph but not in these stars.
	Swapping the dashed edges into the stars with the thin solid edges expands $S$ into a $3$-star,
	with the effect of decreasing the number of satellites of $S_0$ by one and increasing the number of satellites of $S$ by one,
	or we can imagine it as ``moving'' a satellite of $S_0$ to $S$.\label{fig02}}
\end{figure}

For the current solution $\mcP$, we can implicitly construct a digraph $\mcH$ on the internal stars,
such that if a star $S_1$ has a satellite adjacent to the center of a $2$-star $S_2$ then there is an arc from $S_1$ to $S_2$.
Existence of an AC for the triplet $(v_0, S, v)$ in \Cref{def07} is equivalent to the existence of a directed path from $S_0$ to $S$ in $\mcH$,
and thus can be determined in $O(|V|^2)$ time via a breadth-first-search traversal in $\mcH$.
Furthermore, in the sequel, by an AC for the triplet $(v_0, S, v)$ we always mean a {\em shortest} one, i.e., the sequence consists of the smallest number of stars.
With the AC in \Cref{def07} (see for an illustration in \Cref{fig02}), one can swap the satellite $v_0$ of $S_0$ into $S_1$ with the satellite of $S_1$ adjacent to the center of $S_2$,
which in turn is swapped into $S_2$ with the satellite of $S_2$ adjacent to the center of $S_3$, and so on,
and lastly adds the satellite $v$ of $S_j$ into $S$ to {\em expand} $S$ into a $3$-star.
Such a process along the AC does not change the covered vertices,
but decrease the number of satellites of $S_0$ by one and increase the number of satellites of $S$ by one,
or we can imagine it as ``{\em moving}'' a satellite of $S_0$ to $S$.
Note that there are $O(|V|^3)$ triplets.
We define the {\sc Trade-along-AC(s)} operation below.

\begin{definition}[Operation {\sc Trade-along-AC(s)}] 
\label{def08}
\begin{itemize}
\parskip=0pt
\item
Given a triplet $(v_0, S, v)$ where $v_0$ is a satellite of $S_0$,
let $S'_0$ denote the star obtained from $S_0$ by removing $v_0$, $S'$ denote the $3$-star obtained from $S$ by adding $v$, and $\mcP' = \mcP \setminus \{S_0, S\} \cup \{S'_0, S'\}$.
Let $\mcS$ denote a collection consisting of $S'$ and $i-1$ other stars of $\mcP'$, for some $1 \le i \le 3$.
If a {\sc Trade-$i$} operation is applicable on $\mcS$, and there exists an AC for the triplet $(v_0, S, v)$ with all the intermediate $2$-stars in $\mcP' \setminus \mcS$,
then the operation first modifies $\mcP$ into $\mcP'$ using the AC and then executes the {\sc Trade-$i$} operation on $\mcS$.

({\em Comment}: The order of locating $\mcS$ first and then the existence of AC is important, because there can be exponentially many ACs.
Since $S'$ is fixed for the triplet $(v_0, S, v)$, finding all applicable {\sc Trade-$i$} operations can be done in $O(|V|^{(i-1) + (i+1) + 1.5}) = O(|V|^{2i+1.5})$ time.
For an operation {\sc Trade-$i$} applicable on $\mcS$,
finding an AC with all the intermediate $2$-stars in $\mcP' \setminus \mcS$, then expanding $S$ to the $3$-star $S'$ along the AC, can be done in $O(|V|^2)$ time.
If no {\sc Trade-$i$} operation is applicable or no such AC exists, then the operation {\sc Trade-along-AC} is not applicable on the triplet $(v_0, S, v)$.
Therefore, finding and executing an operation {\sc Trade-along-AC} takes $O(|V|^{3 + (2i+1.5) + 2}) = O(|V|^{2i+6.5})$ time.)

\item
Given two triplets $(v_0, S_1, v)$ and $(u_0, S_2, u)$ where $u_0 \ne v_0$ are satellites of two $3^+$-stars $S_0$ and $T_0$ or are two satellites of a $4^+$-star $S_0 = T_0$,
$S_1 \ne S_2$, and $v \ne u$,
let $S'_0$ ($T'_0$, resp.) denote the star obtained from $S_0$ ($T_0$, resp.) by removing $v_0$ ($u_0$, resp.),
$S'_1$ ($S'_2$, resp.) denote the $3$-star obtained from $S_1$ ($S_2$, resp.) by adding $v$ ($u$, resp.), and
$\mcP' = \mcP \setminus \{S_0, T_0, S_1, S_2\} \cup \{S'_0, T'_0, S'_1, S'_2\}$.
Let $\mcS$ denote a collection consisting of $S'_1$, $S'_2$ and $i-2$ other internal stars of $\mcP'$, for some $2 \le i \le 3$.
If a {\sc Trade-$i$} operation is applicable on $\mcS$, and there exists edge-disjoint ACs for the triplets $(v_0, S_1, v)$ and $(u_0, S_2, u)$
with all the intermediate $2$-stars in $\mcP' \setminus \mcS$,
then the operation first modifies $\mcP$ into $\mcP'$ using the ACs and then executes the {\sc Trade-$i$} operation on $\mcS$.

({\em Comment}: When $S_0 = T_0$, $S'_0 = T'_0$ is the star obtained by removing $v_0$ and $u_0$ from $S_0$.
Finding all applicable {\sc Trade-$i$} operations can be done in $O(|V|^{(i-2) + (i+1) + 1.5}) = O(|V|^{2i + 0.5})$ time.
For an operation {\sc Trade-$i$} applicable on $\mcS$,
finding two edge-disjoint ACs with all the intermediate $2$-stars in $\mcP' \setminus \mcS$, then expanding $S_1$ and $S_2$ to the $3$-stars along the ACs, can be done in $O(|V|^2)$ time.
If no {\sc Trade-$i$} operation is applicable or no such ACs exist, then the operation is not applicable on the pair of triplets $(v_0, S_1, v)$ and $(u_0, S_2, u)$.
Therefore, finding and executing an operation {\sc Trade-along-ACs} takes $O(|V|^{6 + (2i+0.5) + 2}) = O(|V|^{2i+8.5})$ time.)
\end{itemize}
\end{definition}

One sees that finding and executing an operation {\sc Trade-along-ACs} has a high time complexity of $O(|V|^{2i+8.5}) = O(|V|^{14.5})$, when $i = 3$.
Note that some special cases of {\sc Trade-along-AC(s)} have been covered by {\sc Trade-$i$} operations, such as the case where $v_0$ collides into $v$.
Therefore, {\sc Trade-along-AC(s)} is regarded as an extended {\sc Trade-$i$} operation, and is also referred to as {\sc Trade}.
Using the {\sc Collect} and {\sc Trade} operations, our algorithm {\sc Approx$[2, \infty)$} repeatedly applies them whenever possible to update the solution.
When none of them is applicable, the algorithm returns the achieved solution $\mcP$ as the final solution to the {\sc $[2, \infty)$-Star Packing} problem.

We remark that the previous best $\frac 32$-approximation algorithm~\cite{HZC26} employs
a special case of operation {\sc Trade-$3$}~\cite[Definition 6 of {\sc Pull-by-$(2, 2, 2)$}]{HZC26}
where all three to-be-traded internal stars are all $2$-stars and there are at least two $2$-stars among the newly extracted ones,
besides those operations for the general $k \ge 3$.
Augmenting structures have been designed and used in the polynomial time exact algorithms by Hell and Kirkpatrick~\cite{HK86} for {\sc $[1, \ell]$-Star Packing}.
Here we explore the novel augmenting configurations in the local improvement operation {\sc Trade-along-AC(s)}.

%--------------------------------------------------------------------------------------------------
To analyze the performance of {\sc Approx$[2, \infty)$}, we similarly fix an optimal $2^+$-star packing denoted as $\mcQ$ for discussion,
and further choose such a $\mcQ$ that contains the maximum number of covered vertices among all optimal $2^+$-star packings
(that is, $\mcQ$ has the {\em maximum overlap} with the computed solution $\mcP$).
We categorize the optimal stars into three types exactly the same as \Cref{def03}, but this time using $k = 2$;
we then associate a token with each uncovered vertex of $V(\mcQ)$,
and {\em transfer} the tokens in each optimal star, this time in wholes,
to some of the covered vertices in the same optimal star according to the following {\em token-transfer rules} adjusted from \Cref{def04}:
\begin{definition}[Token-transfer rules]
\label{def09}
\begin{itemize}
\parskip=0pt
\item
	A Type-1 optimal star has two tokens and its all but one satellite are covered.
	We transfer the two tokens to a covered satellite.
\item
	A Type-2 optimal star has one token and its satellites are all covered.
	We transfer the token to a satellite.
\item
	A Type-3 optimal star has at most one token, and if exists, then all but one satellite are covered and we transfer the token to the center.
\end{itemize}
\end{definition}

By Observations~\ref{obs01} and \ref{obs02}, we have the following analogous to Observation~\ref{obs03}.
\begin{observation}
\label{obs04}
By the token-transfer rules,
\begin{enumerate}
\parskip=0pt
\item
	each vertex receiving tokens is adjacent to an uncovered vertex, and thus the center of an internal star receives no token;
\item
	only a satellite of an internal $2$-star can receive two tokens, and in this case the other satellite of the $2$-star receives no token;
\item
	except stated in the above second item, a satellite of any other internal star receives at most one token.
\end{enumerate}
\end{observation}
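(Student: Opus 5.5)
The plan is to check the three items one at a time, working directly from the token-transfer rules of \Cref{def09} with $k=2$, and invoking Observation~\ref{obs01} and Observation~\ref{obs02} as needed.

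\textbf{Item 1.} Go through the three rules and check adjacency in each case.
\begin{itemize}
\item In a Type-1 or Type-2 optimal star the center is uncovered. The recipient is a satellite, so it is adjacent to that uncovered center.
\item In a Type-3 optimal star a token exists only if exactly one satellite is uncovered. The recipient is the center, which is adjacent to that uncovered satellite.
\end{itemize}
So every recipient is adjacent to an uncovered vertex. By Observation~\ref{obs01}(2), the center of an internal star is never adjacent to an uncovered vertex, hence it receives no token.

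\textbf{Item 2.} A vertex $v$ receives two tokens only in a single Type-1 optimal star $S^*$. A vertex belongs to exactly one optimal star, so tokens cannot accumulate across different optimal stars. In $S^*$ the center $c^*$ and one satellite $w$ are uncovered, and $v$ is a covered satellite. By item 1, $v$ is a satellite of some internal star $S$.
\begin{itemize}
\item Suppose $S$ is a $3^+$-star. Then $v$ together with $c^*$ and $w$ forms a $2$-star, and Observation~\ref{obs02} with $j=1$ forbids this ($S$ is a $(k+1)^+$-star for $k=2$). So $S$ is a $2$-star.
\item Suppose the other satellite $v'$ of $S$ also receives a token. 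By item 1, $v'$ is adjacent to an uncovered vertex $x$ lying in the optimal star $S^*_1$ where $v'$ receives it, and $S^*_1 \ne S^*$ because $v,v'$ are distinct satellites in possibly different optimal stars.
\end{itemize}
For the second bullet, I would apply {\sc Trade-$1$} to $S$. Replace its $3$ vertices by the $2$-star centered at $c^*$ with satellites $v,w$, plus a star built around $v'$ and the uncovered vertices of $S^*_1$.

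The case split on $S^*_1$ is as follows.
\begin{itemize}
\item If $S^*_1$ is Type-1 or Type-2, its center $c_1$ is uncovered. Take $c_1$ with satellite $v'$ together with another vertex of $S^*_1$; if $S^*_1$ has no further uncovered vertex, use the freed center $c$ of $S$ if adjacent.
\item If $S^*_1$ is Type-3, then $v'$ is the center of $S^*_1$ and is adjacent to its uncovered satellite; build a star centered at $v'$.
\end{itemize}

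\textbf{Main obstacle.} The hardest step is item 2 when $v'$ receives only one token, since then there may be too few uncovered vertices to form a second $2$-star. The cleanest route I would try first uses the maximum-overlap choice of $\mcQ$. If $v'$ receives its token in a Type-2 star, or as the center of a Type-3 star, then $S^*_1$ has $k$ or fewer uncovered neighbours available. In that case I would try to show that swapping in $S$ (or the tradeable structure) yields an optimal packing with larger overlap with $\mcP$, contradicting the choice of $\mcQ$. Otherwise the union of uncovered vertices plus $V(S)$ supports two $2^+$-stars covering at least $4 > 3$ vertices, contradicting non-applicability of {\sc Trade-$1$}. If the overlap argument fails in some subcase, the fallback is {\sc Trade-$2$} using the internal star containing the co-satellite in $S^*_1$.

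\textbf{Item 3.} A satellite of an internal star can receive tokens only within its own optimal star, and each rule gives a whole amount. Two tokens arise only in the situation of item 2. All other rules give exactly one token, to a single vertex, within a single optimal star. Hence every other satellite receives at most one token.
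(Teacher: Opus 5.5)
\textbf{Verdict: there is a gap.} Your items 1 and 3 are fine, and so is the first half of item 2: a recipient of two tokens must be a $2$-star satellite, by Observation~\ref{obs02} with $j=1$. The gap is in the second half of item 2, exactly where you flag the ``main obstacle'': the other satellite $v'$ of the $2$-star $S$ receives only one token. You leave this case open and point to a maximum-overlap argument or a {\sc Trade-$2$} fallback. Neither is carried out, and neither is needed.

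\textbf{The missing observation.} $v'$ is always adjacent to the center $c$ of $S$, simply because $v'$ is a satellite of $S$. Once {\sc Trade-$1$} removes $S$, $c$ becomes uncovered. Receiving one token means $v'$ is adjacent to an uncovered vertex $x$ of $S^*_1$. This $x$ is either the uncovered center of a Type-2 star, or the unique uncovered satellite of a Type-3 star centered at $v'$.

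\textbf{Completing the case.} Center the second new star at $v'$, not at $c_1$, with satellites $c$ and $x$. This $2$-star is vertex-disjoint from the $2$-star centered at $c^*$ with satellites $v$ and $w$:
\begin{itemize}
\item $c$ may lie in $S^*$, but it is not used in that star;
\item $x \notin V(S^*)$, since $x$ lies in a different optimal star.
\end{itemize}
Together the two stars cover $6 > 3$ vertices, so {\sc Trade-$1$} is applicable, a contradiction. This is precisely the paper's argument. Your phrase ``use the freed center $c$ of $S$ if adjacent'' was one step away: the adjacency condition holds automatically once the star is centered at $v'$. The two-token case (a Type-1 $S^*_1$), which you handle with a star centered at $c_1$, is correct as written.

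\textbf{Smaller point.} Your justification of $S^*_1 \ne S^*$ (``distinct satellites in possibly different optimal stars'') is not an argument. The correct reason is this: in the Type-1 star $S^*$, all tokens go to $v$ alone. Each vertex lies in at most one optimal star. So if $v'$ receives any token, it must receive it in an optimal star other than $S^*$.
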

\begin{proof}
The second item can be not too obvious and we provide a proof.
Denote by $v_1$ the satellite of a $2$-star $S$ which receives the two tokens in an optimal star $S^*_1$, and denote the other satellite of $S$ by $v_2$.
If $v_2$ receives the two tokens in another optimal star $S^*_2$, then operation {\sc Trade-$1$} would be applicable to trade $S$
with the two $2^+$-stars centered at the centers of $S^*_1$ and $S^*_2$, respectively.
If $v_2$ receives the one token in another optimal star $S^*_2$, then operation {\sc Trade-$1$} would be applicable to trade $S$
with the two $2^+$-stars centered at the center of $S^*_1$ and $v_2$, respectively.
Either way it is a contradiction to the termination condition of {\sc Approx$[2, \infty)$}.
\end{proof}

In the proof of the following main theorem in this section, we examine the average number of tokens received by all the internal stars.
For each internal star $S$, the total tokens received by all its vertices is denoted as $\tau(S)$.
We extend this notation to a collection $\mcS$ of internal stars that $\tau(\mcS)$ denotes the total tokens they receive.

\begin{definition}[Heavy $2$-star]
\label{def10}
An internal $2$-star $S$ is {\em heavy} if $\tau(S) = 2$.
\end{definition}

\begin{theorem}
\label{thm02}
The algorithm {\sc Approx$[2, \infty)$} is an $O(|V|^{15.5})$-time $\frac 43$-approximation algorithm for the {\sc $[2, \infty)$-Star Packing} problem.
\end{theorem}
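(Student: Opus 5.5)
The running time is the easy half. The most expensive step is finding and executing a {\sc Trade-along-ACs} operation with $i = 3$, which takes $O(|V|^{14.5})$ time by the comment in \Cref{def08}. Every operation covers at least one more vertex, so there are $O(|V|)$ iterations, and the total is $O(|V|^{15.5})$.

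For the ratio, I would follow the proof of \Cref{thm01}: $|V(\mcQ)| \le |V(\mcP)| + \sum_{S\in\mcP}\tau(S)$, so it suffices to prove $\sum_{S}\tau(S) \le \frac13 |V(\mcP)|$. That is, on average every internal star $S$ receives at most $|V(S)|/3$ tokens.

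By Observation~\ref{obs04}, centers receive nothing. A non-heavy $2$-star has $\tau \le 1 = |V(S)|/3$ and is fine. The deficient stars are:
\begin{itemize}
\item heavy $2$-stars, which have $\tau = 2$ and an excess of $1$;
\item $i$-stars with $i\ge 3$ in which $t$ satellites receive one token each, with excess $t - \frac{i+1}{3}$.
\end{itemize}
I would first use {\sc Trade-1} and {\sc Trade-2} non-applicability, as in \Cref{lemma01}, to restrict these. The goal is that a $3^+$-star has few token-receiving satellites whose optimal-star partners lie elsewhere. For instance, two token-receiving satellites of $S_0$ sharing the uncovered center of one optimal star would let {\sc Trade-1} build two stars. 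The maximum-overlap choice of $\mcQ$ excludes further degenerate configurations, such as an optimal star that could be rerouted to cover an extra covered vertex.

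The core is an amortization that sends the excess to internal $2$-stars with $\tau = 0$, each of which has a slack of $1$. A token-receiving satellite $v$ lies in some optimal star $S^*$ with uncovered center $c^*$, and the other satellites of $S^*$ are covered. Take such a covered satellite $w$ belonging to an internal $2$-star $T$. If some satellite $v_0$ of a $3^+$-star, possibly reached through intermediate $2$-stars, gave an AC for a triplet $(v_0,T,\cdot)$ avoiding the stars involved, then $T$ could be expanded to a $3$-star. After that, {\sc Trade-1} or {\sc Trade-2} could release $w$ to the new star at $c^*$, contradicting termination via {\sc Trade-along-AC}.

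So, for each deficient star, I would follow the digraph $\mcH$ from it along shortest AC paths. The goal is to locate $2$-stars that receive no tokens and are dominated by it, and to assign them to it. Two conditions must hold:
\begin{itemize}
\item each deficient $i$-star with $t$ token-receiving satellites obtains at least $t-\frac{i+1}3$ such zero-token $2$-stars, and each heavy $2$-star obtains two;
\item the assignment is injective.
\end{itemize}
These give the average $\frac13$. Heavy $2$-stars are handled as in \Cref{lemma01}: the satellite with two tokens forbids certain neighbours, and the co-recipient $2$-stars play the role of companions.

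The main obstacle is injectivity. If two deficient stars claimed the same zero-token $2$-star, I would like to derive an applicable {\sc Trade-along-ACs} with two edge-disjoint ACs. This requires the two AC paths to be edge-disjoint and to avoid the traded collection $\mcS$. Shortest ACs and the minimality of paths in $\mcH$ should handle most collisions: a shared intermediate star can be short-cut, giving a shorter AC. The remaining case, where the paths merge at the same $2$-star, is where I expect to rely on maximum overlap of $\mcQ$ and on {\sc Trade-3}. This case analysis is the step I would attack first and expect to be longest.
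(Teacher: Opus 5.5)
Your running-time argument matches the paper's. The ratio argument, however, has a genuine gap. You assume that a token-receiving satellite $v$ of an internal $3^+$-star lies in an optimal star $S^*$ with an uncovered center. By \Cref{def09} that is only the Type-2 case. The vertex $v$ can equally be the (covered) center of a Type-3 optimal star $S^*$, receiving the token of its single uncovered satellite $x$. The other covered satellites of such an $S^*$ may all be \emph{centers} of internal stars, in particular of other $3^+$-stars, and then the exploration from $v$ reaches no zero-token $2$-star at all. The termination conditions do not exclude this; the paper has to treat it explicitly as its fourth outcome.

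Consequently an internal $3$-star can have all three satellites receiving tokens, two of them in this way. In your accounting its excess is $3-\frac43=\frac53$, yet its explorations may reach only a single zero-token $2$-star. Your first bullet then fails, and an amortization that sends excess only to zero-token $2$-stars cannot close.

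The missing idea is the paper's \emph{critical edge}: charge the token of $v$ to the internal star whose center $c$ is a satellite of $S^*$. Since $c$ lies in at most one optimal star, every internal star receives at most one such charge. This is why the paper budgets one token per \emph{star}: it proves $\tau(\mcP)\le|\mcP|$ and then uses $|V(\mcP)|\ge 3|\mcP|$. All tokens on a $3^+$-star's satellites are charged elsewhere, so its unit budget is free for the incoming critical edge ($\alpha\le\beta_0+\beta_1+\beta_3$).

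When $c$ is the center of a $2$-star $S'$, you also need a split on $\tau(S')$:
\begin{itemize}
\item if $\tau(S')=0$, then $S'$ simply absorbs the charge;
\item if $\tau(S')=2$, then {\sc Trade-2} forces its two satellites to receive one token each in distinct optimal stars, giving two companions, so the load $3$ is spread over three stars;
\item if $\tau(S')=1$, this is where an AC genuinely arises: $v$ can be moved into $S'$ via the triplet $(v,S',v)$, the exploration restarts from the token-receiving satellite of $S'$, and {\sc Trade-along-AC(s)} replaces {\sc Trade-$i$} in each contradiction.
\end{itemize}

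There are three further, smaller issues.
\begin{itemize}
\item In the Type-2 case your AC detour is unnecessary. By Observation~\ref{obs01}(2) the other satellites of $S^*$ are not centers, and the paper gets a zero-token companion $2$-star directly from {\sc Trade-2}.
\item You cannot demand two zero-token $2$-stars for every heavy $2$-star. When one satellite receives both tokens, only one companion is guaranteed, and one suffices for excess $1$. Two are needed, and available, only for a heavy $2$-star at the end of a critical edge.
\item The injectivity you leave open is settled in the paper by short swap arguments, not by analysing merging AC paths. A companion $S'$ found through $v$-$x$-$u_1$ can be exchanged with the star that found it: cover $x$ and release the center and other satellite of $S'$. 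Hence a companion shared by two stars makes {\sc Trade-3} applicable, or {\sc Trade-along-ACs} when ACs are involved.
\end{itemize}
As written, your proposal has the right ingredients but is not yet a proof.
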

\begin{proof}
We distinguish $2$-stars and $3^+$-stars.

{\bf Case 1: $2$-stars}.
Consider an internal $2$-star $S$.
By Observation~\ref{obs04}, the two satellites of $S$ receive at most two tokens in total and there are two cases where $\tau(S) = 2$.

In the first case, a satellite $v_1$ of $S$ receives two tokens in an optimal star $S^*_1$,
while the other satellite $v_2$ receives no token (see for an illustration in \Cref{fig03}, the left half).
By the assumption of $\mcQ$, at least one of $v_2$ and $c$ is in some optimal star
--- because otherwise $S^*_1$ is a $2$-star and then $S^*_1$ can be replaced by $S$ to strictly increase the number of covered vertices in $\mcQ$.
Assume $v_2$ is in an optimal star $S^*_2$ (the case where $c$ is in $S^*_2$ is discussed similarly).
We claim that $S^*_1 \ne S^*_2$, since otherwise operation {\sc Trade-$1$} would be applicable to trade $S$ with $S^*_1$.
It follows that $v_2$ is adjacent to a covered vertex $u_1 \notin V(S)$ in $S^*_2$%
\footnote{Proof of the existence of $u_1$: If $v_2$ is a satellite of $S^*_2$,
	then $u_1$ is the center of $S^*_2$ and it is covered as otherwise operation {\sc Trade-$1$} would be applicable to trade $S$
	with $S^*_1$ and a $2$-star centered at $v_2$;
	if $u_1$ happens to be $c$, then we switch the argument on $c$ and
	by the fact that $c$ is not adjacent to any uncovered vertex we can set $u_1$ to be a covered satellite in $S^*_2$ other than $v_2$.
	If $v_2$ is the center of $S^*_2$, then $S^*_2$ is Type-3 and has no uncovered satellite, and thus we can set $u_1$ to be a covered satellite of $S^*_2$ other than $c$.},
i.e., $u_1$ is in an internal star $S' \ne S$.
We claim that $u_1$ is a satellite in $S'$, $S'$ is a $2$-star, and $\tau(S') = 0$, since otherwise operation {\sc Trade-$2$} would be applicable to trade $S$ and $S'$
with $S^*_1$ and one or two stars centered at $v_2, u_1$ and/or the center or the other satellite of $S'$.
This star $S'$ is called a {\em companion} of $S$, identified through the adjacency $v_2$-$u_1$ in $S^*_2$.
(One sees that there can be multiple companions for $S$, for example, when $S^*_2$ is a $3^+$-star centered at $v_2$ or $c$ is also in some optimal star $S^*_3$.
Nevertheless, one such companion is sufficient for $S$.)

\begin{figure}[ht]
\centering\scalebox{1.0}{
  \setlength{\unitlength}{1bp}%
  \begin{picture}(270.92, 88.20)(0,0)
  \put(0,0){\includegraphics{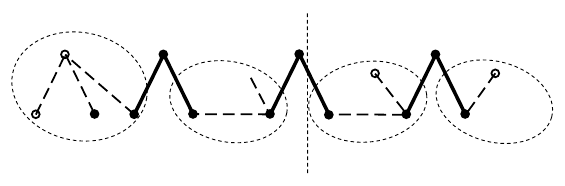}}
  \put(79.89,64.65){\fontsize{11.38}{13.66}\selectfont $c$}
  \put(135.06,34.44){\fontsize{11.38}{13.66}\selectfont $u_1$}
  \put(72.58,22.40){\fontsize{11.38}{13.66}\selectfont $S$}
  \put(136.66,21.03){\fontsize{11.38}{13.66}\selectfont $S'$}
  \put(19.46,12.90){\fontsize{11.38}{13.66}\selectfont $S^*_1$}
  \put(58.77,40.90){\fontsize{11.38}{13.66}\selectfont $v_1$}
  \put(93.08,40.90){\fontsize{11.38}{13.66}\selectfont $v_2$}
  \put(93.36,12.90){\fontsize{11.38}{13.66}\selectfont $S^*_2$}
  \put(155.38,41.89){\fontsize{11.38}{13.66}\selectfont $u_2$}
  \put(139.85,64.65){\fontsize{11.38}{13.66}\selectfont $c'$}
  \end{picture}%
}
\captionsetup{width=.95\textwidth}
\caption{Finding a companion $2$-star $S'$ for a heavy $2$-star $S$.
	The left half illustrates the first case where one satellite receives two tokens, and the right illustrates the second case where each of the two satellites receives one token.
	Solid edges are in the internal stars and dashed edges are in the optimal stars.\label{fig03}}
\end{figure}

In the second case, each of the two satellites $v_1$ and $v_2$ of $S$ receives one token, in two distinct optimal stars $S^*_1$ and $S^*_2$, respectively.
Since each of $S^*_1$ and $S^*_2$ contains at least two covered vertices,
at least one of them, say $S^*_2$, contains a covered vertex denoted $u_2 \notin V(S)$ (see for an illustration in \Cref{fig03}, the right half),
i.e., $u_2$ is in another internal star $S' \ne S$.
We claim that $u_2$ is a satellite in $S'$, $S'$ is a $2$-star, and $\tau(S') = 0$, since otherwise operation {\sc Trade-$2$} would be applicable to trade $S$ and $S'$
with a $2^+$-star centered at $v_1$ and one or two stars centered at $v_2, u_2$ and/or the center or the other satellite of $S'$.
This star $S'$ is a companion of $S$, identified through the adjacency $v_2$-$u_2$ in $S^*_2$.
Again, one sees that there can be more companions for $S$, for example, when $S^*_2$ is a $3^+$-star.
Specifically, when $S^*_1$ also contains a covered vertex denoted $u_1 \notin V(S)$ (this happens if the center $c$ of $S$ is not in $S^*_1$),
the same argument as for $S^*_2$ says that a distinct companion $S''$ of $S$ is identified through the adjacency $v_1$-$u_1$ in $S^*_1$.
We identify one such companion, which is sufficient, through each of $S^*_1$ and $S^*_2$ for $S$, when the optimal star contains a vertex not covered by $S$.

Lastly, if an internal $2$-star $S'$ is identified as a companion for a heavy star $S$ through an adjacency $v_2$-$u_1$ in the first case above,
then $S$ and $S'$ can be replaced by two stars centered at the center of $S^*_1$ and $v_2$, respectively (see for an illustration in \Cref{fig03}, the left half);
this way the two uncovered vertices with their token transferred to the satellite $v_1$ of $S$ can be swapped to become covered
while the center $c'$ and the other satellite $u_2$ of $S'$ become uncovered.
If an internal $2$-star $S'$ is identified as a companion for a heavy star $S$ through an adjacency $v_1$-$u_1$ in the second case above,
then $S$ and $S'$ can be replaced by two stars centered at $v_1$ and $v_2$, respectively (see for an illustration in \Cref{fig03}, the right half);
this way the two uncovered vertices with their token transferred to the satellites $v_1$ and $v_2$ of $S$ can be swapped to become covered
while the center $c'$ and the other satellite $u_1$ of $S'$ become uncovered.
This implies that $S'$ is not a companion for another heavy star $S''$,
since otherwise operation {\sc Trade-$3$} would be applicable to trade $S, S'$ and $S''$ to cover at least three more vertices
(see for an illustration in \Cref{fig03}, the two halves cannot happen simultaneously).
That is, no heavy $2$-stars share any companion.

{\bf Case 2: $3^+$-stars}.
Consider an internal $3^+$-star $S$.
By Observation~\ref{obs04}, every satellite of $S$ receives at most one token.
Let $v$ denote a satellite receiving one token, $x$ denote the adjacent uncovered vertex that transfers its token to $v$ in an optimal star $S^*$
(see for illustrations in \Cref{fig04}).

\begin{figure}[ht]
\centering\scalebox{0.9}{
  \setlength{\unitlength}{1bp}%
  \begin{picture}(314.65, 149.66)(0,0)
  \put(0,0){\includegraphics{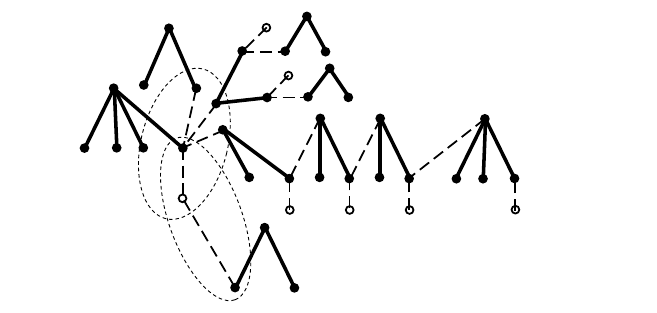}}
  \put(161.59,127.84){\fontsize{11.38}{13.66}\selectfont heavy $2$-star $S'$ and its two companions}
  \put(38.86,64.69){\fontsize{11.38}{13.66}\selectfont $S$}
  \put(78.63,70.65){\fontsize{11.38}{13.66}\selectfont $v$}
  \put(139.45,21.42){\fontsize{11.38}{13.66}\selectfont companion $S'$}
  \put(5.67,128.58){\fontsize{11.38}{13.66}\selectfont companion $S'$}
  \put(256.89,56.70){\fontsize{11.38}{13.66}\selectfont $S'$}
  \put(78.81,48.61){\fontsize{11.38}{13.66}\selectfont $x$}
  \put(80.45,104.48){\fontsize{11.38}{13.66}\selectfont $u_2$}
  \put(99.72,91.46){\fontsize{11.38}{13.66}\selectfont $c$}
  \put(100.47,9.39){\fontsize{11.38}{13.66}\selectfont $u_1$}
  \put(125.35,56.12){\fontsize{11.38}{13.66}\selectfont $v'$}
  \end{picture}%
}
\captionsetup{width=.95\textwidth}
\caption{All possibilities of exploring from a satellite $v$ of an internal $3^+$-star $S$,
	where $v$ is adjacent to an uncovered vertex $x$ and receives its token in an optimal star $S^*$.
	The lower dotted circle shows the case $S^*$ centering at $x$, leading to a companion $2$-star $S'$ for $S$.
	The upper dotted circle shows the case $S^*$ centering at $v$:
	A covered satellite of $S^*$ could be the center of an internal $2$-star $S'$ receiving no token and thus becomes a companion for $S$;
	or one could be the center of a heavy internal $2$-star $S'$ which has at least two companions;
	or one could be the center of an internal $3^+$-star or an internal $2$-star receiving no token;
	or lastly one could be the center of an internal $2$-star receiving one token and then a satellite $v'$ receiving the token takes up the role of $v$ to repeat the exploration.
	Solid edges are in the internal stars and dashed edges are in the optimal stars.\label{fig04}}
\end{figure}

Firstly, if $x$ is the center of $S^*$ (lower part of \Cref{fig04}), i.e., $S^*$ is Type-2,
then every satellite $u_1$ of $S^*$ is a satellite in an internal $2$-star $S'$ as otherwise operation {\sc Trade-$2$} would be applicable to trade $S$ and $S'$
with a star centered at the center of $S$ and one or two stars centered at $x$ and/or the center of $S'$;
the other satellite of $S'$ receives no token either, i.e., $\tau(S') = 0$, since otherwise operation {\sc Trade-$2$} would be applicable to trade $S$ and $S'$
with a star centered at the center of $S$ and two stars centered at $x$ and this satellite of $S'$.
We call $S'$ a companion for $S$ identified through the path $v$-$x$-$u_1$ in $S^*$.
Note that being a companion of $S$, $S$ and $S'$ can be replaced by two stars centered at the center of $S$ and $x$, respectively
(see for an illustration in \Cref{fig04}, the lower part);
this way the uncovered vertex $x$ is swapped to be covered together with $u_1$ while the center and the other satellite of $S'$ become uncovered.
Consequently, $S'$ cannot be a companion for any other internal ($3^+$- or $2$-) star $S''$,
since otherwise operation {\sc Trade-$3$} would be applicable to trade $S, S'$ and $S''$ to cover at least one more vertex.
(We note again that there can be multiple companions for $S$, for example, when $S^*$ is a $3^+$-star.
Nevertheless, identifying one such companion is sufficient.)

Consider the next case where $v$ is the center of $S^*$, i.e., $S^*$ is Type-3, and let $u_2$ be a covered satellite of $S^*$.
If $u_2$ is a satellite in an internal star $S'$ (upper left part of \Cref{fig04}),
then $S'$ is a $2$-star and $\tau(S') = 0$ since otherwise operation {\sc Trade-$2$} would be applicable to trade $S$ and $S'$
with two stars centered at the centers of $S$ and $v$, respectively, and a star centered at the center or the other satellite of $S'$.
We call $S'$ a companion for $S$ identified through the adjacency $v$-$u_2$ in $S^*$.
We remark that such a companion has the same property as proven in the last paragraph, that it cannot be a companion for another internal star.
(We note again that there can be multiple companions for $S$, for example, when $S^*$ has more such covered satellites as $u_2$.
Nevertheless, identifying one such companion is sufficient.)

If the covered satellite of $S^*$ is the center $c$ of an internal $3^+$-star $S'$,
then the adjacency $v$-$c$ in $S^*$ is called the {\em critical edge} from (the center of) $S^*$ into (the center of) $S'$.
One sees that, for every internal $3^+$-star such as $S'$, there is at most one critical edge into it.

If the covered satellite of $S^*$ is the center $c$ of an internal $2$-star $S'$ (upper/middle right part of \Cref{fig04}),
then there are three possibilities based on the tokens $S'$ receives:
\begin{itemize}
\parskip=0pt
\item[1)] $\tau(S') = 2$.
	In this scenario, the two satellites of $S'$ cannot be in the same optimal star $S^*_1$, or one satellite receives two tokens in $S^*_1$,
	since otherwise operation {\sc Trade-$2$} would be applicable to trade $S$ and $S'$ with three stars centered at the centers of $S$, $S^*$ and $S^*_1$, respectively,
	to cover at least two more vertices.
	The edge $v$-$c$ is called the {\em critical edge} from (the center of) $S^*$ into (the center of) $S'$.
	Since each of the two satellites of $S'$ receives a token, $S'$ has at least two companions by the second case of a heavy star ({\bf Case 1} above).
\item[2)] $\tau(S') = 0$.
	In this scenario the edge $v$-$c$ is also called the {\em critical edge} from (the center of) $S^*$ into (the center of) $S'$.
\item[3)] $\tau(S') = 1$ and assume the satellite $v'$ of $S'$ receives a token in an optimal star $S^*_1$.
	The other satellite of $S'$ cannot be in the same optimal star $S^*_1$,
	since otherwise operation {\sc Trade-$2$} would be applicable to trade $S$ and $S'$ with three stars centered at the centers of $S$, $S^*$ and $S^*_1$, respectively,
	to cover at least two more vertices.

	One sees that in this scenario, the satellite $v$ of the $3^+$-star $S$ can be {\em cut off} $S$ and then {\em add to} the star $S'$ to {\em expand} $S'$ into a $3$-star.
	This is exactly an augmenting configuration (AC) found for the triplet $(v, S', v)$, which can readily be used to {\em move} a satellite of $S$ to $S'$.
	Therefore, we can repeat all the above arguments on $v$ to $v'$.
	This way, due to {\sc Trade-along-AC} being not applicable,
	either a companion is identified for $S$, or a critical edge into another internal star is identified,
	or lastly another internal $2$-star $S''$ with $\tau(S'') = 1$ is found to extend the AC and further repeat the arguments.%
\footnote{For the completeness of the same argument applicable repeatedly, assume there are multiple internal $2$-stars on the AC from $S$ to $S'$.
	Let $x'$ denote the uncovered vertex of which the token is transferred to $v'$ in an optimal star $S^*_1$.
	If $x'$ is the center of $S^*_1$, i.e., $S^*_1$ is Type-2,
	then every satellite $u'_1$ of $S^*_1$ is a satellite in an internal $2$-star $S'_1$
	as otherwise operation {\sc Trade-along-AC} would be applicable in which the operation {\sc Trade-$2$} trades $S'$ and $S'_1$
	with the stars centered at the center of $S'$, $x'$ and $u'_1$, respectively, to cover at least one more vertex.
	If the other satellite of $S'_1$ receives one token and $S'_1$ is on the AC, then at the time the AC extended to $S'_1$,
	{\sc Trade-along-AC} was applicable in which the operation {\sc Trade-$2$} traded $S'_1$ and $S'$
	with the stars centered at the center of $S'$, $x'$ and this satellite of $S'_1$, respectively, to cover at least two more vertices.
	This proves that if $S'_1$ in not on the AC, i.e., newly discovered,
	then $\tau(S'_1) = 0$ and $S'$ is called a companion for $S$ identified through the path $v$-$c$-$\ldots$-$v'$-$x'$-$u'_1$.
	The other case where $v'$ is the center of the optimal star $S^*_1$ is discussed similarly, in which the contradiction made by {\sc Trade-$i$} is replaced by {\sc Trade-along-AC}.}
	Also due to {\sc Trade-along-ACs} being not applicable, no identified companion for $S$ is shared by any other internal star.
\end{itemize}
We conclude from the finiteness of the input graph that every satellite of the internal $3^+$-star $S$ receiving one token leads, possibly through an AC, to either
1) a companion $S'$ with $\tau(S') = 0$ or
2) a critical edge into a $2$-star $S'$ with $\tau(S') = 0$ or
3) a heavy $2$-star $S'$ with at least two companions or
4) a critical edge into a $3^+$-star $S'$.
There is at most one critical edge into any internal star, and a $2$-star can be a companion of at most one heavy $2$-star or a $3^+$-star.

Let $\alpha$ denote the total number of satellites of the internal $3^+$-stars each receiving one token,
$\beta_0$ denote the total number of internal $2$-stars each being a companion for some $3^+$-star,
$\beta_1$ denote the total number of internal heavy $2$-stars each having an incoming critical edge,
$\beta_2$ denote the total number of internal $2$-stars each being a companion for one of these $\beta_1$ heavy $2$-stars, and
$\beta_3$ denote the total number of internal $3^+$-stars.
Let $\mcS$ denote the collection of all the above internal stars,
and thus $\mcP \setminus \mcS$ contains the rest of the heavy $2$-stars each associated with at least one companion, and the rest of the non-heavy $2$-stars.

We have $\alpha \le \beta_0 + \beta_1 + \beta_3$, $2 \beta_1 \le \beta_2$, and the total tokens received by $\mcS$ is
\[
\tau(\mcS) = \alpha + 2 \beta_1 \le \beta_0 + \beta_1 + \beta_3 + \beta_2 = |\mcS|.
\]
We also have $\tau(\mcP \setminus \mcS) \le |\mcP \setminus \mcS|$.
Therefore we have proved that $\tau(\mcP) \le |\mcP|$, %, i.e., on average each internal star $S$ has $\tau(S) \le 1$.
and finally since every internal star contains at least three vertices,
\[
\frac {|V(\mcQ)|}{|V(\mcP)|} \le 1 + \frac {\tau(\mcP)}{|V(\mcP)|} \le 1 + \frac {|\mcP|}{|V(\mcP)|} \le 1 + \frac 13 = \frac 43.
\]
This proves the approximation ratio.

For the running time, one sees the most expensive operation is {\sc Trade-along-ACs} of which the inside operation {\sc Trade-$3$} is executed, and it takes $O(|V|^{14.5})$ time.
Since there can be $O(|V|)$ iterations, the overall running time is $O(|V|^{15.5})$.
\end{proof}

\section{Approximating the {\sc $[2, \ell]$-Star Packing} problem when $\ell > 2$}
%==================================================================================================
In the {\sc $[k, \ell]$-Star Packing} problem, where $2 \le k < \ell$ and both $k$ and $\ell$ are fixed integers,
$\ell - k + 1$ $(\ge 2)$ distinct candidate stars are allowed, thereby reduces to the {\em weighted} $(\ell + 1)$-set packing problem.
It thus admits a $(1 + \frac {\ell}2)$-approximation algorithm for all $\ell > 2$ (and a slightly better ratio of $1 + 0.4896 \cdot \ell$ when $\ell \ge 13$)~\cite{TW23}.
In this and the next sections, we propose direct and improved approximation algorithms.
We consider $k = 2$ in this section, and present a $(1 + \frac {\ell}{\ell+1})$-approximation algorithm for all $\ell > 2$.

\subsection{The local search algorithm}
%--------------------------------------------------------------------------------------------------
Similarly as in the last section, we first define the local improvement operations to be repeatedly executed, followed by the performance analysis.
We re-use the names of the operations in the last section since they are very similar, respectively,
except that now the number of satellites in candidate stars is upper unbounded by the constant $\ell$.

Recall that given a $[k, \ell]$-star packing $\mcP$ in the input graph $G = (V, E)$,
i.e., $\mcP$ is a collection of vertex-disjoint stars (called {\em internal}) each with a number of satellites in the given interval $[k, \ell]$,
$V(\mcP)$ denotes the set of vertices in (or, {\em covered} by) the stars of $\mcP$ and the induced subgraph $G[V \setminus V(\mcP)]$ is the {\em remainder} graph with respect to $\mcP$.

\begin{definition}[Operation {\sc Collect}]
\label{def11}
Given a vertex $v$ in the remainder graph, if its degree in the remainder graph is $d \ge k$,
then the operation extracts an $i$-star centered at $v$, where $i = \min\{d, \ell\}$, from the remainder graph and adds it to the current solution $\mcP$.
({\em Comment}: Arbitrary $i$ of the $d$ neighbors of $v$ in the remainder graph are selected as the satellites of the extracted $i$-star;
if $d < k$, then the operation is not applicable on the vertex $v$.)
\end{definition}

The algorithm, denoted as {\sc Approx$[2, \ell]$}, starts with the empty solution,
and in the first a few iterations {\sc Collect} (using $k = 2$) is repeatedly applied on vertices of degree at least $2$ in the remainder graph until impossible.
Note that the degrees of all the vertices can be preprocessed in $O(|E|)$ time;
looking for a vertex and applying the {\sc Collect} operation takes $O(|V|)$ time, followed by updating the degrees of the uncovered vertices in $O(|V|)$ time
(since at most $\ell$ vertices are affected).
That is, each operation {\sc Collect} and an iteration executes the operation takes $O(|V|)$ time.
When no {\sc Collect} operation is applicable, the maximum degree of the remainder graph is at most $1$,
which is stated in the following Observation~\ref{obs05} analogous to Observation~\ref{obs01}.

\begin{observation}
\label{obs05}
When no {\sc Collect} operation is applicable,
\begin{enumerate}
\parskip=0pt
\item the maximum degree of the remainder graph is at most one;
\item the center of an internal $(\ell-1)^-$-star is not adjacent to any uncovered vertex.
\end{enumerate}
\end{observation}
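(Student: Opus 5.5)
Both statements are direct consequences of the termination condition for {\sc Collect} (Definition~\ref{def11} with $k = 2$), together with the post-processing convention carried over from the previous section. I would prove each item by contradiction: assume the configuration exists and exhibit an applicable {\sc Collect} operation or post-processing step.

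For item~1, suppose some uncovered vertex $v$ has degree $d \ge 2$ in the remainder graph $R(\mcP)$. Then {\sc Collect} applies at $v$. It extracts an $i$-star with $i = \min\{d, \ell\}$, and since $\ell > 2$ we have $2 \le i \le \ell$, so this is a feasible $[2, \ell]$-star. Hence every uncovered vertex has degree at most $1$ in the remainder graph, i.e., the remainder graph is a matching plus isolated vertices.

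For item~2, the argument is not purely about {\sc Collect}. It relies on the post-processing rule stated after Observation~\ref{obs01}, which I would adopt for {\sc Approx$[2, \ell]$}, restricted to stars that still have room. Whenever an uncovered vertex $u$ is adjacent to the center of an internal $i$-star with $i \le \ell - 1$, $u$ is added as a new satellite. The result is an $(i+1)$-star with $i + 1 \le \ell$, so it remains feasible and strictly more vertices are covered. Repeating this until impossible, which takes $O(|V|)$ steps in total because each step covers a new vertex, guarantees that no center of an internal $(\ell-1)^-$-star has an uncovered neighbor.

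For the centers of $\ell$-stars, adding a satellite would create an $(\ell+1)$-star, which is infeasible. This is why the statement restricts to $(\ell-1)^-$-stars, and it is the only point needing care. The main obstacle is therefore just making explicit that this absorption step is part of the algorithm's post-processing, bounded by $\ell$, so that the item holds whenever no {\sc Collect} or absorption is applicable.
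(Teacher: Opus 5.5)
Your proposal is correct and matches the paper's reasoning. Item~1 is exactly the termination condition of {\sc Collect}, and $\min\{d,\ell\}\in[2,\ell]$ makes the extracted star feasible. Item~2 is guaranteed by the post-processing step that adds any uncovered neighbor of the center of an internal $(\ell-1)^-$-star as a new satellite; the paper states this rule explicitly right after the observation, just as you reconstructed it.
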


One sees that when no {\sc Collect} operation is applicable, each connected component in the remainder graph is either a vertex or a $1$-star.
As we define more operations below, we remark that after applying one of them in an iteration,
if there is an uncovered vertex adjacent to the center of an internal $(\ell-1)^-$-star, then it is added to the star to become covered,
or if a {\sc Collect} operation becomes applicable, then it is applied to extract and add a feasible star to the current solution.
Such post-processing is executed before an iteration ends, ensuring Observation~\ref{obs05} holds always.

We define operation {\sc Trade-$i$} exactly the same as in \Cref{def02}, except that this time the feasible stars have at least two and at most $\ell$ satellites.
Nevertheless, the algorithm {\sc Approx$[2, \ell]$} uses operation {\sc Trade-$i$} for $i = 1, 2$ only.
In the context of {\sc $[2, \ell]$-Star Packing}, we make new observations below.

\begin{observation}
\label{obs06}
When operation {\sc Trade-$1$} is not applicable,
\begin{enumerate}
\parskip=0pt
\item a satellite of any internal star is adjacent to at most one uncovered vertex;
\item if an internal $2$-star has a satellite adjacent to an outside $1$-star,
	then the other satellite is not adjacent to any uncovered vertex other than those two on the outside $1$-star;
\item at most one adjacency exists among a satellite of an internal $3^+$-star and two uncovered vertices;
\item at most one adjacency exists among two satellites of an internal $4^+$-star and an uncovered vertex;
\item if the center of an internal $3$-star is adjacent to an uncovered vertex,
	then at most one adjacency exists among two satellites of this $3$-star and another uncovered vertex.
\end{enumerate} 
\end{observation}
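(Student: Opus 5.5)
Each item is proved by contradiction. If the stated configuration occurred, we would remove the single internal star involved and extract at most two feasible stars (each with between $2$ and $3 \le \ell$ satellites) that cover strictly more vertices. That is an applicable {\sc Trade-$1$} operation, contradicting the assumption. Throughout, $S$ denotes the internal star, $c$ its center and $t \in [2,\ell]$ its number of satellites. Recall from Observation~\ref{obs05}(1) that every component of the remainder graph is a single vertex or a $1$-star. Also note that any path on three vertices is a $2$-star centered at its middle vertex, which is feasible since $k=2$.

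For item~1, suppose a satellite $s$ of $S$ is adjacent to two uncovered vertices $x,y$.
\begin{itemize}
\item If $t \ge 3$, replace $S$ by $S - s$, which still has at least $2$ satellites, together with the $2$-star centered at $s$ with satellites $x,y$. This gains two vertices.
\item If $t=2$, with other satellite $s'$, replace $S$ by the $3$-star centered at $s$ with satellites $c,x,y$. We lose $s'$ and gain $x,y$, a net gain of one vertex. The star is feasible because $\ell \ge 3$.
\end{itemize}

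For item~2, let $S$ be a $2$-star with satellites $v_1,v_2$, where $v_1$ is adjacent to the vertex $a$ of an outside $1$-star $ab$. Suppose $v_2$ is adjacent to an uncovered vertex $z \notin \{a,b\}$. Trade $S$ for two stars: the $2$-star centered at $a$ with satellites $b$ and $v_1$, and the $2$-star centered at $v_2$ with satellites $c$ and $z$. These are vertex-disjoint and cover six vertices instead of three.

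For items~3 and~4, suppose the triple in question spans at least two adjacencies. Then it contains a path on three vertices, hence a $2$-star $P$.
\begin{itemize}
\item In item~3, the triple is a satellite $s$ of a $3^+$-star together with two uncovered vertices. Removing $s$ from $S$ leaves at least $2$ satellites, and adding $P$ gains two vertices.
\item In item~4, the triple is two satellites $s_1,s_2$ of a $4^+$-star together with one uncovered vertex. Removing $s_1,s_2$ from $S$ leaves at least $2$ satellites, and adding $P$ gains one vertex.
\end{itemize}

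For item~5, let $S$ be a $3$-star whose center $c$ is adjacent to an uncovered vertex $y$. By Observation~\ref{obs05}(2) this can only happen when $\ell = 3$. Suppose two satellites $s_1,s_2$ and an uncovered vertex $x \ne y$ span two adjacencies, forming a $2$-star $P$ as above. Let $s_3$ be the third satellite. Trade $S$ for $P$ and the $2$-star centered at $c$ with satellites $s_3$ and $y$. These cover six vertices instead of four.

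Each item is thus a direct contradiction with {\sc Trade-$1$} being inapplicable. The only points needing care are the following.
\begin{itemize}
\item Every extracted star must have between $2$ and $\ell$ satellites. This holds since all of them have $2$ or $3$ satellites and $\ell \ge 3$.
\item The extracted stars must be vertex-disjoint. This requires checking the distinctness assumptions, such as $z \notin \{a,b\}$ and $x \ne y$.
\end{itemize}
The tight case is $t=2$ in item~1, where the gain is only one vertex and the trade needs a $3$-star, so it is exactly where the assumption $\ell > 2$ is used.
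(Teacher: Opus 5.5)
Your proof is correct and follows the paper's approach. The paper writes out only item~1, using the same two trades you give: splitting $v$ off a $3^+$-star into a $2$-star with its two uncovered neighbours, or replacing a $2$-star by a $3$-star centered at $v$. It dismisses the remaining items as similarly simple {\sc Trade-$1$} contradictions, and you have supplied those correctly; your closing ``tight case'' remark is slightly off, since item~4 also gains only one vertex, but this does not affect the argument.
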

\begin{proof}
We provide a proof to the first statement.
One will see that it is simple enough and perhaps can be skipped.
The proofs of the other items and the next a few observations are skipped due to their similar levels of simplicity.

For the first item, assume to the contrary that a satellite $v$ of an internal star $S$ is adjacent to two uncovered vertices $x_1$ and $x_2$.
If $S$ is a $3^+$-star, then $v$ can be cut off from $S$ and form with $x_1$ and $x_2$ into a $2$-star;
this is an operation {\sc Trade-$1$} to trade $S$ with two stars.
If $S$ is a $2$-star, then operation {\sc Trade-$1$} is applicable to trade $S$ with a $3^+$-star centered at $v$.
Both result in a contradiction.
\end{proof}

\begin{observation}
\label{obs07}
When operation {\sc Trade-$2$} is not applicable,
\begin{enumerate}
\parskip=0pt
\item at most one adjacency exists among an uncovered vertex and two satellites in two distinct internal $3^+$-stars;
\item if two adjacencies exist among an uncovered vertex, a satellite of a $3^+$-star, and a satellite of a $2$-star $S$,
	then the other satellite of $S$ is not adjacent to any other uncovered vertex;
\item if an uncovered vertex is adjacent to the center of an internal $\ell$-star $S$,
	then no satellite of $S$ is adjacent to the center of any internal $(\ell - 1)^-$-star.
\end{enumerate} 
\end{observation}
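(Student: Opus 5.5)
All three items of Observation~\ref{obs07} are proved the same way: assume the forbidden configuration and exhibit a {\sc Trade-$2$} that covers strictly more vertices, contradicting termination. Throughout, recall that $k = 2$, that stars have between $2$ and $\ell$ satellites, and that Observations~\ref{obs05} and \ref{obs06} are in force. By Observation~\ref{obs05}(1), every uncovered vertex has at most one uncovered neighbour.

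For item 1, let $x$ be uncovered, and let $v_1, v_2$ be satellites of distinct internal $3^+$-stars $S_1, S_2$. Suppose there are two adjacencies among $x, v_1, v_2$.
\begin{itemize}
\item If $x$ is adjacent to both $v_1$ and $v_2$, cut $v_1$ and $v_2$ off their stars. The results $S_1 - v_1$ and $S_2 - v_2$ remain $2^+$-stars, and $x$ becomes the center of a new $2$-star on $\{v_1, v_2\}$. This gains one covered vertex.
\item Otherwise $v_1 v_2$ is an edge and, say, $x v_1$ is an edge. Then $v_1$ becomes the center of a $2$-star on $\{x, v_2\}$, again gaining $x$.
\end{itemize}
Both are {\sc Trade-$2$} operations on $S_1, S_2$.

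For item 2, suppose $x$ is uncovered, $v_1$ is a satellite of a $3^+$-star $S_1$, and $v$ is a satellite of a $2$-star $S$ with center $c$ and other satellite $w$. Assume there are two adjacencies among $x, v_1, v$, and that $w$ is adjacent to an uncovered vertex $y \neq x$.

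First handle the subcase where $x$ is adjacent to both $v_1$ and $v$. We release $S$ and cut $v_1$ from $S_1$, which leaves $S_1 - v_1$ feasible. We then want two new stars: one on $\{x, v_1, v\}$ centered at $x$, and one on $\{c, w, y\}$ centered at $w$. The second star needs $wc$, which is an edge, and $wy$, which is an edge by assumption. Together they cover one more vertex, namely $y$.

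The remaining subcases are symmetric and are handled by re-centering the small star at the vertex of degree two in the path. For example, suppose the edges are $v_1 v$ and $v x$. Take a star centered at $v$ on $\{v_1, x\}$, and a star centered at $w$ on $\{c, y\}$.

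Care is needed when $y$ is adjacent to $x$, since they might form an outside $1$-star. This does not break the construction, because we only require $y \neq x$ and $y$ uncovered.

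For item 3, let $x$ be uncovered and adjacent to the center $c$ of an $\ell$-star $S$. Suppose a satellite $v$ of $S$ is adjacent to the center $c'$ of an internal star $S'$ with at most $\ell-1$ satellites. Move $v$ from $S$ into $S'$; this is feasible since $|S'|$ stays at most $\ell$. Then $S$ has $\ell - 1$ satellites, so $x$ can be added to it. This gains $x$ and is a {\sc Trade-$2$} on $S, S'$.

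The only step needing care is enumerating the adjacency patterns in item 2. One must check each time that the $2$-star $S$ can be rebuilt around $w$ using $c$ and $y$, and that cut stars stay $2^+$. I expect this bookkeeping, rather than any genuine difficulty, to be the main work.
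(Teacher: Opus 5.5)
Your proof is correct and follows the intended argument: the paper skips these proofs as routine, modeled on its proof of Observation~\ref{obs06}(1), and in each item you exhibit an improving {\sc Trade-$2$} with at most three feasible new stars. One small slip: in item 2 your construction actually gains two vertices, $x$ and $y$, not just $y$, which only strengthens the contradiction.
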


We also define an {\em augmenting configuration} (AC) in \cref{def13} below.
We emphasize two major differences from \Cref{def07} that,
the AC here is to ``bring'' an uncovered vertex from a ``distant'' place over while the AC in \cref{def07} is to ``move'' a satellite from a ``distant'' star over,
and the AC here consists of a sequence of ``largest'' $\ell$-stars while the AC in \cref{def07} consists of a sequence of ``smallest'' $k$-stars.
\begin{definition}[Triplet]
\label{def12}
A {\em triplet} $(u, S, v)$ consists of an uncovered vertex $u$, an internal $(\ell - 1)^-$-star $S$, and a satellite $v$ of an internal $\ell$-star.
\end{definition}

\begin{definition}[Augmenting configuration, or AC]
\label{def13}
Given a triplet $(u, S, v)$, if there is a sequence of $j \ge 1$ internal $\ell$-stars $S_1$, $S_2$, $\ldots$, $S_j$ such that
$u$ is adjacent to the center of $S_1$, a satellite of $S_i$ is adjacent to the center of $S_{i+1}$ for every $i = 1, 2, \ldots, j-1$, and $v$ is a satellite of $S_j$,
then we say this sequence of stars $\langle S_1, S_2, \ldots, S_j, S \rangle$ form into an {\em augmenting configuration} abbreviated as {\em AC} for the triplet $(u, S, v)$,
see for an illustration in \Cref{fig05}.
\end{definition}

\begin{figure}[ht]
\centering\scalebox{1.0}{
  \setlength{\unitlength}{1bp}%
  \begin{picture}(314.46, 66.38)(0,0)
  \put(0,0){\includegraphics{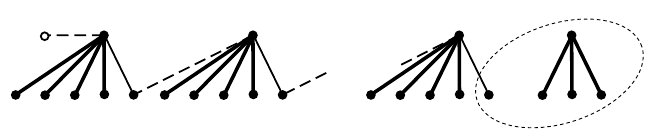}}
  \put(51.40,51.82){\fontsize{11.38}{13.66}\selectfont $c_1$}
  \put(164.57,33.92){\fontsize{11.38}{13.66}\selectfont $\ldots$}
  \put(122.90,51.82){\fontsize{11.38}{13.66}\selectfont $c_2$}
  \put(221.91,51.82){\fontsize{11.38}{13.66}\selectfont $c_j$}
  \put(275.88,51.82){\fontsize{11.38}{13.66}\selectfont $c$}
  \put(11.39,39.91){\fontsize{11.38}{13.66}\selectfont $u$}
  \put(27.59,8.96){\fontsize{11.38}{13.66}\selectfont $S_1$}
  \put(99.01,8.96){\fontsize{11.38}{13.66}\selectfont $S_2$}
  \put(198.98,8.96){\fontsize{11.38}{13.66}\selectfont $S_j$}
  \put(270.44,8.96){\fontsize{11.38}{13.66}\selectfont $S$}
  \put(230.66,8.96){\fontsize{11.38}{13.66}\selectfont $v$}
  \end{picture}%
}
\captionsetup{width=.95\textwidth}
\caption{An augmenting configuration (for $\ell = 5$) for a triplet $(u, S, v)$, in which there is a sequence of $j \ge 1$ internal $\ell$-stars $S_1, S_2, \ldots, S_j$.
	The solid edges are in these internal $\ell$-stars while the dashed edges are in the input graph but not in these $\ell$-stars.
	Swapping the dashed edges into the stars with the thin solid edges makes $u$ covered and $v$ uncovered,
	and $v$ can provide new opportunities for local improvement operations.\label{fig05}}
\end{figure}

For the current solution $\mcP$, we can implicitly construct a digraph $\mcH$ on the internal $\ell$-stars,
such that if an $\ell$-star $S_1$ has a satellite adjacent to the center of another $\ell$-star $S_2$ then $S_1$ has an arc to $S_2$.
For a triplet $(u, S, v)$, when an AC is found, which can be done by a breadth-first-search traversal in $O(|V|^2)$ time, see for an illustration in \Cref{fig05},
then $u$ and $v$ can be swapped to become covered and uncovered, respectively, thus providing new opportunities for improvement.
This is the operation {\sc Trade-along-AC} defined next.

\begin{definition}[Operation {\sc Trade-along-AC}]
\label{def14}
Given a triplet $(u, S, v)$, let $\mcS$ denote a collection consisting of $S$ and $i-1$ other stars of $\mcP$, for some $1 \le i \le 2$.
If a {\sc Trade-$i$} operation is applicable on $\mcS$ to cover $v$ but not $u$,
and there exists an AC for the triplet $(u, S, v)$ with all the intermediate $\ell$-stars in $\mcP \setminus \mcS$,
then the operation swaps $u$ and $v$ to be covered and uncovered, respectively, and executes the {\sc Trade-$i$} operation on $\mcS$.
\end{definition}

The above operation {\sc Trade-along-AC} is defined to allow for an internal {\sc Trade-$2$} operation, to be used in the next section where $k \ge 3$.
In this section, only {\sc Trade-$1$} is needed, that is, whether {\sc Trade-$1$} is applicable on $S$,
and in fact we only examine for whether $v$ is adjacent to the center of $S$ so that $v$ can be added as a new satellite to $S$ (see for an illustration in \Cref{fig05}).
This ensures an extended Observation~\ref{obs07}(3) where the single $\ell$-star is replaced by the sequence of $\ell$-stars in an AC.

Note that those triplets sharing the same uncovered vertex $u$ can be examined for an AC via a single breadth-first-search traversal in $O(|V|^2)$ time.
It follows that finding and executing a {\sc Trade-along-AC} operation, inside which the triplet $(u, S, v)$ meets the requirement that $v$ is adjacent to the center of $S$,
takes $O(|V|^{3 + 3.5 + 1}) = O(|V|^{7.5})$, an order of time less than the operation in \cref{def08}.
Operations {\sc Trade-$i$} (for $i = 1, 2$) and {\sc Trade-along-AC} are collectively referred to as {\sc Trade}.
Using the {\sc Collect} and {\sc Trade} operations, our algorithm {\sc Approx$[2, \ell]$} repeatedly applies them whenever possible to update the solution.
When none of the operations is applicable, the algorithm returns the achieved solution $\mcP$ as the final solution to the {\sc $[2, \ell]$-Star Packing} problem.

\subsection{Performance analysis}
%--------------------------------------------------------------------------------------------------
Like before we fix an optimal star packing denoted as $\mcQ$ for discussion, the stars in which are referred to as {\em optimal} stars.
We use $\mcP$ to denote the computed star packing, and its stars are called {\em internal} stars;
A vertex is said {\em covered} or {\em uncovered} with respect to $\mcP$.

Using Observation~\ref{obs05}, we categorize the optimal stars into three types exactly the same as in \Cref{def03}, but this time setting $k = 2$.
%For the completeness, we repeat the three types below.
%
%\begin{definition}[Types of optimal stars]
%\label{def15}
%The optimal stars are categorized into three types:
%\begin{itemize}
%\parskip=0pt
%\item In a Type-1 optimal star, the center and one of its satellites are uncovered.
%\item In a Type-2 optimal star, only the center is uncovered.
%\item In a Type-3 optimal star, the center is covered.
%\end{itemize}
%\end{definition}
%
%
We associate a token with each uncovered vertex of $V(\mcQ)$,
and {\em transfer} the tokens of an optimal star in wholes to the covered vertices in the same optimal star according to the following {\em token-transfer rules}.
\begin{definition}[Token-transfer rules]
\label{def15}
\begin{itemize}
\parskip=0pt
\item
	A Type-1 optimal star has two tokens and its all but one satellite are covered.
	We transfer the two tokens to a covered satellite, priority given to one that is the center of an internal star.
\item
	A Type-2 optimal star has one token and its satellites are all covered.
	We transfer the token to a satellite, priority given to one that is the center of an internal star.
\item
	In a Type-3 optimal star, we transfer all the tokens associated with the uncovered satellites, if any, to the center.
\end{itemize}
\end{definition}

We remark a major difference from those rules in \Cref{def09} being that this time some covered vertex has higher priority in receiving the tokens.
In more details, in \cref{def09} the covered satellites in Type-1 and Type-2 optimal stars have the equal chance for receiving tokens,
but in \cref{def15} one that is the center of an internal star has priority.
Another difference is here a Type-3 optimal star can have any number of tokens, unlike in \cref{def03} where the number of tokens is upper unbounded by $k-1 = 1$.

\begin{observation}
\label{obs08}
By the token-transfer rules, in an optimal star, the vertex receiving tokens and all the uncovered vertices induce a connected subgraph inside the optimal star.
\end{observation}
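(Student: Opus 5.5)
This is a case check over the three types of optimal stars in \Cref{def15}. In each case I will locate the recipient vertex $w$ and show that $w$ is joined, inside the optimal star $S^*$, to the uncovered vertices. The union of $w$ with those uncovered vertices then induces a connected subgraph of $S^*$. The one fact used beyond the rules themselves is that the recipient is always chosen inside $S^*$ among its covered vertices. So connectivity reduces to whether $w$ is adjacent to, or shares a neighbor in $S^*$ with, the uncovered vertices.

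\textbf{Type-1 and Type-2 stars.} The uncovered vertices are the center $x$ of $S^*$ together with possibly one satellite. That satellite exists in the Type-1 case, and by Observation~\ref{obs05}(1) there is at most one. These uncovered vertices induce a connected subgraph, since they are the center and one of its satellites. The recipient $w$ is a covered satellite of $S^*$ and is therefore adjacent to the center $x$. Adding $w$ to the uncovered set keeps it connected. The priority clause in the rule only changes which satellite is chosen, so it does not affect this argument.

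\textbf{Type-3 stars.} Here the center $c^*$ of $S^*$ is covered, and every uncovered vertex of $S^*$ is a satellite of $S^*$. All tokens go to $c^*$, and $c^*$ is adjacent to every satellite of $S^*$. Hence $c^*$ together with the uncovered satellites induces a star, which is connected. If there are no uncovered satellites, there are no tokens and the statement is vacuous. The same holds trivially when the uncovered set consists of a single vertex.

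I do not expect a real obstacle. The one point to handle carefully is the Type-1 case, to confirm that the uncovered satellite and the center are adjacent, and indeed they are as center and satellite of the same star. It is also worth recording, as a consequence mirroring Observation~\ref{obs03}(1), that the recipient is always adjacent to an uncovered vertex of $S^*$. This follows because in the Type-1 and Type-2 cases the recipient is adjacent to the uncovered center $x$, and in the Type-3 case $c^*$ is adjacent to each uncovered satellite.
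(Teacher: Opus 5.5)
Your proposal is correct and takes the same approach the paper has in mind. The paper states this as an immediate observation without proof, and the natural justification is exactly your case check over the rules of \Cref{def15}: the vertex set in question always contains the center of the optimal star, namely the uncovered center in Type-1 and Type-2 stars and the covered recipient center in Type-3 stars, so it induces a substar and is therefore connected.
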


\begin{theorem}
\label{thm03}
The algorithm {\sc Approx$[2, \ell]$} is an $O(|V|^{8.5})$-time $(1 + \frac \ell{\ell+1})$-approximation algorithm for the {\sc $[2, \ell]$-Star Packing} problem, for any $\ell \ge 3$.
\end{theorem}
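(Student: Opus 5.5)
Running time first: {\sc Trade-$1$} and {\sc Trade-$2$} cost at most $O(|V|^{6.5})$ each, and {\sc Trade-along-AC} (restricted as described) costs $O(|V|^{7.5})$. Each operation covers at least one more vertex, so there are $O(|V|)$ iterations and the total is $O(|V|^{8.5})$. For the ratio I will use the token scheme of \Cref{def15}. The goal is to prove that every internal star $S$ satisfies $\tau(S) \le \frac{\ell}{\ell+1}|V(S)|$, possibly after amortizing over small groups of stars. Then $|V(\mcQ)| \le |V(\mcP)| + \tau(\mcP) \le (1+\frac{\ell}{\ell+1})|V(\mcP)|$. Note that an $i$-star has $i+1$ vertices, so for an $\ell$-star the bound is exactly $\ell$ tokens, while for an $i$-star with $i<\ell$ it is $\frac{\ell(i+1)}{\ell+1} > i$. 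Hence the real target is: an internal $i$-star receives at most $i$ tokens for $i<\ell$, and at most $\ell$ for $i=\ell$. This is equivalent to saying that on average each satellite "pays" for one token and the center for none. A small slack, e.g.\ $\frac{\ell}{\ell+1}\cdot 3 - 2 = \frac{\ell-2}{\ell+1}$ for $2$-stars, may absorb boundary cases.

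\emph{Centers.} By Observation~\ref{obs05}(2), the center of an $(\ell-1)^-$-star is not adjacent to any uncovered vertex. By Observation~\ref{obs08}, a token recipient is adjacent to an uncovered vertex, so such a center receives nothing. The center $c$ of an $\ell$-star $S$ may receive tokens: in a Type-1 or Type-2 optimal star $S^*$ as a satellite, or in a Type-3 one as its center with uncovered satellites. The priority rule in \Cref{def15} makes internal centers absorb tokens, so I must charge these to the satellites of $S$.

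\emph{Satellites.} By Observation~\ref{obs06}(1), a satellite is adjacent to at most one uncovered vertex. It can therefore receive two tokens only in a Type-1 optimal star, where the uncovered center $x$ has an uncovered satellite $y$, i.e.\ an outside $1$-star. The priority rule means no satellite of $S^*$ was an internal center. I would then argue with Observations~\ref{obs06}(2)--(5) and \ref{obs07}, via {\sc Trade-$1$}/{\sc Trade-$2$} non-applicability, that the other satellites of the same internal star receive nothing. For example, in a $2$-star, Observation~\ref{obs06}(2) leaves the other satellite adjacent only to $x,y$, which were already used. So each internal $(\ell-1)^-$-star gets at most $i$ tokens, with $2$ tokens on one satellite compensated by $0$ on another. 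For a $3^+$-star with $i<\ell$, a satellite $v$ adjacent to $x$ with $xy$ an outside $1$-star could simply join $x$'s star via {\sc Trade-$1$} (cut $v$ off and form $x$ with $v,y$), so it cannot receive two tokens.

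\emph{$\ell$-stars, the main obstacle.} Here the center $c$ can collect up to $\ell$ tokens in a Type-3 optimal star centered at $c$ with uncovered satellites. In addition, satellites can receive tokens. I would show that if an uncovered vertex $u$ is adjacent to $c$, then no satellite $v$ of $S$ can receive a token that would push the total over $\ell$. A satellite receiving a token sits in an optimal star with an uncovered neighbor, or is adjacent to some $(\ell-1)^-$-star center $c'$. In the latter case Observation~\ref{obs07}(3), extended along ACs because {\sc Trade-along-AC} is not applicable, forbids $v$ from being adjacent to $c'$ when $u$ is adjacent to $c$. The chain of $\ell$-stars in \Cref{def13} handles the case where the token reaches $S$ indirectly, through satellites of other $\ell$-stars adjacent to $c$, i.e.\ through Type-3 optimal stars centered at $\ell$-star satellites whose covered satellites are centers of other $\ell$-stars. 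I would do a case split on whether $c$ is adjacent to uncovered vertices. If it is not, each satellite carries at most one token (the two-token case again violates Observations~\ref{obs06}/\ref{obs07}), so $\tau(S) \le \ell$. If it is, I bound the center's receipt by $\ell$ minus the number of satellites that can receive tokens, using the AC argument to rule out the dangerous configurations. This is the delicate step, and I expect it to require a careful enumeration.
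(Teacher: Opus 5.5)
\textbf{There is a genuine gap in the $\ell$-star case.} Your running-time bound is fine, and so is your treatment of $(\ell-1)^-$-stars and of $\ell$-stars whose center receives nothing. The problem is the per-star inequality $\tau(S)\le\ell$, which you plan to reach by ruling out ``dangerous configurations''. That inequality is false at a local optimum, so no enumeration can prove it. Here is a counterexample.
\begin{itemize}
\item Let $S$ be an $\ell$-star with center $c$ and satellites $s_1,\dots,s_\ell$.
\item Add uncovered vertices $u_1,\dots,u_\ell$ adjacent only to $c$, and an uncovered vertex $x$ adjacent only to $s_1$.
\item Add a $2$-star $S'$ with center $c'$ and satellites $u,u'$, plus the edge $s_1u$.
\end{itemize}
On these $2\ell+5$ vertices no star packing covers more than $\ell+4$. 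Hence $\mcP=\{S,S'\}$ is a terminal solution, and $\mcQ=\{(c;u_1,\dots,u_\ell),(s_1;x,u)\}$ is optimal. Both optimal stars are Type-3, so by \Cref{def15} $c$ receives $\ell$ tokens and $s_1$ receives one, with no tie-breaking freedom. Thus $\tau(S)=\ell+1>\frac{\ell}{\ell+1}|V(S)|$.

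This is exactly the right half of \Cref{fig06}. Observation~\ref{obs07}(3) and {\sc Trade-along-AC} do not exclude it, because $u$ is a satellite of $S'$, not its center. The small slack $\frac{\ell-2}{\ell+1}$ you mention for $2$-stars does not rescue this. What rescues it is that $\tau(S')=0$.

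\textbf{The missing idea is a non-local charging scheme.} In the paper, take an $\ell$-star whose center receives tokens, and a satellite $v$ of it that receives a token from $x$ in the optimal star $S^*$.
\begin{itemize}
\item $x$ is the only uncovered vertex of $S^*$, so $S^*$ has another covered vertex $u\neq v$ that receives nothing.
\item Observation~\ref{obs06}(4,5) puts $u$ outside $S$.
\item Observation~\ref{obs07}(1,3) rules out $u$ being a satellite of a $3^+$-star or the center of an $(\ell-1)^-$-star.
\end{itemize}
Two cases remain.
\begin{itemize}
\item $u$ is a satellite of a $2$-star with $\tau=0$. This is a companion; it is used at most twice, so it can absorb two whole tokens.
\item $u$ is the center of another $\ell$-star, reached by a critical edge. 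Each $\ell$-star has at most one incoming critical edge, and its center then receives nothing. If that $\ell$-star has a token-free satellite, that satellite absorbs the token. Otherwise the exploration continues from one of its satellites, and non-applicability of {\sc Trade-along-AC} takes over the role of Observation~\ref{obs07}(3) along the chain.
\end{itemize}
Counting gives $\beta\le\alpha_1+2\gamma$. Hence, on average over the charged groups, each $\ell$-star carries at most $\ell$ tokens and each $2$-star at most $2$.

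In your plan, augmenting configurations are used only to exclude configurations. They are actually needed to propagate the excess token along chains of $\ell$-stars to a star with spare capacity. Without this chain-and-companion amortization, the $\ell$-star case cannot be closed.
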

\begin{proof}
Note that each local search operation increases the number of covered vertices by at least one, and thus the executed total number of operations is in $O(|V|)$.
Among them, {\sc Trade-$2$} takes time in $O(|V|^{6.5})$ while {\sc Trade-along-AC} (with an internal {\sc Trade-$1$}) takes time in $O(|V|^{7.5})$.
We conclude that the overall time complexity of the algorithm is in $O(|V|^{8.5})$.

We next analyze how the vertices receiving non-zero tokens in the token-transfer process are distributed in the internal stars.
Firstly, Observations~\ref{obs05}, \ref{obs06} and~\ref{obs08} tell that the center of an internal $(\ell-1)^-$-star receives no token,
each satellite of a $3^+$-star receives at most $1$ token,
each satellite of an internal $2$-star receives at most $2$ tokens, 
while only the center of an internal $\ell$-star can receive three or more tokens by the third rule.

Consider first an $\ell$-star $S \in \mcP$ of which the center $c$ receives non-zero tokens, of which at least one is from its adjacent uncovered vertex
(see for illustrations in \Cref{fig06}).
Assume a satellite $v$ of $S$ receives the $1$ token from its adjacent uncovered vertex $x$ (cf. Observation~\ref{obs06}(1, 3)) in an optimal star $S^*$.
By the token-transfer rules, $x$ is the only uncovered vertex in $S^*$,
and thus $c \notin V(S^*)$, and furthermore $S^*$ has at least one other covered vertex $u \ne v$ which receives no token.
This vertex $u$ is not in the $\ell$-star $S$ by Observation~\ref{obs06}(4, 5), and therefore $u$ is in another internal star $S' \in \mcP$.
Due to Observation~\ref{obs07}(1, 3), we conclude that $u$ cannot be a satellite if $S'$ is a $3^+$-star or be the center if $S'$ is an $(\ell - 1)^-$-star.
That is, if $u$ is the center of $S'$ then $S'$ is an $\ell$-star and furthermore due to priority $v$ is the center in $S^*$,
or if $u$ is a satellite of $S'$ then $S'$ is a $2$-star and furthermore due to Observation~\ref{obs07}(2) the other satellite of $S'$ receives no token either.
When $S'$ is an $\ell$-star, $(v, u)$ is called the {\em critical edge} from (the center of) $S^*$ into (the center of) $S'$ (illustrated in the left of \Cref{fig06});
when $S'$ is a $2$-star, $S'$ is called a {\em companion} for $S$ identified through the adjacency $v$-$u$ in $S^*$ (illustrated in the right of \Cref{fig06}).
We remark that an $\ell$-star has at most one critical edge into it, and a $2$-star can be a companion at most twice.

\begin{figure}[ht]
\centering\scalebox{1.0}{
  \setlength{\unitlength}{1bp}%
  \begin{picture}(296.15, 82.58)(0,0)
  \put(0,0){\includegraphics{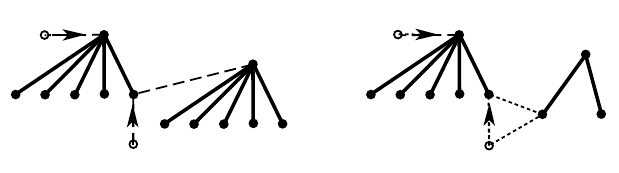}}
  \put(51.40,68.02){\fontsize{11.38}{13.66}\selectfont $c$}
  \put(122.90,53.93){\fontsize{11.38}{13.66}\selectfont $u$}
  \put(221.91,68.02){\fontsize{11.38}{13.66}\selectfont $c$}
  \put(255.00,16.37){\fontsize{11.38}{13.66}\selectfont $u$}
  \put(66.19,9.81){\fontsize{11.38}{13.66}\selectfont $x$}
  \put(237.07,10.08){\fontsize{11.38}{13.66}\selectfont $x$}
  \put(62.90,43.96){\fontsize{11.38}{13.66}\selectfont $v$}
  \put(233.45,42.78){\fontsize{11.38}{13.66}\selectfont $v$}
  \put(21.01,23.39){\fontsize{11.38}{13.66}\selectfont $S$}
  \put(103.23,8.12){\fontsize{11.38}{13.66}\selectfont $S'$}
  \put(193.21,23.16){\fontsize{11.38}{13.66}\selectfont $S$}
  \put(270.26,8.12){\fontsize{11.38}{13.66}\selectfont $S'$}
  \end{picture}%
}
\captionsetup{width=.95\textwidth}
\caption{An illustration of an internal $\ell$-star $S$ of which the center $c$ and a satellite $v$ both receive tokens (indicated by arrows).
	The token of the uncovered vertex $x$ is transferred to $v$ in an optimal star $S^*$, which contains another covered vertex $u$ receiving no token.
	There are only two possibilities for $u$:
	To the left, $u$ is the center of another internal $\ell$-star (here $\ell = 5$); to the right, $u$ is a satellite of an internal $2$-star.\label{fig06}}
\end{figure}

Consider next an $\ell$-star $S \in \mcP$ which has an incoming critical edge from the center of an optimal star into its center $c$.
% is relayed one token from a satellite $v'$ of another $\ell$-star, i.e., $v'$ and $c$ are adjacent.
%We note that $v'$, $c$ and one uncovered vertex are connected inside an optimal star, which has only one uncovered vertex.
If one of its satellites receives no token, then $S$ is said of type-1.
Otherwise, $S$ is said of type-2 and we consider a satellite $v$ of $S$ receiving the $1$ token from its adjacent uncovered vertex $x$ in an optimal star $S^*$
(similarly, see for illustrations in \Cref{fig06}).
By the token-transfer rules, $x$ is the only uncovered vertex in $S^*$ and $c \notin V(S^*)$, and thus $S^*$ contains another covered vertex $u \ne v$.
This vertex $u$ is not in the star $S$ since $S$ is type-2, and therefore $u$ is in another internal star $S' \in \mcP$.
We conclude that due to Observation~\ref{obs07}(1) $u$ cannot be a satellite if $S'$ is a $3^+$-star,
and due to operation {\sc Trade-along-AC} $u$ cannot be the center if $S'$ is an $(\ell - 1)^-$-star.
That is, if $u$ is the center of $S'$ then $S'$ is an $\ell$-star and furthermore due to priority $v$ is the center in $S^*$,
or if $u$ is a satellite of $S'$ then $S'$ is a $2$-star
and furthermore due to operation {\sc Trade-along-AC} and Observation~\ref{obs07}(2) the other satellite of $S'$ receives no token either.
We therefore arrive the same conclusions as in the last paragraph:
When $S'$ is an $\ell$-star, $(v, u)$ is called the {\em critical edge} from (the center of) $S^*$ into (the center of) $S'$ (illustrated in the left of \Cref{fig06});
when $S'$ is a $2$-star, $S'$ is called a {\em companion} for $S$ identified through the adjacency $v$-$u$ in $S^*$ (illustrated in the right of \Cref{fig06}).

One sees that the above exploration process never involves any internal $i$-star with $3 \le i \le \ell-1$.
Let $\alpha_0$ denote the total number of internal $\ell$-stars of which the center receives non-zero tokens
and $\beta$ denote the total number of satellites of these $\alpha_0$ internal $\ell$-stars each receiving one token,
$\alpha_1$ denote the total number of type-1 internal $\ell$-stars (each having an incoming critical edge but at least one satellite receiving no token),
$\alpha_2$ denote the total number of type-2 internal $\ell$-stars (each having an incoming critical edge and every satellite receiving one token),
$\gamma$ denote the total number of internal $2$-stars each being a companion (once or twice).
Let $\mcS$ denote the collection of all the above internal stars,
and thus $\mcP \setminus \mcS$ contains the internal $\ell$-stars of which the center receives no token and which has no incoming critical edge,
the internal $2$-stars which is never a companion,
and the other internal $i$-stars with $3 \le i \le \ell-1$.

Since each of the $\beta$ satellites leads to a type-1 internal $\ell$-star or a companion, possibly through a sequence of type-2 internal $\ell$-stars,
we have $\beta \le \alpha_1 + 2 \gamma$.
The total tokens received by $\mcS$ is
\[
\tau(\mcS) \le \ell \cdot \alpha_0 + \beta + (\ell - 1) \alpha_1 + \ell \cdot \alpha_2 \le \ell \cdot (\alpha_0 + \alpha_1 + \alpha_2) + 2 \gamma.
\]
Note that there are in total $\alpha_0 + \alpha_1 + \alpha_2$ internal $\ell$-stars and $\gamma$ internal $2$-stars in $\mcS$.
Therefore, this average analysis states that in $\mcS$, each $\ell$-star is {\em allocated} with at most $\ell$ tokens and each $2$-star is {\em allocated} with at most $2$ tokens.
By Observations~\ref{obs05}, \ref{obs06} and~\ref{obs08}, in $\mcP \setminus \mcS$, each $i$-star receives at most $i$ tokens, for every $i \in [2, \ell]$.

For each internal star $S \in \mcP$, the total tokens received through token-transfer or allocated by the average analysis, denoted as $\tau(S)$,
is regarded as the number of uncovered vertices in $V(\mcQ)$ that are distributed to the star $S$.
It follows that $|V(\mcQ)| \le |V(\mcP)| + \sum_{S \in \mcP} \tau(S)$, and
\[
\frac {|V(\mcQ)|}{|V(\mcP)|} \le 1 + \frac {\sum_{S \in \mcP} \tau(S)}{|V(\mcP)|} \le 1 + \max_{S \in \mcP} \frac {\tau(S)}{|V(S)|}
	\le 1 + \max_{2 \le i \le \ell}\left\{\frac i{i+1}\right\}
	= 1 + \frac \ell{\ell+1}.
\]
This proves the approximation ratio.
\end{proof}

\section{Approximating the {\sc $[k, \ell]$-Star Packing} problem when $3 \le k < \ell$}
%==================================================================================================
Recall that by $k < \ell$, we have at least two candidate stars to be used in the packings.
When $k \ge 3$, the {\sc $[k, \ell]$-Star Packing} problem has never been studied from the approximation algorithm perspective,
though again it reduces to the weighted $(\ell + 1)$-set packing problem and thus admits a $(1 + \frac {\ell}2)$-approximation algorithm~\cite{TW23}.
Below we show that by using the full capacities of operations {\sc Trade-$i$} and {\sc Trade-along-AC},
i.e., allowing {\sc Trade-$i$} for $i = 1, 2, 3$ and allowing {\sc Trade-$i$} for $i = 1, 2$ inside {\sc Trade-along-AC},
the enhanced algorithm {\sc Approx$[2, \ell]$} is a $(1 + \max\{ \frac {k-1}2, \frac {\ell}{\ell + 1} \cdot \frac {k + 1}3 \})$-approximation algorithm,
for all $3 \le k < \ell$.
Yet, due to the differences, this enhanced algorithm for $3 \le k < \ell$ is denoted as {\sc Approx$[k, \ell]$}.

One sees that, now the most expensive operation is {\sc Trade-along-AC} with an internal {\sc Trade-$2$} operation,
with its time complexity increases to $O(|V|^{9.5})$.
The overall running time for {\sc Approx$[k, \ell]$} is $O(|V|^{10.5})$.

We summarize the following, which integrates several previous observations adjusted for $[k, \ell]$-star packings when $3 \le k < \ell$:

\begin{observation}
\label{obs09}
When no {\sc Collect} or {\sc Trade} operation is applicable,
\begin{enumerate}
\parskip=0pt
\item the maximum degree of the remainder graph is at most $k-1$;
\item the center of an internal $(\ell - 1)^-$-star is not adjacent to any uncovered vertex;
\item for every $1 \le j \le k$, any $j$ satellites of an internal $(k + j)^+$-star together with any $k - j + 1$ uncovered vertices do not form a $k$-star.
\end{enumerate}
\end{observation}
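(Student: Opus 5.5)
The three items are consequences of the termination condition: each says that a certain local configuration would make an operation applicable. I will prove each by contradiction, mirroring Observations~\ref{obs01}, \ref{obs02}, \ref{obs05} and~\ref{obs06}, and check only that the star produced has between $k$ and $\ell$ satellites.

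For item~1, suppose an uncovered vertex $v$ has degree $d \ge k$ in the remainder graph. Then {\sc Collect} (\Cref{def11}) applies at $v$: it extracts a $\min\{d,\ell\}$-star, which is feasible because $k \le \min\{d,\ell\} \le \ell$. This contradicts termination, so every degree in the remainder graph is at most $k-1$.

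For item~2, suppose the center of an internal $i$-star $S$ with $i \le \ell-1$ is adjacent to an uncovered vertex $x$. Adding $x$ to $S$ gives an $(i+1)$-star with $k < i+1 \le \ell$, so it is feasible. This is a {\sc Trade-$1$} on $S$ that covers one more vertex. The post-processing rule after each iteration already performs exactly this addition, so the configuration cannot survive at termination.

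For item~3, fix $1 \le j \le k$ and an internal $i$-star $S$ with $i \ge k+j$, with center $c$. Suppose $j$ satellites of $S$ together with $k-j+1$ uncovered vertices form a $k$-star $T$.
\begin{itemize}
\item $T$ has $k+1$ vertices, of which $j$ are taken from $S$.
\item Removing those $j$ satellites leaves $c$ with $i-j \ge k$ satellites and at most $\ell$, since $i \le \ell$. So the remainder of $S$ is still a feasible star.
\item $T$ itself is feasible because $k \le \ell$.
\end{itemize}
Replacing $S$ by the reduced star plus $T$ is a {\sc Trade-$1$}. It covers exactly the $k-j+1 \ge 1$ extra uncovered vertices, contradicting termination.

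One point needs care: the center $c$ must not be among the vertices of $T$. Since $T$ is formed from satellites of $S$ and uncovered vertices only, this holds by construction. No item is a real obstacle. The only subtlety is the upper bound $\ell$, which is why item~2 is restricted to $(\ell-1)^-$-stars, and why item~3 needs only the lower bound $i-j \ge k$, because removing satellites can only decrease the star size.
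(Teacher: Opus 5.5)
Your proposal is correct and uses the same reasoning the paper relies on. The paper gives no separate proof: it presents Observation~\ref{obs09} as a summary of Observations~\ref{obs01}, \ref{obs02} and~\ref{obs05}, which rest on exactly your three facts. These are the non-applicability of {\sc Collect} (with the $\min\{d,\ell\}$ cap), the post-processing/{\sc Trade-$1$} step for centers of $(\ell-1)^-$-stars, and a {\sc Trade-$1$} that splits a $(k+j)^+$-star into a reduced $k^+$-star plus the new $k$-star. Your check of the size bounds $k$ and $\ell$ in each case is the only adaptation the $[k,\ell]$ setting needs, and you handle it correctly.
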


\subsection{Performance analysis}
%--------------------------------------------------------------------------------------------------
The performance analysis for {\sc Approx$[2, \ell]$} can also be adjusted for {\sc Approx$[k, \ell]$}, non-trivially, as follows.
Similarly, we fix an optimal star packing denoted as $\mcQ$ for discussion.
The stars in $\mcQ$ are referred to as {\em optimal} stars.
The computed star packing is denoted as $\mcP$, and its stars are called {\em internal} stars.
A {\em covered} or {\em uncovered} vertex is always with respect to $\mcP$.

Using Observation~\ref{obs09}(1), we categorize the optimal stars into four types, a refinement of \Cref{def03} (see for an illustration in \Cref{fig07}):
\begin{definition}[Types of optimal stars]
\label{def16}
The optimal stars are categorized into four types:
\begin{itemize}
\parskip=0pt
\item In a Type-1 optimal star, the center and $k - 1$ satellites are uncovered.
\item In a Type-2 optimal star, the center and no more than $k - 2$ satellites are uncovered.
\item In a Type-3 optimal star, the center is covered, and at most $k - 1$ satellites are uncovered.
\item In a Type-4 optimal star, the center is covered, but at least $k$ satellites are uncovered.
\end{itemize}
\end{definition}

\begin{figure}[ht]
\centering\scalebox{1.0}{
  \setlength{\unitlength}{1bp}%
  \begin{picture}(220.91, 68.33)(0,0)
  \put(0,0){\includegraphics{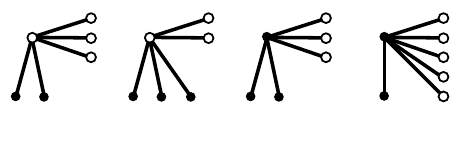}}
  \put(14.66,8.12){\fontsize{11.38}{13.66}\selectfont I}
  \put(71.05,8.12){\fontsize{11.38}{13.66}\selectfont II}
  \put(127.44,8.12){\fontsize{11.38}{13.66}\selectfont III}
  \put(188.53,8.12){\fontsize{11.38}{13.66}\selectfont IV}
  \end{picture}%
}
\captionsetup{width=.95\textwidth}
\caption{Categorizing the optimal stars into four types, using $k = 4$ and $\ell \ge 6$,
	where the filled vertices are covered by the internal stars while blank are uncovered.\label{fig07}}
\end{figure}

We associate a token with each uncovered vertex of $V(\mcQ)$,
and {\em transfer} the tokens of an optimal star, similarly but this time could be fractionated,
to some covered vertices in the same optimal star according to the following adjusted rules.
We remark again here some covered vertices in Type-2 and Type-3 optimal stars have higher priority in receiving tokens.
\begin{definition}[Token-transfer rules]
\label{def17}
\begin{itemize}
\parskip=0pt
\item
	A Type-1 optimal star has $k$ tokens and all but $k-1$ satellites are covered.
	We transfer these $k$ tokens to a covered satellite.
\item
	A Type-2 optimal star has $x+1$ tokens, where $0 \le x \le k-2$ is the number of uncovered satellites and the other at least $k-x$ satellites are covered.
	If a covered satellite is the center of an internal $\ell$-star, then we transfer all these $x+1$ tokens to it;
	otherwise, we transfer these $x+1$ tokens evenly to $k-x$ covered satellites (each receiving $\frac {x+1}{k-x} \le \frac {k-1}2$ tokens).
\item
	A Type-3 optimal star has $y$ tokens, where $0 \le y \le k-1$ is the number of uncovered satellites and the other at least $k-y$ satellites are covered.
	If the center of the optimal star is the center of an internal $\ell$-star, then we transfer all these $y$ tokens to it;
	otherwise, we transfer these $y$ tokens evenly to the center and $k-y$ covered satellites (each receiving $\frac y{k-y+1} \le \frac {k-1}2$ tokens).
\item
	A Type-4 optimal star has $z$ tokens, where $z \ge k$ is the number of uncovered satellites.
	We transfer these $z$ tokens to the center.
\end{itemize}
\end{definition}

We have the following observation.
\begin{observation}
\label{obs10}
By the token-transfer rules,
\begin{enumerate}
\parskip=0pt
\item
	the covered vertices receiving tokens in an optimal star and those uncovered vertices induce a connected subgraph inside the optimal star;
\item
	the center of an internal $(\ell - 1)^-$-star receives at most $\frac {k-1}2$ tokens, and if it receives non-zero tokens then it is by the third rule;
\item
	the center of an internal $\ell$-star can receive up to $\ell$ tokens;
\item
	a satellite of internal $k$-stars can receive up to $k$ tokens but not by the fourth rule,
	and if it receives $k$ tokens, i.e., by the first rule, then each of the other satellites receives at most $\frac {k-1}2$ tokens;
\item
	a satellite of internal $(k + 1)^+$-stars receives at most $\frac {k-1}2$ tokens.
\end{enumerate}
\end{observation}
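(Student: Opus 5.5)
The plan is to verify the five items of Observation~\ref{obs10} one at a time. Each item follows from a case check on which of the four rules in \Cref{def17} can deliver tokens to a given covered vertex. That check is combined with Observation~\ref{obs09} and the non-applicability of {\sc Trade-$1$}, in the same spirit as the proof of Observation~\ref{obs03}.

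\textbf{Item~1.} This is read off directly from the rules.
\begin{itemize}
\item In Type-1 and Type-2 stars the center is uncovered, and any covered satellite receiving tokens is adjacent to it.
\item In Type-3 and Type-4 stars the recipients always include the center, which is adjacent to every uncovered satellite and to every recipient satellite.
\end{itemize}
So in every case the recipients together with the uncovered vertices form a star-shaped, hence connected, subgraph.

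\textbf{Items~2 and~3 (centers).} Consider a center $c$ of an internal star.
\begin{itemize}
\item By item~1, any Type-1 or Type-2 optimal star giving $c$ tokens would make $c$ adjacent to an uncovered center.
\item A Type-3 star with $y\ge 1$ tokens, or a Type-4 star, with $c$ as its center also makes $c$ adjacent to uncovered vertices.
\end{itemize}
By Observation~\ref{obs09}(2), none of these can occur when $c$ heads an $(\ell-1)^-$-star. If $c$ is a satellite of a Type-3 optimal star, it receives nothing, since Type-3 recipients other than the center are satellites adjacent only to the covered center. Hence tokens reach $c$ only under the third rule with $c$ as the optimal center. The internal star is not an $\ell$-star, so the even split applies and $c$ gets $\frac{y}{k-y+1}\le\frac{k-1}2$ when $y=0$; if $y\ge 1$ the adjacency to uncovered satellites again contradicts Observation~\ref{obs09}(2). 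Either way the bound holds.

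For an $\ell$-star center, each optimal star contains $c$ at most once and $c$ belongs to only one optimal star. Consequently $c$ collects tokens from a single optimal star. I will bound this amount by $\ell$ using Observation~\ref{obs09}(3) together with {\sc Trade-$1$}: if the optimal star had more than $\ell$ uncovered satellites adjacent to $c$, trading the internal $\ell$-star for an $\ell$-star centered at $c$ on uncovered vertices, plus its old satellites where possible, would gain. Checking carefully that the Type-4 count $z$ and the Type-2 count $x+1\le k-1$ stay at most $\ell$ is routine.

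\textbf{Items~4 and~5 (satellites).} A satellite receives tokens only as a satellite of a Type-1, Type-2 or Type-3 optimal star, never by the fourth rule. The amounts available are:
\begin{itemize}
\item $k$ tokens under the first rule;
\item $\frac{x+1}{k-x}$ or $\frac{y}{k-y+1}$, both at most $\frac{k-1}2$, under the split rules.
\end{itemize}
The priority clauses only redirect tokens to $\ell$-star centers and never increase a satellite's share.

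For a recipient of $k$ tokens under the first rule, the uncovered center plus the $k-1$ uncovered satellites form a $k$-star with this one satellite. By Observation~\ref{obs09}(3) with $j=1$, this is impossible if the satellite lies in a $(k+1)^+$-star, which gives item~5 and the first part of item~4. The remaining claim is that a second satellite of the same internal $k$-star cannot also receive $k$ tokens. As in Observation~\ref{obs03}(2), otherwise {\sc Trade-$1$} would replace the $k$-star by the two $k$-stars centered at the two uncovered optimal centers.

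The main obstacle is item~3: pinning down the exact count of at most $\ell$ tokens for $\ell$-star centers in the Type-4 case. I expect to argue it directly from the feasibility of an $\ell$-star on $c$ and Observation~\ref{obs09}(3).
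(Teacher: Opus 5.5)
The paper states this observation without proof. The intended justification is the rule-by-rule check against Observation~\ref{obs09} and {\sc Trade-$1$} that you outline (compare the proof of Observation~\ref{obs03}), so your plan is right. Item~1 and the rule-1 parts of items~4 and~5 are correct as you wrote them. However, two steps fail because you conflate a vertex's role in its internal star with its role in the optimal star.

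\textbf{The substantive gap is in items~4 and~5.} You assert that a satellite of an internal star receives tokens ``only as a satellite of a Type-1, Type-2 or Type-3 optimal star, never by the fourth rule.'' But a satellite $v$ of an internal star $S$ may a priori be the \emph{center} of an optimal star.
\begin{itemize}
\item As the center of a Type-3 star, $v$ gets $\frac{y}{k-y+1}\le\frac{k-1}2$. The priority clause cannot apply, since $v$ is not an internal $\ell$-star center, so this case is harmless.
\item As the center of a Type-4 star, the fourth rule gives $v$ all $z\ge k$ tokens, which would destroy both bounds.
\end{itemize}
Ruling out the Type-4 case is exactly the content of ``but not by the fourth rule,'' and it needs an argument:
\begin{itemize}
\item If $S$ is a $(k+1)^+$-star, then $v$ together with $k$ of its uncovered optimal satellites forms a $k$-star, contradicting Observation~\ref{obs09}(3) with $j=1$.
\item If $S$ is a $k$-star with center $c$, then {\sc Trade-$1$} replaces $S$ by the star centered at $v$ whose satellites are the $z$ uncovered optimal satellites together with $c$ (omit $c$ if $z=\ell$). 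This star has $\min\{z+1,\ell\}\ge k+1$ satellites because $z\ge k$ and $\ell>k$, so it covers at least $k+2>k+1$ vertices.
\end{itemize}
For $z=k$ the bare $z$-star centered at $v$ gives no gain, so recycling $c$ (and using $k<\ell$) is essential.

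\textbf{Item~2 contains a false step.} You claim that a center $c$ of an internal $(\ell-1)^-$-star which is a covered satellite of a Type-3 optimal star receives nothing. But when the optimal center is not an internal $\ell$-star center, the third rule splits the $y$ tokens evenly over that center and $k-y$ covered satellites, and $c$ can be one of them. This is in fact the \emph{only} way such a center receives non-zero tokens. Rules 1, 2 and 4, and being the optimal center with $y\ge 1$, all make $c$ adjacent to an uncovered vertex, which Observation~\ref{obs09}(2) forbids. It is also precisely the situation used in the proof of \Cref{lemma02}(1), where the co-recipient $u$ is the optimal center. As written, your argument would show such centers get zero tokens, which is false. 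The repair is to bound this share by $\frac{y}{k-y+1}\le\frac{k-1}2$.

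\textbf{Item~3 is not an obstacle.} The center $c$ lies in a single optimal star. Rules 1--3 give it at most $k<\ell$ tokens, and rule 4 gives $z\le\ell$ simply because optimal stars are $[k,\ell]$-stars. The {\sc Trade-$1$} argument you sketch is unnecessary. As stated, it exchanges an $\ell$-star for another $\ell$-star, so it would not produce a gain anyway.
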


Similarly as in \Cref{def05}, if a satellite of an internal $k$-star $S$ receives $k$ tokens then $S$ is called {\em heavy}.
Furthermore, when $S$ is heavy, for another internal $k$-star $S'$,
if a satellite $v$ of $S$ and a satellite $v'$ of $S'$ are co-recipients in an optimal star each receiving $\frac {k-1}2$ tokens,
then $S'$ is a {\em companion} for $S$ (see for an illustration in \Cref{fig01}).

We have the same conclusions on heavy stars and their companions as stated in Lemma~\ref{lemma01}, but their proofs require the new token-transfer rules in \Cref{def17}.
\begin{lemma}
\label{lemma02}
\begin{enumerate}
\parskip=0pt
\item
	The center of a heavy star receives at most $\frac {k-2}3$ tokens;
\item
	the companion star of a heavy star is not heavy;
\item
	the center of a companion star receives at most $\frac {k-2}3$ tokens;
\item
	no two heavy stars share a common companion.	
\end{enumerate}
\end{lemma}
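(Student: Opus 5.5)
We will follow the proof of \Cref{lemma01} item by item, replaying each {\sc Trade-$i$} argument in the $[k,\ell]$ setting. At each step we must check two things that are new. First, every star we extract must have between $k$ and $\ell$ satellites. Second, the new token-transfer rules of \Cref{def17} must not move tokens to a vertex that the old argument did not anticipate, namely the center of an internal $\ell$-star, or the center of a Type-4 optimal star.

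The first preparatory step settles where a heavy star and its companions can receive tokens. A heavy star $S$ is a $k$-star, so $k<\ell$ implies $S$ is an $(\ell-1)^-$-star. By Observation~\ref{obs10}(2), its center receives tokens only through the third rule, and only in the ``even'' branch, since that center is not the center of an $\ell$-star. Similarly, a satellite of $S$ or of a companion star $S'$ receives tokens only in the even branch of the second or third rule, or by the first rule. The priority branches hand every token to the center of an internal $\ell$-star, so they never touch these vertices.

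The second preparatory step concerns the co-recipients. Whenever an argument in \Cref{lemma01} uses a co-recipient $u$ in an optimal star that is Type-2 or Type-3, the even branch was applied there. Hence no covered vertex of that optimal star at the relevant position is an $\ell$-star center; if one were, the priority branch would have fired instead. In particular, for item 1, the co-recipient $u$ of the center $c$ in an optimal star $S^*_2$ is a covered satellite adjacent to the $k-1$ uncovered satellites, and Observation~\ref{obs09}(3) places $u$ as a satellite of some internal star $S'$.

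With these facts, the four items go through exactly as before.

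\begin{itemize}
\item \textbf{Item 1.} Apply {\sc Trade-$2$} on $S$ and $S'$. The new stars are a $k$-star centered at the center of $S^*_1$, together with either a $(k+1)$-star at $u$ (when $S'$ is a $k$-star; this is feasible since $k+1\le\ell$) or a $k$-star at $u$ plus $S'$ minus $u$.
\item \textbf{Item 2.} Apply {\sc Trade-$2$} producing three $k$-stars.
\item \textbf{Item 3.} Apply {\sc Trade-$3$} with the same constructions as in item 1.
\item \textbf{Item 4.} Apply {\sc Trade-$3$} producing four $k^+$-stars. We cap any star at $\ell$ satellites by choosing exactly $k$ satellites from each optimal star, which is still a net gain.
\end{itemize}

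In every case the new stars use only uncovered vertices plus vertices freed from the traded stars. This is legitimate by \Cref{def02} with stars in $[k,\ell]$, and {\sc Trade-$i$} for $i\le 3$ is available in {\sc Approx$[k,\ell]$}.

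We also need the analogue of Observation~\ref{obs03}(2): no two satellites of a heavy star each receive $k$ tokens, and a satellite receiving $k$ tokens forces a $k$-star. Observation~\ref{obs10}(4) supplies this, and its proof is the same {\sc Trade-$1$} argument combined with Observation~\ref{obs09}(2,3).

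The main obstacle is exactly the token accounting in items 1 and 3. The quantity $\frac{k-2}{3}$ comes from showing that a center cannot receive $\frac{k-1}2$ tokens, that is, that it is not paired with a co-recipient adjacent to all $k-1$ uncovered satellites. Under the new rules one must confirm that the only way a center receives as much as it can under the even branch still forces this configuration. In the even branch of the third rule with $y$ uncovered satellites, the amount received is $\frac{y}{k-y+1}$, which exceeds $\frac{k-2}{3}$ only when $y=k-1$. In that case a co-recipient satellite exists and Observation~\ref{obs10}(1) gives the adjacency. So the plan is to verify this short computation and then note that the priority branch is excluded by the first preparatory step.
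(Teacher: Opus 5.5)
Your plan is essentially the paper's proof. You replay \Cref{lemma01}, using Observation~\ref{obs10}(2) and the priority clauses of \Cref{def17} (which only feed centers of internal $\ell$-stars), so that every co-recipient pair comes from an even branch and the same {\sc Trade-$2$}/{\sc Trade-$3$} exchanges, with stars of at most $k+1\le\ell$ satellites, yield the contradictions; your check that $\frac{y}{k-y+1}>\frac{k-2}3$ forces $y=k-1$ just makes explicit what the paper leaves implicit. There are three slips to fix: (i) the co-recipient $u$ of $c$ is the \emph{center} of the Type-3 star $S^*_2$, and $c$ is its single recipient satellite, since by Observation~\ref{obs09}(2) $c$ has no uncovered neighbor; (ii) $u$ is a satellite of an internal star by Observation~\ref{obs09}(2) together with the third-rule priority, not by Observation~\ref{obs09}(3); and (iii) item 3 needs, as in \Cref{lemma01}, the extra $k$-star centered at the center of the optimal star linking $S$ to its companion, because item 1's two stars alone would lose vertices.
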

\begin{proof}
We first note that Observation~\ref{obs03} on $k$-stars are the same as Observation~\ref{obs10} on $k$-stars.

Let $S$ be a heavy star, i.e., one satellite receives $k$ tokens in an optimal star $S^*$.
By Observation~\ref{obs10}(4) each of the other satellites can receive at most $\frac {k-1}2$ tokens.
If the center $c$ of $S$ receives $\frac {k-1}2$ tokens, then by Observation~\ref{obs10}(2) the tokens are transferred in the optimal star $S^*_1$ by the third rule,
with its co-recipient $u$ being the center of $S^*_1$ and thus adjacent to the $k-1$ uncovered vertices.
It follows from Observation~\ref{obs09}(2) and the third token-transfer rule that $u$ is a satellite in an internal star $S'$,
but then operation {\sc Trade-$2$} is applicable to trade $S$ and $S'$ with the two stars centered at the center of $S^*$ and $u$,
respectively, and another star centered at the center of $S'$ when it is a $(k+1)^+$-star, a contradiction.
This proves item 1.

Item 2 can be proved identically as in the proof of \Cref{lemma01}, and we will not repeat here.

Next, let $S'$ be a companion for $S$ such that a satellite $v$ of $S$ and a satellite $v'$ of $S'$ are co-recipients in an optimal star $S^*_1$ each receiving $\frac {k-1}2$ tokens.
If the center $c'$ of $S'$ receives $\frac {k-1}2$ tokens,
then the same the tokens are transferred to $c'$ in the optimal star $S^*_2$ by the third rule,
with its co-recipient $u'$ being the center of $S^*_2$ and thus adjacent to the $k-1$ uncovered vertices.
It follows from Observation~\ref{obs09}(2) and the third token-transfer rule that $u'$ is a satellite in an internal star $S''$,
but then operation {\sc Trade-$3$} is applicable to trade $S$, $S'$ and $S''$ with three or four stars constructed from
these three stars and the three involved optimal stars $S^*$, $S^*_1$ and $S^*_2$ to cover at least one more vertex, a contradiction.
This proves item 3.

Lastly, if $S'$ is also a companion for another heavy star $S''$,
and assume a satellite of $S''$ receives $k$ tokens in an optimal star $S^*_1$,
a satellite of $S$ and a satellite of $S'$ are co-recipients in an optimal star $S^*_2$ each receiving $\frac {k-1}2$ tokens,
and a satellite of $S''$ and a satellite of $S'$ are co-recipients in an optimal star $S^*_3$ each receiving $\frac {k-1}2$ tokens,
then operation {\sc Trade-$3$} would be applicable to trade $S$, $S'$ and $S''$ with four stars constructed from
these four involved optimal stars $S^*$, $S^*_1$, $S^*_2$ and $S^*_3$ to cover at least $k+1$ more vertices, a contradiction.
This proves the last item of the lemma.
\end{proof}

In the proof of the following main theorem in this section, we examine the total tokens received by (all the vertices of) each internal star $S$, denoted as $\tau(S)$.

\begin{theorem}
\label{thm04}
The algorithm {\sc Approx$[k, \ell]$} is an $O(|V|^{10.5})$-time $(1 + \max \{ \frac{k-1}2, \frac{(k+1)\ell}{3(\ell+1)} \})$-approximation algorithm
for the {\sc $[k, \ell]$-Star Packing} problem,
where $3 \le k < \ell$.
\end{theorem}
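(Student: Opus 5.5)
The plan is to combine the token-counting of \Cref{thm01} for $k$-stars and $(k+1)^+$-stars with the critical-edge/companion exploration of \Cref{thm03} for $\ell$-stars whose centers receive tokens. The running time follows as stated before Observation~\ref{obs09}: every operation covers at least one more vertex, so there are $O(|V|)$ iterations, and each iteration is dominated by {\sc Trade-along-AC} with an internal {\sc Trade-$2$} in $O(|V|^{9.5})$ time. For the ratio, $|V(\mcQ)| \le |V(\mcP)| + \sum_{S\in\mcP}\tau(S)$ as before. The goal is to partition $\mcP$ into groups in which the average number of tokens per covered vertex is at most $\max\{\frac{k-1}2, \frac{(k+1)\ell}{3(\ell+1)}\}$.

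The first step handles stars whose centers receive no more than $\frac{k-1}2$ tokens. By Observation~\ref{obs10}(2,5), every vertex of an internal $(k+1)^+$-star, and of an $\ell$-star whose center receives at most $\frac{k-1}2$, receives at most $\frac{k-1}2$. For internal $k$-stars, I repeat the argument of \Cref{thm01} verbatim using \Cref{lemma02}. A heavy star $S$ with $j$ satellites at $\frac{k-1}2$ yields $j$ distinct companions, exactly as before. Here I must check that the co-recipient cannot be the center of an internal $\ell$-star, which is excluded by the priority clause of the second and third rules in \Cref{def17}. Inequality (\ref{eq01}) then gives average $\le (k+1)\frac{k-1}2$ per $k$-star, since $k\ge 3$.

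The second step, and the main obstacle, concerns internal $\ell$-stars $S$ whose center $c$ receives more than $\frac{k-1}2$ tokens. This happens only through the priority clauses or the fourth rule, and $c$ can receive up to $\ell$ tokens. I will mimic the exploration in \Cref{thm03}. Let $v$ be a satellite of $S$ receiving tokens in an optimal star $S^*$. I would show via Observation~\ref{obs09}(3), {\sc Trade-$2$}, and {\sc Trade-along-AC} (with internal {\sc Trade-$1$/$2$} along chains of $\ell$-stars) that $S^*$ contains a covered vertex $u$ outside $S$ receiving no token. Either $u$ is the center of another $\ell$-star, giving a critical edge (at most one into each $\ell$-star, by priority), or $u$ lies in a $k$-star companion with no token-rich vertex.

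The difference from \Cref{thm03} is that now satellites receive up to $\frac{k-1}2$ instead of $1$, and $k$-star companions may themselves carry tokens. I would define $\alpha_0, \alpha_1, \alpha_2, \beta, \gamma$ as in \Cref{thm03} and aim for the bound
\[
\tau(\mcS) \le \ell\,(\#\ell\mbox{-stars}) + \frac{k+1}3\cdot\frac{\ell}{\ell+1}\,(\ldots),
\]
normalized so that each $\ell$-star is charged at most $\frac{(k+1)\ell}{3}$ tokens, i.e.\ $\frac{(k+1)\ell}{3(\ell+1)}$ per vertex. I expect the $\frac{k+1}3$ factor to arise from a satellite at $\frac{k-1}2$ being split with co-recipients, combined with the $\frac{k-2}3$-type bounds of \Cref{lemma02} when the chain ends at a heavy star's companion. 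The delicate part is ensuring that companions reached from $\ell$-star chains are not double-charged with companions of heavy stars. I would try to prove this with {\sc Trade-along-AC} plus {\sc Trade-$2$}, exactly as in the last item of \Cref{lemma02}, and I expect this accounting to be where the proof is hardest.

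Finally, summing over all groups gives $\sum\tau(S)/|V(\mcP)| \le \max\{\frac{k-1}2,\frac{(k+1)\ell}{3(\ell+1)}\}$, which yields the claimed ratio.
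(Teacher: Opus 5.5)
Your running-time bound and your treatment of the $(k+1)$- to $(\ell-1)$-stars and of the $k$-stars match the paper. For the $k$-stars you use \Cref{lemma02} and inequality~(\ref{eq01}), with the priority clauses of \Cref{def17} keeping centers of $\ell$-stars from being co-recipients.

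The gap is in the $\ell$-star step, which is the real content of the theorem. There you only announce a target bound with a placeholder $(\ldots)$. The mechanism you propose to import from \Cref{thm03} does not survive fractional token transfer. Under \Cref{def17}, a satellite $v$ of an $\ell$-star receives more than $\frac{k-2}3$ tokens only as one of exactly two co-recipients of $\frac{k-1}2$ tokens (second rule with $x=k-2$, or third rule with $y=k-1$). When $S^*$ has exactly $k$ satellites, the only covered vertex of $S^*$ other than $v$ is this co-recipient $v'$, and $v'$ itself holds $\frac{k-1}2$ tokens. So the ``covered vertex $u$ outside $S$ receiving no token'' you want need not exist. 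Moreover, if $v'$ is the center of an $\ell$-star, that star's center is not token-free. The critical-edge and type-1/type-2 counting of \Cref{thm03} drew its slack from zero-token vertices, so as stated it has none to use here.

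The missing idea is a per-satellite cap of $\frac{k-2}3$.
\begin{itemize}
\item The functions $\frac{x+1}{k-x}$ and $\frac{y}{k-y+1}$ are increasing. By Observation~\ref{obs09}(3), the first and fourth rules never reach a satellite of a $(k+1)^+$-star. Hence every satellite of an $\ell$-star receives either at most $\frac{k-2}3$ or exactly $\frac{k-1}2$.
\item The center receives at most $\ell$. So an $\ell$-star whose satellites are all capped has $\tau(S)\le \ell+\ell\cdot\frac{k-2}3=\frac{(k+1)\ell}3$. This is where $\frac{k+1}3=1+\frac{k-2}3$ comes from, not from co-recipient splitting or from chains ending at heavy stars' companions.
\item For each satellite $v$ at $\frac{k-1}2$, you must export the excess $\frac{k-1}2-\frac{k-2}3=\frac{k+1}6$. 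Follow the co-recipient $v'$. If $v'$ is the center of an $\ell$-star, the stars traversed form an AC from the uncovered neighbour of $c$; that neighbour exists because $c$ receives more than $\frac{k-1}2$. Continue from a satellite of that star holding $\frac{k-1}2$, until the chain enters a $k$-star $S'$ through a satellite.
\item Re-run the proof of \Cref{lemma02} with {\sc Trade-along-AC} in place of {\sc Trade-$i$}. This shows $S'$ is not heavy and its center receives at most $\frac{k-2}3$. So $S'$ has slack at least $\frac{k+1}6$ below $(k+1)\cdot\frac{k-1}2$, which absorbs $v$'s excess, provided nobody else charges $S'$. That is the uniqueness issue you rightly flag.
\end{itemize}
Your proposal leaves unspecified this matching of the exported excess $\frac{k+1}6$ against the slack of the companion's center, and without it the bound is not established.
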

\begin{proof}
The overall running time of $O(|V|^{10.5})$ has been discussed at the beginning of the section.
We distinguish the internal stars.

{\bf Case 1: $i$-stars where $k+1 \le i \le \ell-1$.}
Consider such an internal $i$-star $S$.
By Observation~\ref{obs10}(2, 5), every vertex in $S$ receives at most $\frac {k-1}2$ tokens and therefore $\tau(S) \le \frac {k-1}2 \cdot |V(S)|$.

{\bf Case 2: $k$-stars}.
Let $S$ be an internal $k$-star.
Observation~\ref{obs10}(2) says the center of $S$ receives at most $\frac {k-1}2$ tokens;
Observation~\ref{obs10}(4) says if $S$ is heavy then each of the other satellites receives at most $\frac {k-1}2$ tokens,
and by \Cref{lemma02} the center receives at most $\frac {k-2}3$ tokens.
Therefore, when $S$ is non-heavy, $\tau(S) \le \frac {k-1}2 \cdot |V(S)|$;
when $S$ is heavy but no other satellite receives $\frac {k-1}2$ tokens, $\tau(S) \le k + k \cdot \frac {k-2}3 = \frac k3 \cdot |V(S)| \le \frac {k-1}2 \cdot |V(S)|$ (due to $k \ge 3$).

When $S$ is heavy and a satellite $v$ of $S$ receives $\frac {k-1}2$ tokens in the optimal star $S^*$,
then $v$ receives tokens by the second rule, since otherwise {\sc Trade-2} would be applicable on $S$ and the internal star $S'$ containing the co-recipient $v'$.
Furthermore, it follows from Observation~\ref{obs09}(2) and the second rule that $v'$ is a satellite in $S'$, and $S' \ne S$ is a $k$-star.
That is, we have identified a companion for $S$ through the path $v$-$c^*$-$v'$ (where $c^*$ is the center of $S^*$, see for an illustration in \cref{fig01}).
Assume $j$ satellites of $S$ each receives $\frac {k-1}2$ tokens, then we can identify $j$ unique companions $S'_1, S'_2, \ldots, S'_j$ for $S$.
Let $\mcS = \{S, S'_1, S'_2, \ldots, S'_j\}$.
By \Cref{lemma02} we have 
\[
\tau(\mcS) \le \left(k + (k-j) \cdot \frac {k-2}3 + j \cdot \frac {k-1}2\right) + j \cdot \left(\frac {k-2}3 + k \cdot \frac {k-1}2\right) \le \frac {k-1}2 \cdot |V(\mcS)|.
\]

{\bf Case 3: $\ell$-stars}.
Let $S$ be an internal $\ell$-star.
Observation~\ref{obs10}(5) says each satellite receives at most $\frac {k-1}2$ tokens.
Observation~\ref{obs10}(3) says the center of $S$ can receive up to $\ell$ tokens, and in more details,
it receives $k$, $x+1$ $(\le k-1)$, $y$ $(\le k-1)$, and $z$ $(k \le z \le \ell)$ tokens by the first, second, third and fourth token-transfer rules, respectively.
Furthermore, if the center of $S$ receives $y$ tokens by the third token-transfer rule,
then it is also the center of the optimal star;
otherwise, it receives only $\frac y{k-y+1} \le \frac {k-1}2$ tokens.

When the center of $S$ receives no more than $\frac {k-1}2$ tokens, the total token received by $S$ is
$\tau(S) \le (\ell + 1) \cdot \frac {k-1}2 = \frac {k-1}2 \cdot |V(S)|$.
When every satellite of $S$ receives no more than $\frac {k-2}3$ tokens, the total token received by $S$ is
$\tau(S) \le \ell + \ell \cdot \frac {k-2}3 = (\ell+1) \cdot \frac {\ell \cdot (k+1)}{3 (\ell+1)} = \frac {(k+1) \ell}{3 (\ell+1)} \cdot |V(S)|$.

Assume next the center $c$ of $S$ receives strictly more than $\frac {k-1}2$ tokens and
a satellite $v$ receives $\frac {k-1}2$ tokens in an optimal star $S^*$ and the co-recipient is $v'$.
We thus see that $c$ is adjacent to an uncovered vertex $u$ and receives its token,
and that the vertex $v'$ is not in $S$ since otherwise {\sc Trade-1} would be applicable on $S$, and therefore $v'$ is in another internal star $S' \in \mcP$.
Due to operation {\sc Trade-2}, we conclude that $v'$ cannot be a satellite if $S'$ is a $(k + 1)^+$-star or be the center if $S'$ is an $(\ell - 1)^-$-star.
That is, if $v'$ is the center of $S'$ then $S'$ is an $\ell$-star or if $v'$ is a satellite of $S'$ then $S'$ is a $k$-star.

In the former case, we have identified a sequence of $\ell$-stars that can be used to swap $u$ and $v$ to become covered and uncovered, respectively,
so that replacing operation {\sc Trade-$i$} by operation {\sc Trade-along-AC} we can repeat the argument on any satellite of $S'$ if it receives $\frac {k-1}2$ tokens.
Or otherwise, $S'$ is called a {\em companion} of $S$, that every one of its satellites receives at most $\frac {k-2}3$ tokens and it is not a companion for any other $\ell$-star.
Due to the finiteness of the input graph, we always reach the latter case, where the $k$-star $S'$ is identified through one of its satellite $v'$ receiving $\frac {k-1}2$ tokens,
and it is also called a {\em companion} of $S$.
Similarly as the proof of \Cref{lemma02}, replacing operation {\sc Trade-$i$} by operation {\sc Trade-along-AC},
the companion $S'$ is not heavy, its center can receive at most $\frac {k-2}3$ tokens, and it is not a companion for any other $\ell$-star.

We thus have proved that for the satellite $v$ of $S$ we can identify a unique companion star $S'$,
which can take over $\frac {k-1}2 - \frac {k-2}3$ tokens of $v$ to reach a total of at most $\frac {k-1}2 \cdot |V(S')|$.
That is, every satellite of $S$ retains $\frac {k-2}3$ tokens, and thus the retained total token is at most $\frac {(k+1) \ell}{3 (\ell+1)} \cdot |V(S)|$.
Therefore, on average, each internal star $S$ has
\[
\tau(S) \le \max\left\{ \frac {k-1}2 \cdot |V(S)|, \frac {(k+1) \ell}{3 (\ell+1)} \cdot |V(S)|\right\} = \max\left\{ \frac {k-1}2, \frac {(k+1) \ell}{3 (\ell+1)}\right\} \cdot |V(S)|.
\]
It follows from $|V(\mcQ)| \le |V(\mcP)| + \sum_{S \in \mcP} \tau(S)$ that
\[
\frac {|V(\mcQ)|}{|V(\mcP)|} \le 1 + \frac {\sum_{S \in \mcP} \tau(S)}{|V(\mcP)|} \le 1 + \max_{S \in \mcP} \frac {\tau(S)}{|V(S)|}
	= 1 + \max\left\{\frac{k-1}2, \frac{(k+1)\ell}{3(\ell + 1)} \right\}.
\]
This proves the approximation ratio.
\end{proof}

\section{An approximation lower bound for {\sc $[2, \infty)$-Star Packing}}
%==================================================================================================
In this section, we prove the APX-hardness for the {\sc $[2, \infty)$-Star Packing} problem via a gap-preserved reduction from an APX-hard variant of the {\sc Max $3$CNF-SAT} problem.

\begin{definition}[{\sc Max $3$CNF-SAT}]
\label{def18}
Given a set $V$ of $n$ boolean variables and a conjunctive normal form (CNF) formula of $m$ clauses each containing exactly three literals,
the goal of the {\sc Max $3$CNF-SAT} problem is to find a truth assignment for the variables to satisfy the maximum number of clauses.
When every variable appears in at most $B$ clauses, the problem is denoted as {\sc Max $3$CNF-SAT-$B$}.
\end{definition}

It is NP-hard to approximate the {\sc Max $3$CNF-SAT} problem within $\frac 87$~\cite{Has97,Has01},
even assuming there exists a truth assignment satisfying all the $m$ clauses in the $3$CNF formula.
For {\sc Max $3$CNF-SAT-$3$}, differently, it is NP-hard to approximate within $\frac {62}{61}$, but more specifically,
for any $0 < \delta < \frac 1{124}$ there are instances for which it is NP-hard to distinguish whether there is a truth assignment satisfying at least $(1 - \delta) m$ clauses
or every truth assignment satisfies at most $(\frac {61}{62} + \delta) m$ clauses~\cite{EK06}.

Let us fix a {\sc Max $3$CNF-SAT-$3$} instance for discussion, in which the set of variables is $V = \{x_1, x_2, \ldots, x_n\}$ and the set of clauses is $C = \{c_1, c_2, \ldots, c_m\}$.
In the clause $c_j = x_{j,1} \lor x_{j,2} \lor x_{j,3}$, each literal $x_{j,\ell}$ is the positive or negative form of a variable;
and every variable $x_i$ occurs $t_i \in \{2,3\}$ times across all the clauses, once or twice in positive but exactly once in negative.
We extend the reduction by Xi et al.~\cite{XLL24} for proving the NP-hardness of {\sc $[2, \infty)$-Star Packing},
to construct a subcubic graph $G$ (see for an illustration in \Cref{fig08} and \Cref{fig09}).
\begin{itemize}
\parskip=0pt
\item ({\em Clause gadget})
	For each clause $c_j = x_{j,1} \lor x_{j,2} \lor x_{j,3}$,
	create the clause gadget $U_j$ consisting of $8$ vertices and $7$ internal edges as illustrated in \Cref{fig08},
	among which three vertices are labeled by the literals $x_{j,1}, x_{j,2}, x_{j,3}$, respectively
	(the other five are called {\em private} vertices with respect to the gadget).
\item ({\em Variable gadget})
	For each variable $x_i$,
	\begin{itemize}
	\parskip=0pt
	\item
		if $t_i = 2$, create the variable gadget $V_i$ consisting of $15$ vertices and $17$ internal edges as illustrated in \Cref{fig09}(a),
		among which two vertices are labeled by $x_i$ and $\overline{x}_i$, respectively
		(the other thirteen are called private vertices);
	\item
		if $t_i = 3$, create the variable gadget $V_i$ consisting of $17$ vertices and $18$ internal edges as illustrated in \Cref{fig09}(b),
		among which three vertices are labeled by $x_i$, $x_i$ and $\overline{x}_i$, respectively
		(the other fourteen are called private vertices).
	\end{itemize}
\item
	For each labeled vertex in a clause gadget, connect it to a distinct vertex labeled by the same literal in the variable gadget.
	Note that for every literal, there are an identical number of vertices in the clause gadgets and in the variable gadget labeled by it, 	which is either one or two.
\end{itemize}
One sees that 1) $\frac 23 n < m \le n$, 2) there are at most $8 m + 17 n < 33.5 m$ vertices and at most $7 m + 18 n + 3 m \le 37 m$ edges in the constructed graph $G$,
and 3) $G$ is subcubic.

\begin{figure}[ht]
\centering\scalebox{1.0}{
  \setlength{\unitlength}{1bp}%
  \begin{picture}(177.70, 73.10)(0,0)
  \put(0,0){\includegraphics{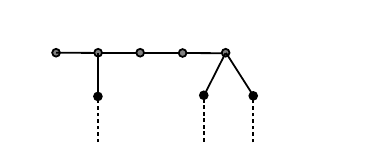}}
  \put(25.61,16.07){\fontsize{11.38}{13.66}\selectfont $x_{j,1}$}
  \put(5.67,58.54){\fontsize{11.38}{13.66}\selectfont $5$ private vertices}
  \put(77.58,16.07){\fontsize{11.38}{13.66}\selectfont $x_{j,2}$}
  \put(126.81,16.07){\fontsize{11.38}{13.66}\selectfont $x_{j,3}$}
  \end{picture}%
}
\captionsetup{width=.95\textwidth}
\caption{The gadget $U_j$ for the clause $c_j = x_{j,1} \lor x_{j,2} \lor x_{j,3}$,
	containing $5$ private vertices and $3$ vertices labeled by the three literals respectively.\label{fig08}}
\end{figure}

\begin{figure}[ht]
\centering\scalebox{1.0}{
  \setlength{\unitlength}{1bp}%
  \begin{picture}(310.73, 98.03)(0,0)
  \put(0,0){\includegraphics{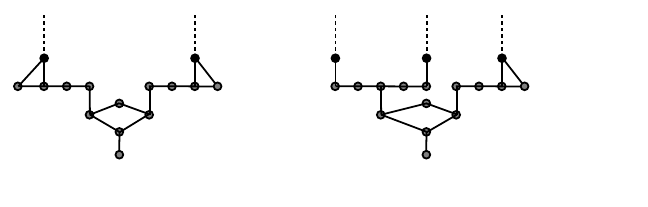}}
  \put(209.97,27.62){\fontsize{11.38}{13.66}\selectfont $14$ private vertices}
  \put(145.47,71.03){\fontsize{11.38}{13.66}\selectfont $x_i$}
  \put(189.24,71.03){\fontsize{11.38}{13.66}\selectfont $x_i$}
  \put(244.59,71.03){\fontsize{11.38}{13.66}\selectfont $\bar{x}_i$}
  \put(51.94,8.12){\fontsize{11.38}{13.66}\selectfont (a)}
  \put(200.18,8.12){\fontsize{11.38}{13.66}\selectfont (b)}
  \put(62.61,27.62){\fontsize{11.38}{13.66}\selectfont $13$ private vertices}
  \put(5.67,71.03){\fontsize{11.38}{13.66}\selectfont $x_i$}
  \put(97.23,71.03){\fontsize{11.38}{13.66}\selectfont $\bar{x}_i$}
  \end{picture}%
}
\captionsetup{width=.95\textwidth}
\caption{The gadget $V_i$ for the variable $x_i$:
	(a) contains $13$ private vertices and $2$ vertices labeled by the two literals when $x_i$ appears two times in the clauses;
	(b) contains $14$ private vertices and $3$ vertices labeled by the three literals when $x_i$ appears three times in the clauses.\label{fig09}}
\end{figure}

Note that while the decision version of {\sc $[2, \infty)$-Star Packing} asks for a packing (called {\em perfect}) to cover all the vertices,
the maximization problem seeks for a packing to cover the maximum number of vertices.
We next prove some structural properties for truth assignments and feasible $[2, \infty)$-star packings.

\begin{lemma}
\label{lemma03}
Given a truth assignment for the instance of {\sc Max $3$CNF-SAT-$3$} that leaves some clauses unsatisfied,
another truth assignment, if needed, can be obtained in $O(n)$ time to satisfy at least the same number of clauses while the variables in the unsatisfied clauses are distinct,
i.e., no variable appears in two unsatisfied clauses.
\end{lemma}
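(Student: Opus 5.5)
We will use a simple flipping procedure and argue that it terminates with the claimed property. Suppose some variable $x_i$ appears in two distinct unsatisfied clauses $c_a$ and $c_b$. In each unsatisfied clause every literal is false under the current assignment. Since $x_i$ occurs exactly once in negative form and once or twice in positive form, its occurrences in $c_a$ and $c_b$ cannot be one positive and one negative: both would then be false, which requires $x_i$ to be simultaneously true and false. So both occurrences have the same sign, and since only one negative occurrence exists, both are the positive occurrences $x_i$. This forces $t_i = 3$, with $x_i$ currently false.

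We then flip $x_i$ to true. This satisfies $c_a$ and $c_b$, which were unsatisfied, gaining two clauses. The only clause that can become unsatisfied is the unique clause containing $\overline{x}_i$, losing at most one clause. (If a clause contained both $x_i$ and $\overline{x}_i$ it would be a tautology, and one of $c_a, c_b$ would then be satisfied, which is impossible.) Hence each flip strictly increases the number of satisfied clauses, by at least one. In addition, after the flip the clauses $c_a$ and $c_b$ are satisfied, and at most one clause still containing $x_i$ (the one with $\overline{x}_i$) can be unsatisfied, so $x_i$ no longer appears in two unsatisfied clauses.

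Repeating this step while a violating variable exists, the number of satisfied clauses strictly increases each time, so there are at most $m \le n$ flips. When no flip applies, no variable appears in two unsatisfied clauses, and the final assignment satisfies at least as many clauses as the initial one.

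The main step to get right is the running time of $O(n)$. We will maintain, for each clause, the number of currently true literals, and, for each variable, the number of unsatisfied clauses containing it. Each flip touches only the at most $3$ occurrences of $x_i$, hence $O(1)$ clauses and the $O(1)$ variables in those clauses. Candidate variables are kept in a queue and rechecked when popped. Since each flip costs $O(1)$ and there are at most $m$ flips, with $O(n+m) = O(n)$ initialization (using $m \le n$), the total time is $O(n)$.
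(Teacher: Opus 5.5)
Your proposal is correct and follows the same approach as the paper, which simply flips the truth value of any variable that appears in two unsatisfied clauses. You supply the details the paper leaves implicit: such a variable must occur positively in both clauses, so each flip gains two clauses and loses at most one. This bounds the number of flips by $m \le n$ and gives the $O(n)$ running time.
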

\begin{proof}
The proof is done by construction to flip the truth value of the variable that appears in two unsatisfied clauses, if any.
\end{proof}

A truth assignment as stated in Lemma~\ref{lemma03} is said {\em canonical}.

Similarly as observed by Xi et al.~\cite{XLL24}, given a $[2, \infty)$-star packing $\mcP$ in $G$,
if one of the labeled vertex in a clause gadget $U_j$ is the center of a $2$-star, then at least one private vertex will not be covered.
One can swap to use private vertices as centers to cover at least the same number of vertices.
Next, another similar observation holds that the two stars in $\mcP$ centered at the private vertices of $U_j$ need at least one of the three labeled vertices to be their satellite,
or otherwise at least two private vertices would be uncovered.
Consider the gadget $V_i$ for a variable $x_i$ which appears only twice in the clauses, say $c_j$ and $c_\ell$.
If for both $U_j$ and $U_\ell$, the stars in $\mcP$ need $x_i$ and $\bar{x}_i$ to be their satellites,
then exactly one vertex in $V_i$ is not covered;
additionally, if for both $U_j$ and $U_\ell$, the stars in $\mcP$ need none of $x_i$ and $\bar{x}_i$ to be their satellites,
then arbitrarily one of them is made a satellite.
Analogously, for the gadget $V_i$ for a variable $x_i$ appearing three times (two positive and one negative) in the clauses,
if $\bar{x}_i$ and at least one $x_i$ are needed as satellites of stars centered at the private vertices in the corresponding clause gadgets,
then exactly one vertex in $V_i$ is not covered;
additionally, if $\bar{x}_i$ is not needed, then both $x_i$'s are made satellites,
or if $\bar{x}_i$ is needed but the two $x_i$'s are not, then only $\bar{x}_i$ is made a satellite.
The packing $\mcP$ is said {\em canonical} if all the uncovered vertices are in the variable gadgets and
for these variable gadgets all the corresponding labeled vertices in the clause gadgets are needed as satellites.

\begin{lemma}
\label{lemma04}
A canonical truth assignment for the {\sc Max $3$CNF-SAT-$3$} instance leaving $\ell$ clauses unsatisfied one-to-one corresponds
to a canonical $[2, \infty)$-star packing in the graph $G$ leaving $\ell$ vertices uncovered.
\end{lemma}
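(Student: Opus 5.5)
We prove the correspondence by constructing maps in both directions and checking that each map sends the number of unsatisfied clauses to the number of uncovered vertices.

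\emph{From assignments to packings.} Fix a canonical truth assignment. In each clause gadget $U_j$ with $c_j$ satisfied, pick one true literal $x_{j,p}$ and use the two stars centered at the private vertices of $U_j$. These two stars cover all five private vertices, take the labeled vertex $x_{j,p}$ (or a specific set of labeled vertices) as satellites, and leave the other labeled vertices of $U_j$ to be absorbed by the variable gadgets through the connecting edges. If $c_j$ is unsatisfied, all three of its literals are false. We still let the private stars of $U_j$ need one labeled vertex, say $x_{j,1}$, as a satellite.

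In each variable gadget $V_i$, the stars covering the private vertices and the unused labeled vertices are chosen according to the truth value of $x_i$. The labeled vertices of the \emph{false} literal are released to act as satellites for the clause gadgets. Once the actual shapes in \Cref{fig09} are given, we verify by inspection that $V_i$ admits a perfect cover in exactly these two ``consistent'' situations. The $t_i=2$ and $t_i=3$ cases are checked separately, with the single negative occurrence and the one or two positive occurrences treated on their own.

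When a clause is unsatisfied, the needed labeled vertex corresponds to a true-in-clause literal that is false. Its variable gadget must then release a vertex of the wrong polarity as well, which is exactly the situation described before the lemma in which ``exactly one vertex in $V_i$ is not covered''. Because the assignment is canonical (\Cref{lemma03}), distinct unsatisfied clauses involve distinct variables. Each unsatisfied clause therefore charges exactly one uncovered vertex to a distinct variable gadget, giving $\ell$ uncovered vertices. The resulting packing is canonical by definition: all uncovered vertices lie in variable gadgets, and for those gadgets the corresponding labeled clause vertices are needed as satellites.

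\emph{From packings to assignments.} Conversely, start from a canonical packing. For each variable gadget $V_i$ with no uncovered vertex, the analysis of $V_i$ shows that only labeled vertices of a single polarity are taken by clause stars. Set $x_i$ so that those literals are true. For a gadget with one uncovered vertex, both polarities are used; set $x_i$ arbitrarily, or more precisely set it to satisfy the clause whose demand of the opposite polarity caused the deficiency.

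Every clause whose needed labeled vertex comes from a gadget without deficiency is then satisfied. Each unsatisfied clause is charged to a distinct deficient gadget. Canonicity of the packing ensures that the number of deficient gadgets equals the number of uncovered vertices $\ell$, and that each deficient gadget corresponds to exactly one unsatisfied clause, since the variables of unsatisfied clauses are distinct. The two constructions are inverse to each other up to the canonical choices, which yields the one-to-one correspondence.

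The main obstacle is the case analysis inside the variable gadgets. For $t_i=3$ we must show that releasing $\bar{x}_i$ together with one or both $x_i$ copies forces \emph{exactly} one uncovered vertex, not two, and that a perfect cover exists in precisely the consistent cases. I would settle this by enumerating the possible stars in the subcubic gadget, using degree counts, since centers have degree at most $3$ and each star has at most $3$ satellites.
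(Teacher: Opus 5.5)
Your overall route matches the paper's, but the forward direction fails as written. You let each satisfied clause take a single true literal and leave the remaining labeled vertices of $U_j$ to be absorbed by the variable gadgets, and those remaining vertices can include other true literals. Take $t_i=3$ with $x_i$ true and positive occurrences in $c_a$ and $c_b$, and suppose you pick a different true literal for $c_b$. Then in $V_i$ one copy of $x_i$ is needed, while the other copy and $\bar{x}_i$ must both be absorbed. This is neither of the two consistent situations: the discussion before the lemma requires both copies of $x_i$ to be satellites once $\bar{x}_i$ is not needed, and for $t_i=2$ the same choice lands you in the ``none needed'' case. You yourself claim $V_i$ is perfectly coverable only in the consistent situations, so $V_i$ loses a vertex although no unsatisfied clause is charged to it, and you exceed $\ell$ uncovered vertices. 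The paper avoids this by making \emph{every} true literal's clause vertex a satellite of the private-centered stars, which is allowed since these are $[2,\infty)$-stars. Your parenthetical ``a specific set'' hints at this, but it is exactly the point that must be pinned down. Relatedly, your sentence that ``the labeled vertices of the false literal are released to act as satellites for the clause gadgets'' has the polarity backwards relative to your clause-side rule. The consistent reading is that $V_i$ covers its true-polarity labeled vertices internally and absorbs the clause vertices of the false polarity. Only under that reading does ``release a vertex of the wrong polarity as well'' make sense.

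In the converse, setting a deficient variable ``arbitrarily'' fails. If $t_i=3$ and you set $x_i$ false, the two clauses whose needed vertex is a positive copy of $x_i$ can both end up unsatisfied, charging two clauses to one gadget. Your alternative, ``the clause whose demand of the opposite polarity caused the deficiency'', is not well defined, since both polarities cause the deficiency. The missing idea is the asymmetry that every variable occurs exactly once negatively. The paper uses it by setting each deficient $x_i$ \emph{true}. Then a clause can be unsatisfied only if all its needed labeled vertices are negative literals $\bar{x}_i$ of deficient gadgets, and each gadget has exactly one such vertex, so the charging is injective. You also cannot invoke ``the variables of unsatisfied clauses are distinct'' in this direction: that is a property of canonical assignments, and here the assignment is what you are constructing. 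What the charging argument actually yields is at most $\ell$ unsatisfied clauses, since the clause containing $\bar{x}_i$ may still be satisfied by another literal. That inequality is what Theorem~\ref{thm05} needs, and Lemma~\ref{lemma03} can be applied afterwards if you want a canonical assignment.
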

\begin{proof}
Given a canonical truth assignment, for each clause $c_j = x_{j,1} \lor x_{j,2} \lor x_{j,3}$,
if a literal $x_{j,z}$ is true then the vertex in $U_j$ labeled by the literal is a satellite of a star center at the private vertex in $U_j$.
This gives rise to an initial $[2, \infty)$-star packing $\mcP_0$ in which all the vertices in satisfied clause gadgets and vertex gadgets covered,
but exactly $2 \ell$ private vertices, two in each unsatisfied clause gadget, uncovered.
Next, for each unsatisfied clause $c_j$, take an arbitrary literal and swap to cover the vertex labeled by the literal inside $U_j$;
by Lemma~\ref{lemma03} this process will cover all the vertices in $U_j$ but leave exactly one private vertex in the corresponding variable gadget uncovered.
Repeating this process, we achieve a packing $\mcP$ which leaves exactly $\ell$ vertices uncovered.

Conversely, given a canonical $[2, \infty)$-star packing with $\ell$ vertices uncovered, these $\ell$ vertices are in $\ell$ distinct variable gadgets;
and for each such gadget $V_i$, the corresponding vertices labeled by $x_i$ and $\bar{x}_i$ in the clause gadgets are needed as satellites.
We assign the variable $x_i$ true.
For each variable gadget $V_i$ containing no uncovered vertex, if the corresponding vertices labeled by $x_i$ in the clause gadgets are covered as satellites, then set $x_i$ true,
or otherwise set $x_i$ false.
This truth assignment makes exactly $\ell$ clauses unsatisfied, and thus proves the lemma.
\end{proof}

\begin{theorem}
\label{thm05}
It is NP-hard to approximate the {\sc $[2, \infty)$-Star Packing} problem within $\frac {2077}{2076} - \epsilon$, for any $\epsilon > 0$.
\end{theorem}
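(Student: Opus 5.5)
The plan is a gap-preserving reduction from {\sc Max $3$CNF-SAT-$3$}. I would use the gap version from~\cite{EK06}: for any $0 < \delta < \frac 1{124}$ it is NP-hard to distinguish instances with an assignment satisfying at least $(1-\delta)m$ clauses from instances in which every assignment satisfies at most $(\frac{61}{62}+\delta)m$ clauses. Given such an instance, build the subcubic graph $G$ described above. Let $N = |V(G)|$; we have $N < 33.5m$ and $G$ is constructed in polynomial time. The whole argument then comes down to relating the minimum number of uncovered vertices in a $[2,\infty)$-star packing of $G$ to the minimum number of unsatisfied clauses.

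\emph{Completeness.} Suppose some assignment leaves at most $\delta m$ clauses unsatisfied. By Lemma~\ref{lemma03} we may assume it is canonical without increasing this number. Lemma~\ref{lemma04} then yields a canonical packing leaving at most $\delta m$ vertices uncovered, so $\mathrm{OPT}(G) \ge N - \delta m$.

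\emph{Soundness.} This is the main obstacle. Take an arbitrary packing $\mcP$ of $G$. I would turn it into a canonical packing covering at least as many vertices, using the local exchange arguments sketched before Lemma~\ref{lemma04}, carried out gadget by gadget:
\begin{itemize}
\item In each clause gadget, recenter any star whose center is a labeled vertex onto the private vertices.
\item Force the stars centered at private vertices of $U_j$ to take at least one labeled vertex as a satellite; otherwise two private vertices are lost, and taking a labeled vertex costs at most one vertex elsewhere.
\item In each variable gadget, choose the local packing according to which of $x_i,\bar x_i$ are demanded as satellites. The rule is that exactly one vertex is lost only if both polarities are demanded, and none is lost otherwise.
\end{itemize}
Each step must be checked never to decrease coverage. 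The delicate part is verifying, from the concrete gadgets in \Cref{fig08} and \Cref{fig09}, that when both polarities are demanded at least one vertex of $V_i$ is necessarily uncovered, and that the cross edges between gadgets allow no cleverer use. Granted this, Lemma~\ref{lemma04} (its converse direction) turns the canonical packing into an assignment whose number of unsatisfied clauses equals the number of uncovered vertices. In the NO case that number is at least $(\frac 1{62}-\delta)m$, so $\mathrm{OPT}(G) \le N - (\frac1{62}-\delta)m$.

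\emph{Ratio calculation.} Any algorithm with ratio better than
\[
\frac{N-\delta m}{N-(\frac 1{62}-\delta)m}
\]
would distinguish the two cases. This quantity is decreasing in $N$, so I substitute the upper bound $N < 33.5m$, which follows from $8m + 17n$ and $n < 1.5m$. As $\delta \to 0$ the bound tends to
\[
\frac{33.5}{33.5-\frac1{62}} = \frac{2077}{2076},
\]
since $33.5\cdot 62 = 2077$. Choosing $\delta$ small enough relative to $\epsilon$ gives NP-hardness of approximating within $\frac{2077}{2076}-\epsilon$.
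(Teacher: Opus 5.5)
Your proposal is correct and follows essentially the same route as the paper. It uses the same reduction from the gap version of {\sc Max $3$CNF-SAT-$3$}, Lemmas~\ref{lemma03} and~\ref{lemma04} to identify unsatisfied clauses with uncovered vertices, and the order bound $|V(G)| \le 33.5m$ to obtain $33.5/(33.5-\frac 1{62}) = \frac{2077}{2076}$. The paper compresses all of this into a single appeal to Lemma~\ref{lemma04}. You instead make explicit three things: the completeness direction, the canonicalization of an arbitrary packing needed for soundness (which the paper, like you, only sketches before Lemma~\ref{lemma04}), and the monotonicity in $N$ that justifies substituting the upper bound on $|V(G)|$.
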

\begin{proof}
Recall that it is NP-hard to distinguish between {\sc Max $3$CNF-SAT-$3$} instances with $m$ clauses
that there is a truth assignment satisfying at least $(1 - \delta) m$ clauses or every truth assignment satisfies at most $(\frac {61}{62} + \delta) m$ clauses,
for any $0 < \delta < \frac 1{124}$~\cite{EK06}.
By \cref{lemma04}, this gap of $\frac 1{62} m$ is maintained on graphs of order $33.5 m$,
and thus it is NP-hard to approximate within $33.5 / (33.5 - \frac 1{62}) = \frac {2077}{2076} \approx 1.000481$.
This completes the proof.
\end{proof}

\section{Conclusions}
%==================================================================================================
In this paper, we studied four NP-hard sequential variants of the star packing problem from the approximation algorithm perspective,
where the number of satellites in a candidate star is in a given interval $[k, \ell]$, where $2 \le k < \ell$ and $\ell$ is a fixed integer or infinity.
We contributed four improved or the first approximation algorithms for them, respectively.
Namely, for {\sc $[k, \infty)$-Star Packing} where $k \ge 3$, we presented a $\frac {k+1}2$-approximation algorithm,
improving the previous best $\frac {(k+1)^2}{2k+1}$-approximation algorithm by Hu et al.~\cite{HZC25,HZC26};
for {\sc $[2, \infty)$-Star Packing}, we designed a $\frac 43$-approximation algorithm, improving the previous best ratio of $\frac 32$.
The {\sc $[k, \ell]$-Star Packing} problem, where $2 \le k < \ell$, is first studied by us,
and when $k = 2$, we proposed the first $(1 + \frac \ell{\ell+1})$-approximation algorithm;
when $k \ge 3$, a slight extension leads to a $(1 + \max\left\{\frac {k-1}2, \frac {(k+1) \ell}{3 (\ell+1)}\right\})$-approximation algorithm.

The main algorithm design and analysis techniques are local search coupled with amortized analysis;
additionally, inspired by various augmenting structures observed in the past work of Edmonds~\cite{Edm65b} and Hell and Kirkpatrick~\cite{HK84,HK86},
we explored augmenting configurations that help bridge two distinct local structures for applying a local improvement operation.
It would be interesting to see other algorithm design techniques than local search to design further better approximation algorithms,
or other global packing structures that can be constructed in polynomial time to ensure the packing quality.

On the inapproximability, we proved a lower bound of $\frac {2077}{2076}$ for the last remaining case of {\sc $[2, \infty)$-Star Packing}.
For {\sc $[k, \infty)$-Star Packing} where $k \ge 3$, we can achieve better lower bounds than the previous one of $\frac {12k + 19}{12k + 18.994}$,
but all these bounds are far away from the state-of-the-art approximation ratios.
It thus requires continuous efforts in narrowing down the gaps for each of these sequential star packing problems.

%==================================================================================================
%%
%% Bibliography
%%
%% Please use bibtex, 

%\bibliography{../BiBTeX/setcover,../BiBTeX/mypapers,../BiBTeX/general} %,references

%==================================================================================================
\end{document}